\documentclass{article}
\input{tikzit.tikzdefs}

\title{\textbf{Causal and Compositional Abstraction}}
\date{}
\author{Robin Lorenz  \ \ \ \  Sean Tull\\[0.04cm]
{\small  \{\nolinkurl{robin.lorenz, sean.tull}\}\nolinkurl{@quantinuum.com}} \\[0.03cm]
{\small Quantinuum, London, United Kingdom}}

\begin{document}

\maketitle

\begin{abstract}
\rlb{Abstracting from a low level to a \emph{more explanatory high level} of description, and ideally in a way that is in keeping with the respective \emph{causal structures} of the variables, is arguably fundamental to scientific practice, to causal inference problems, and to robust, efficient and interpretable AI.  
We present a general account of abstractions between low and high level models as \emph{natural transformations}, focusing on the case of causal models. 
This provides a new formalisation of \emph{causal abstraction}, unifying several notions in the literature, including constructive causal abstraction, Q-$\tau$ consistency, abstractions based on interchange interventions, and `distributed' causal abstractions.  
Our approach is formalised in terms of \emph{category theory}, and uses the general notion of a \emph{compositional model} with a given set of queries and semantics in a monoidal, cd- or Markov category; causal models and their queries such as interventions being special cases. 
We identify two basic notions of abstraction each described by natural transformations: \emph{downward abstractions}, applying to abstract diagrammatic queries, mapping them from high to low level; and \emph{upward abstractions}, applying to concrete queries such as specific Do-interventions, mapping them from low to high level. Although usually presented as the latter kind, we show how common notions of causal abstraction may, more fundamentally, be understood in terms of the former kind. 
In either case our categorical approach helps bring the structural essence of (causal) abstraction to the forefront. 
It also leads us to consider a new stronger notion of `component-level' abstraction, applying not only to queries but individual components of a model. In particular, this yields a novel, strengthened form of \emph{constructive causal  abstraction at the mechanism-level}, for which we prove characterisation results. 
Finally, by generalising beyond the causal case we show that abstraction can be defined for further compositional models, including those with a \emph{quantum semantics} implemented by quantum circuits, and we take first steps in exploring abstractions between quantum compositional circuit models and high-level classical causal models as a means to developing explainable quantum AI.} 
\end{abstract}

\section{Introduction} \label{sec:intro}

Abstracting away details in order to reason in terms of explanatory high-level concepts is a hallmark of both human cognition and scientific practice; from microscopic vs thermodynamic descriptions of a gas, to extracting macrovariables from climate data or medical scans, all the way to economics. In each of these cases, and most if not all situations of scientific practice, issues of abstraction intersect with those of \emph{causality} and causal reasoning. 

This intuition has been made precise through notions of \emph{causal abstraction}, pioneered over the past decade in a number of works \cite{ChalupkaEtAl_2017_CausalFeatureLearning, rubenstein2017causal, BeckersEtAl_2019_AbstractingCausalModles, beckers2020approximate, massidda2023causal, geiger2023causal}. These begin with the causal model framework \cite{pearl2009causality}, which formalises causal models and reasoning at a given level of abstraction. A `high-level' causal model $\modelMH$ then stands in a causal abstraction relation to a `low-level' one $\modelML$, when there is an abstraction map $\tau$ from low to high-level variables and a relation between interventions at each level such that intervening at the high-level is causally \emph{consistent} with respect to interventions at the low-level. 

Causal abstraction is receiving increasing attention for both foundational and practical reasons in fields ranging from causal inference to AI and philosophy. Despite the varying assumptions, goals and forms of causal model, in each scenario the same basic relation is needed. There are at least three distinct contexts that motivate its study. Firstly, a generalisation of \emph{causal identifiability problems} in which the level of our causal hypotheses and quantities \rlb{of interest}\ differs from the level at which we have empirical data. In this case the low-level causal model is (an approximation of) a ground-truth model which we neither have access to, nor usually care about. 
Examples include questions varying from weather phenomena \rlb{to human brains \cite{ChalupkaEtAl_2016_MultiLevelCauseEffectSystems, ChalupkaEtAl_2017_CausalFeatureLearning, xia2024neural, huang2025causal}. Relatedly, abstraction is also key for a causal study of racial discrimination to avoid the `loss of modularity' from wrongly mixing relata of different levels of abstraction \cite{mosse2025modeling}.}

Second is the study of \emph{interpretability and explainability} in artificial intelligence and deep learning. Similarly, here the aim is to relate a complex and obscure low-level model, such as a neural network, to a human-interpretable high-level model. Now, however, the low-level model is perfectly known and not considered (an approximation of) a ground-truth model, but a computational model trained for some task which may or may not be explicitly causal; yet the abstraction relation to a manifestly interpretable, high-level causal model can provide a gold standard kind of interpretability. This has been pursued for post-hoc XAI methods by a number of works \cite{geiger-etal-2020-neural, GeigerEtAl_2021_CausalAbstracttoinOfNN, hu2022neuron, Beckers_2022_CausalExplanationsAndXAI, WuEtAl_2023_InterpretabilityatScale, geiger2024finding, geiger2023causal, wu2024reply, pislar2025combining}, with \cite{geiger2023causal} in fact arguing that any method in \emph{mechanistic interpretability} is best understood from a causal abstraction perspective.  
Abstraction has also been used for not-post-hoc XAI, as an inductive bias to obtain models which are interpretable by design \cite{GeigerEtAl_2022_InducingCausalStructureForInterpretableNN}. 
\rlc{Moreover, it has been argued to provide an account of computational explanation more generally \cite{geiger2025_CausalAbstarctionUnderpinsCOmpExplaantion}.} 

Third is the problem of \emph{causal representation learning}, which aims to obtain AI models that have learned appropriate causal variables and relations directly from `raw', low-level data, without (all of) the information as to the appropriate dimensions or number of such variables. This generalises both causal discovery and other forms of (disentangled) representation learning, being argued to be essential for robustness and strong and efficient generalisabilty in AI \cite{scholkopf2021toward}.  An increasing number of works have explored questions of its theoretical and practical (im-)possibility, often relying on causal abstraction as a key ingredient \cite{brehmer2022weakly, yao2023multi, kekic2023targeted,  yao2024unifying, kori2025unifying, li2025identifiability, zhu2024unsupervised}. Relatedly, it has been shown that embodied AI relies on veridical world models that have learned causal structure at the right level of abstraction \cite{gupta2024essential, richens2024robust, richens2025general}.

\paragraph{A language for abstraction.}

Despite forming a basic and essential notion for science, there is currently no unified formalism for causal abstraction, or any consensus as to which of a suite of closely related notions of abstraction are appropriate in which situation. For both future theory and experiments in abstraction, an appropriate conceptual language is needed.

Now, causal reasoning is all about the interplay between syntactic causal structure and semantic probabilistic data, while abstraction is about how a model at one level of detail relates to another. \emph{Category theory} is the mathematics of processes, structure and composition \cite{coecke2006introducing,leinster2014basic}; precisely the language that brings structural aspects and their relations to the forefront so they can be studied directly. Indeed categories are natural for describing both the structure of a model, through \emph{monoidal categories} and the intuitive graphical language of \emph{string diagrams} \cite{piedeleu2023introduction,lorenz2023causal,tull2024towards}, as well as relationships between models, through the classic notions of \emph{functors} and \emph{natural transformations} \cite{mac2013categories}.

A presentation of the causal model framework in categorical terms, using string diagrams, was given in \cite{lorenz2021causal}, building on \cite{jacobs2019causal,fritz2023d}.  In \cite{tull2024towards} this approach was extended to more general \emph{compositional models}, specified by functors from structure to semantics, showing that the structure and interpretability of a wide range of AI models (including causal models) can be expressed in these terms.

In this work we extend this approach to give a precise and general formalisation of causal abstraction, finding that the essence of abstraction can be expressed more generally at the level of compositional models with given sets of queries. More precisely, we show that abstraction can be formalised categorically in terms of natural transformations between models:

\vspace*{0.15cm}
\begin{center}
	\fbox{ \vspace*{0.18cm} Abstractions are natural transformations between queries.
	\vspace*{0.15cm} }
\end{center}
\vspace*{0.15cm}
In more detail, our contributions are the following: 
\begin{itemize}

	\item \emph{Abstraction abstractly}:  
	We present a general definition of abstraction between compositional models with given sets of queries and show that this includes as special cases many notions of causal abstraction from the literature such as constructive abstraction \cite{BeckersEtAl_2019_AbstractingCausalModles}, exact transformations \cite{rubenstein2017causal}, `$Q-\tau$-consistency' \cite{xia2024neural}, and `distributed' abstractions \cite{geiger2023causal,geiger2024finding}. 

	\item  \emph{The power of diagrams}: We show that the consistency conditions for each form of (causal) abstraction can be expressed straightforwardly as string diagrams, bringing their essence to the forefront.

	\item  \emph{Up and down}: We find that there are in fact two \rlb{basic kinds of abstractions}, one mapping queries from low to high level, and one from high to low; we find that while usually presented in the former `concrete' kind, the essence of many paradigmatic notions of causal abstraction lies in the more structural latter kind. 

	\item \emph{General interventions}: We provide a unified treatment of causal abstraction which allows for more general interventions, not necessarily restricted to Do-interventions, and which allows us to give a new precise notion of distributed causal abstractions (in keeping with e.g. \cite{massidda2023causal, geiger2023causal}).

	\item \emph{A new, strong notion}: We introduce \emph{component-level} abstractions, which are natural from the categorical perspective, and in the causal case yield a new stronger notion of \emph{mechanism-level} constructive abstraction which raises new questions and which we characterise mathematically.\footnote{\st{The latter appeared implicitly in \cite{englberger2025causal}; see the related work and Appendix \ref{sec:Engl-comparison}.}}  

	\item \emph{Quantum generalisation}: Extending abstraction to compositional models allows us to introduce a  definition of abstraction for quantum compositional models, such as those based on quantum circuits, and we use this to initiate the study of quantum abstraction for interpretable quantum AI.   
  
\end{itemize}

\paragraph{Related work.}
\rlc{Specific notions of causal abstraction, not using a category theoretic perspective, have}\ been defined formally in a number of works, with our view mostly based on \cite{rubenstein2017causal,BeckersEtAl_2019_AbstractingCausalModles,geiger2023causal,xia2024neural}, from which we focus on the latter \rlc{three}. \rlc{(Also see \cite{zennaro2022abstraction, zhang2024causal} for reviews.)} 

Our approach builds on the account of structured (AI) models as compositional models described by string diagrams, which underlies for example, DisCo-models in (Q)NLP \cite{coecke2010mathematical,coecke2021mathematics,wang2023distilling} and the `discopy' and `Lambeq' python toolkits \cite{de2020discopy,kartsaklis2021lambeq}. The account of causal models in this framework, building on \cite{jacobs2019causal,fritz2023d}, was spelled out in detail in \cite{lorenz2023causal}. A broad view of AI models as compositional models and the advantages of \emph{interpretable} compositional structure for XAI was given in \cite{tull2024towards} (see also \cite{de2022categorical}). Here we extend this approach, beyond describing models as functors, we now see how describing relations between models as natural transformations indeed lead to meaningful notions of abstraction which have arisen independently in the literature. 

\st{The idea that causal abstractions should form natural transformations between causal models seen as Markov functors was also captured independently in the work of Englberger and Dhami \cite{englberger2025causal}, which we recommend for further reading. Their definition, after a necessary fix, in fact corresponds to the special case of mechanism-level constructive causal abstraction, as we outline in detail in Appendix \ref{sec:Engl-comparison}. Our approach based on queries allows the difference between constructive abstraction and its mechanism-level version to be made apparent, and combined with our notion of upward abstraction, allows us to unify a wider range of further abstractions including counterfactual abstractions, exact transformations, a fully formalised notion of distributed abstractions, and quantum generalisations.}

In \cite{otsuka2022equivalence} the authors \st{also} studied a more restricted form of causal abstraction using natural transformations, however most suitable for `equivalent' models rather than genuine abstraction. Our approach is strictly more general and captures known forms of causal abstraction. 
 
\rl{Also related is the work in \cite{rischel2021compositional}, which studies the composition of abstraction relations in the language of enriched categories and especially how bounds on errors behave under composition.} 

\rlc{Finally, the set of works \cite{DAcunto_2025causal, DAcunto2025relativity, DAcunto2025causal_semantic_embedding, DAcunto2026learning}, amongst other things, also formalises abstraction between a certain class of SCMs in terms of natural transformations. However their formalisation of causal models as functors is different from ours, i.e.~that of \cite{lorenz2023causal,jacobs2019causal,fritz2023d}, as is the notion of abstraction they capture, namely $\alpha$-abstraction \cite{rischel2020category}. In future work it would be interesting to more fully understand the connections to this approach.}

\subsection{Overview} \label{subsec:overview}

Let us now give a high-level overview of our account of abstraction. Recall that a symmetric monoidal category $\catC$ consists of a collection of objects and processes between them, which we can compose using graphical string diagrams. For causal models, we use the category $\FStoch$ of finite sets and probability channels (Stochastic matrices) between them, which forms a \emph{Markov category} with distinguished `copy' and `discard' processes \cite{fritz2020synthetic}. \maybe{In Section \ref{sec:setup} we introduce this necessary background on category theory and string diagrams.}

\paragraph{Models and queries.} A \emph{compositional model} $\modelM$ assigns semantics in $\catC$ to a collection of \emph{variables} and \emph{components}, where each component is an abstract process coming with a specified list of input and output variables. Each variable $\syn{V}$ is represented as an object $V=\modelMgen{\syn{V}}$ and each component $\syn{c}$ as a process $c = \modelMgen{c}$ in $\catC$ \rlb{(note the change of font going from syntax to semantics)}. For a causal model, $V$ is the finite set of values of variable $\syn{V}$ and each $c=c_V$ is the probability channel for a causal \emph{mechanism} $P(V \mid \Pa(V))$. 
Formally, the variables and components form a \emph{signature} $\SigS$ generating a free category $\strucS$, and the model a functor $\modelMgen{-} \colon \strucS \to \catC$.

Beyond these, a model also gives semantics to a collection of \emph{queries}, describing how we use it in practice. Each query is mapped to a process $Q=\semSM{\syn{Q}}$ in $\catC$, typically given by a diagram of components. Formally these form another signature $\SigQ$, \rlb{whose objects we call types, and the model defines a}\ functor $\semMfunc \colon \strucQ \to \catC$. 

For a causal model $\modelM$ with input variables $\Vin$ and output variables $\Vout$ there are various standard examples of queries. A basic example is the query $\modelMio = P(\Vout \mid \Vin)$ for its overall probability channel from inputs to outputs. Another is the query $P(\Vout \mid \Vin, \Do(Z))$, for a yet-to-be specified Do-intervention on variable $Z$, with $Z$ as an extra input, for which we introduce a graphical convention shown right-hand below. 

For a model $\modelM$ with input $\syn{X}$, outputs $\syn{Y,Z}$ and causal relations $\syn{X \to Y, Z}$, and $\syn{Z \to Y}$, the queries $P(Y, Z \mid X)$ and $P(Y, Z \mid X, \Do(Z))$ are represented by the 
following diagrams respectively, read from bottom to top. 
\[
% \semM{P(Y,Z \mid X)} \quad = \quad 
\input{CM-ex-2a.tikz}
\qquad \qquad \qquad \qquad 
% \semM{P(Y \mid X, \Do(Z))} \quad = \quad 
\input{CM-ex-3a.tikz}
\]
\maybe{In Section \ref{sec:comp-models} we define compositional models and queries, and spell out causal models as a special case.}

\paragraph{Abstraction.} Given a `low-level' model $\modelML$ and a `high-level' model $\modelMH$, an abstraction relation specifies, firstly, for each (list of) high-level type(s) $\syn{X}$ a list of low-level types $\HtoL(\syn{X})$ and a `surjective' (epic) morphism $\LtoH_{\syn{X}} \colon \HtoL(X) \to X$, which intuitively maps low-level values to high-level values. Secondly, it relates each high-level query $\syn{Q}_H$ to a collection of low-level queries $\syn{Q}_L$, such that the following holds in $\catC$. 
\begin{equation} \label{eq:abstr-fund-intro}
 \input{abstraction-fundamental.tikz}
%\tikzfig{abstraction-fundamental-nodots} %Optional-nodots
%\rlb{\tikzfig{abstraction-fundamental-nodots_LH}}
\end{equation}
The above \emph{consistency} equation is the fundamental rule of abstraction. It specifies that, up to the abstraction \rlb{maps}, the models' behaviours coincide, in that we can implement each high-level query on a low-level input either directly after $\LtoH$, or by first applying a corresponding low-level query and then $\LtoH$. A key observation is that, formally, \eqref{eq:abstr-fund-intro} precisely tells us that the maps $\LtoH$ amount to a natural transformation between functors representing the queries. There are two main types of abstraction, depending on the type of relation $(\syn{Q}_H, \syn{Q}_L)$. 

\begin{enumerate}
\item 
A \emph{\qdownfull{}} has a map $\HtoL \colon \SigQH \to \SigQL$ from high to low-level queries $\syn{Q}_L = \HtoL(\syn{Q}_H)$.
\item 
An \emph{\qupfull{}} has a partial map $\LtoHquery \colon \SigQL \to \SigQH$ from low to high-level queries $\synQ_H = \LtoHquery(\syn{Q}_L)$. 
\end{enumerate}
\maybe{In Section \ref{sec:abstraction} we give the full definitions of abstractions between models.}

A key example are \emph{constructive causal abstractions}, which are \qdownfull{}s between \rlc{causal models with the queries of the kind}\  $P(\Vout \mid \Vin, \Do(S))$, establishing consistency between Do-interventions at the low-level 
\rlb{$\Do(\HtoL(S))$ and high-level $\Do(S)$}.
Using our convention above, the consistency condition becomes: 
\[ 
% \tikzfig{causal-abstr-intro}
% \tikzfig{CA-intro-expl}
\input{CA-intro-expl-a.tikz}
\] 
For queries given by specific Do-interventions $P(\Vout \mid \Vin, \Do(X=x))$, there are typically many low-level queries which could achieve a given high-level one. Abstractions based on such concrete interventions, Do-style or more general, are instead described by \qupfull{}s, and referred to in the literature as `exact transformations'. In general, for any `abstract' \qdownfull{} at the structural level there is a corresponding notion of `concrete' \qupfull{}, we express this formally in Proposition \ref{prop:Exact-trans-from-ref}; \rlb{indeed, many known notions of causal abstraction can be captured in both forms.}

\maybe{
{\rlb{Section \ref{sec:causal-abstraction} discusses causal abstraction in detail}}; it goes through key notions from the literature, observes how these are special cases of the definitions in {\rlb{Section \ref{sec:abstraction}}} and further develops the theory of causal abstraction. 
These include exact transformations (in \ref{subsec:exact-intv}), constructive abstraction (in \ref{subsec:CCA}), \restrictedCCA{} (in \ref{subsec:weak-CA}), \CF{} abstraction (in \ref{subsec:CF-abstraction}) and a formal account of the  `distributed' abstractions treated implicitly in \cite{geiger2023causal} (in \ref{sec:distributed-abstraction}). 
}

\paragraph{Component-level abstraction.}

The categorical perspective suggests considering another stronger form of abstraction, acting not only on queries but \rlb{on}\ components directly.  By a \emph{\strucdown{}} we mean a \qdownfull{} equipped with a further functor $\HtoLS \colon \strucSH \to \strucSL$ sending each high-level component (such as a causal mechanism) $\syn{c}$ to a diagram of low-level components $\HtoLS(\syn{c})$, along with a natural transformation $\LtoH$ between functors at this component level, which respects queries in a suitable sense. 

In the simplest case of a \emph{strict} \strucdown{} we have $\LtoH=\id{}$ and so $c = \HtoLS(c)$ in $\catC$, simply writing each high-level component as an equivalent diagram of low-level components. Causal examples include refining a causal Bayesian network to a structural causal model (SCM), or implementing each \rlb{mechanism}\ directly as a neural network:  
\[
\input{cX.tikz}
\qquad \quad 
\begin{tikzpicture}
	\begin{pgfonlayer}{nodelayer}
		\node [style=none] (0) at (-0.75, 0) {};
		\node [style=none] (1) at (0.75, 0) {};
		\node [style=none] (2) at (0, 1) {$\HtoLS$};
	\end{pgfonlayer}
	\begin{pgfonlayer}{edgelayer}
		\draw [style=maptostyle] (0.center) to (1.center);
	\end{pgfonlayer}
\end{tikzpicture}

% \mapsto
\qquad \quad 
% \tikzfig{DX2}
% , 
% \tikzfig{cX}
% \qquad 
% \overset{\HtoLS}\mapsto
% \tikzfig{piS}
% \qquad 
\input{DX4.tikz}
\]

More generally, for a (not necessarily strict) \strucdown{} the original component $c$ and diagram $\HtoLS(c)$
must satisfy a naturality condition $c \circ \LtoH = \LtoH \circ \HtoLS(c)$ in $\catC$. We call a constructive causal abstraction which extends to a \strucdown{} in this way a \emph{\strongCCA{}}. We believe this to be a new notion and show in Theorem \ref{Thm:strongCCAnew} that it corresponds to certain conditions on the partition $\HtoL$ of a constructive abstraction. In future work it would be interesting to understand the significance of \strongCCA{}s and how often they apply to \rlb{typical examples of abstractions.} 

\maybe{Section \ref{sec:structure-refinements} presents the new notion of \strucdownfull{} for compositional models {\rlb{and in particular discusses the case of \rlc{\strongCCA{}}\ in Section \ref{sec:strong-constructive-abstraction}}}.}

\paragraph{Quantum abstraction.}
Defining abstraction at the level of compositional models allows it to be applied beyond `classical' causal models, and in particular to compositional models with a \emph{quantum} semantics. 
\rlb{We can now consider}\
the case where our low-level compositional model $\modelML$ has \rlb{an input-output process given}\ by a quantum circuit such as in the simple example below, where thin wires denote classical inputs and outputs and thick wires denote quantum systems such as qubits. Here $E$ is an encoder from classical inputs to qubits, $U, V$ are unitary gates, and the upper boxes denote measurements.
\[
\input{qcirc4.tikz}
\]
It is also possible to generalise both (abstract) Do-interventions, and weaker `interchange' interventions to quantum models defined by such circuits. We can then consider abstraction relations between a quantum circuit model $\modelML$ and classical causal model $\modelMH$ based on any such queries. Such abstractions may allow us to give classical (causal) \emph{explanations} of or \emph{interpretations} to aspects of quantum circuits, similarly to the use of abstraction in classical XAI; for example in quantum AI contexts where circuits arise as `black-boxes' from training. 

\maybe{
In Section \ref{sec:quantum-abstraction} we study abstraction for quantum compositional models, motivated by explainability in quantum AI. 
We close with a discussion of future directions in Section \ref{sec:discussion}. 
Some supplementary material on the theory and some of the proofs can be found in the appendix.}

\section{Categories and string diagrams} \label{sec:setup}

We will make use of category theory and its associated graphical language of \emph{string diagrams} \cite{coecke2006introducing,piedeleu2023introduction,selinger2011survey}. More specifically we will throughout work with a \emph{symmetric monoidal category (SMC)} $(\catC, \otimes)$ \rl{\cite{coecke2006introducing}}. 

Recall that a category consists of a collection of objects $\A, \B, \dots$ and morphisms or processes $f \colon \A \to \B$ between them. In string diagrams an object $\A$ is depicted as a wire and a morphism as a box, read from bottom to top. 
\[
\begin{tikzpicture}
	\begin{pgfonlayer}{nodelayer}
		\node [style=map] (0) at (0, 0) {$f$};
		\node [style=none] (1) at (0, 1) {};
		\node [style=none] (2) at (0, -1) {};
		\node [style=label] (3) at (0, 1.5) {$\B$};
		\node [style=label] (4) at (0, -1.5) {$\A$};
	\end{pgfonlayer}
	\begin{pgfonlayer}{edgelayer}
		\draw (2.center) to (1.center);
	\end{pgfonlayer}
\end{tikzpicture}
 
\]
For any morphisms $f \colon \A \to \B$ and $g \colon \B \to \C$ we can form their sequential composite $g \circ f \colon \A \to \C$, depicted as below. The identity morphism $\id{\A}$ on $\A$, with $\id{\B} \circ f = f  = f \circ \id{\A}$ for any $f \colon \A \to \B$ is depicted as a blank wire.
\[
\input{composite-1.tikz}
\qquad  \qquad \qquad 
\input{identity.tikz}
\]
In a monoidal category, for any pair of objects $\A, \B$ we can form their \emph{monoidal product} $\A \otimes \B$, with $\id{\A \otimes \B} = \id{\A} \otimes \id{\B}$ depicted by drawing wires side-by-side as shown left-hand below. More generally, for morphisms $f \colon \A \to \C$ and $g \colon \B \to \D$ we can form their monoidal product $f \otimes g \colon \A \otimes \B \to \C \otimes \D$, depicted center below. 
In general morphisms can have multiple inputs and/or outputs in diagrams, formally meaning their input is given as such a product. For example we depict a morphism $f \colon \A_1 \otimes \dots \otimes \A_n \to \B_1 \otimes \dots \otimes \B_m$ as shown right-hand below. 
\[
\input{tensor-ob.tikz}
\qquad \qquad \qquad 
\input{tensor.tikz}
\qquad \qquad \qquad 
\input{fAB.tikz}
\]
Since our category is symmetric we also have morphisms $\tinyswap$ allowing us to `swap' pairs of wires over each other, such that $\tinyswap \circ \tinyswap = \id{}$, and boxes can `slide' along the swaps:  
\[
% \tikzfig{doubleswap} 
% \qquad \qquad \qquad \qquad
\input{symmetry.tikz}
\]
Recall that there is also always a \emph{unit object} $I$, depicted simply as `empty space', where taking a monoidal product with $I$ simply leaves any object invariant.\footnote{Formally this is expressed via `coherence isomorphisms' $\A \otimes I \simeq \A \simeq I \otimes \A$.} We write $1 := \id{I}$. A noteworthy class of morphisms are \emph{states} of $\A$, which are morphisms $\omega \colon I \to \A$,  appearing with `no input'. 
\[
\begin{tikzpicture}
	\begin{pgfonlayer}{nodelayer}
		\node [style=map] (0) at (0, 0) {$\omega$};
		\node [style=none] (1) at (0, 1.25) {};
		\node [style=label] (2) at (0, 1.75) {$\A$};
	\end{pgfonlayer}
	\begin{pgfonlayer}{edgelayer}
		\draw (1.center) to (0);
	\end{pgfonlayer}
\end{tikzpicture}
 
\]

The composition in a category satisfies numerous other axioms which are self-evident in string diagrams; for example associativity of composition $(h \circ g) \circ f = h \circ g \circ f = h \circ (g \circ f)$ is enforced as all three composites amount to drawing $f, g, h$ in sequence.

\begin{definition}[Terminal category]\cite{coecke2008axiomatic,coecke2014terminality} 
An SMC $\catC$ is \emph{terminal} when $I$ is a terminal object. That is, for every object $A$ there is a unique morphism $A \to I$, which we call \emph{discarding} $\discard{A}$ and depict with a ground symbol:
\[
\begin{tikzpicture}
	\begin{pgfonlayer}{nodelayer}
		\node [style=upground] (0) at (4.5, 0.25) {};
		\node [style=none] (1) at (4.5, -1) {};
		\node [style=label] (3) at (4.5, -1.5) {$A$};
	\end{pgfonlayer}
	\begin{pgfonlayer}{edgelayer}
		\draw (1.center) to (0);
	\end{pgfonlayer}
\end{tikzpicture}

\]
% $\discard{A} \colon \A \to I$ with a ground symbol and call it \emph{discarding}. 
% $I$ is a terminal object. That is, there is a unique morphism $\discard{A} \colon \A \to I$ on each object $A$, depicted with a ground symbol and called \emph{discarding}. 
% \[
% \tikzfig{discarding}
% \]
We call a morphism obtained by discarding any output(s) of a morphism $f$ a \emph{marginal} of $f$. 
\end{definition}

%Now, the categories we consider here will typically come with further structure allowing us to `ignore' or `throw away' objects \cite{coecke2008axiomatic,coecke2014terminality}. 

% \begin{definition} \label{def:d-category}
% By a  \emph{discard-category (d-category)} we mean a SMC in which every object $A$ comes with a chosen \emph{discarding} morphism $\discard{\A} \colon \A \to I$, depicted with a `ground' symbol, 
% such that that the following hold: 
% \begin{equation} \label{eq:disc-nat}
% \tikzfig{disc-nat} \qquad \qquad  \qquad \tikzfig{disc-I}
% \end{equation}
% We call a morphism obtained by discarding any output(s) of a morphism $f$ a \emph{marginal} of $f$. 
%  A morphism $f$ is called a \emph{channel}\footnote{Channels are also known as `causal' \cite{coecke2014terminality} or `total' \cite{cho2015introduction} morphisms.} when it preserves discarding: 
% \[
% \tikzfig{causal2}
% \] 
% We call a d-category \emph{terminal} when every morphism is a channel. Equivalently, $\discard{\A}$ is the unique morphism $\A \to I$ for each object $\A$, making $I$ a \emph{terminal object}. 
% \end{definition}

We will also call morphisms in a terminal category \emph{channels}. 

Our main example categories in this article will be terminal. The final piece of extra structure we will use, present for `classical' \rl{(as opposed to quantum)}\ processes, such as those of causal models, is the following.

% \begin{definition}
% \cite{cho2019disintegration}\label{def:cd_category}
% A \emph{cd-category} (\emph{copy-discard category}) is a discard-category in which every object $A$ comes with a distinguished \emph{copying} morphism $\tinycopy \colon \A \to \A \otimes \A$ 
% satisfying the following:
% \[
% \tikzfig{markov-axioms}
% \] 
% \begin{equation} \label{eq:copy-nat-rules}
% \tikzfig{copy-nat}
%  % \qquad \qquad \tikzfig{disc-nat} \qquad \qquad \tikzfig{disc-I}
% \end{equation}
% A \emph{Markov category} is a terminal cd-category \cite{fritz2020synthetic}. In any cd-category we call a morphism $f \colon \A \to \B $ \emph{deterministic} when it satisfies the following. 
% \begin{equation} \label{eq:deterministic}
% \tikzfig{deterministic}
% \end{equation}
% In particular, a state $\x$ of $\A$ is then deterministic precisely when it is copied by $\tinycopy$; we also call such states \emph{sharp}. 
% \begin{equation} \label{eq:copy-points}
% 	\tikzfig{copy-points}
% \end{equation} 
% \end{definition}

\begin{definition}[Markov category]
\cite{fritz2020synthetic} \label{def:Markov_category}
% \cite{cho2019disintegration}
A \emph{Markov category} is a terminal SMC $\catC$ in which every object $A$ comes with a chosen \emph{copying} morphism $\tinycopy \colon \A \to \A \otimes \A$ satisfying:
\[
%\tikzfig{markov-axioms}
\input{markov-2.tikz}
\] 
% \begin{equation} \label{eq:copy-nat-rules}
% \tikzfig{copy-nat}
%  % \qquad \qquad \tikzfig{disc-nat} \qquad \qquad \tikzfig{disc-I}
% \end{equation}
%A \emph{Markov category}is a terminal SMC in which every object $A$ comes with a distinguished \emph{copying} morphism $\tinycopy \colon \A \to \A \otimes \A$
%  depicted as: 
% \[
% \tikzfig{just-copy}
% \]
% and s
%satisfying the following:
% \[
% \tikzfig{markov-axioms-alt}
% \] 
%More generally, a \emph{cd-category} is an SMC with discarding and such copy morphisms. 
%A \emph{Markov category} is a terminal cd-category. 
% and that: 
% Formally, this says that copying and discarding form a `commutative comonoid'. The copying morphisms are moreover required to be `natural' in that the following holds for all objects $A, B$.
% \begin{equation} \label{eq:copy-nat-rules}
% \tikzfig{copy-nat}
%  % \qquad \qquad \tikzfig{disc-nat} \qquad \qquad \tikzfig{disc-I}
% \end{equation}
% precisely when it is copied by $\tinycopy$; we also call such states \emph{sharp}. 
% % We also call deterministic states $x$ \emph{sharp}, satisfying the following.
% % In particular every state $x$ is deterministic, also called \emph{sharp}, meaning it is copied by the copy morphism:
% \begin{equation} \label{eq:copy-points}
% 	\tikzfig{copy-points}
% \end{equation}
% A terminal cd-category in which every morphism is a channel is called a \emph{Markov category} 
\end{definition}
In such a category we call a morphism $f \colon \A \to \B $ \emph{deterministic} when it satisfies the following. 
\begin{equation} \label{eq:deterministic}
\input{deterministic.tikz}
\end{equation}
In particular, a state $\x$ of $\A$ is deterministic or \emph{sharp} precisely when it is copied by $\tinycopy$:
\begin{equation} \label{eq:copy-points}
	\input{copy-points.tikz}
\end{equation}

For causal models, the only category required in this article is the following.

\begin{example}[$\FStoch$]
In the Markov category $\FStoch$ the objects are finite sets $X, Y, \dots$. A morphism $f \colon X \to Y$ is a probability channel from $X$ to $Y$, specifying a probability $f(y \mid x) \in [0,1]$ for each $x \in X$ and $y \in Y$, such that: 
\[
\sum_{y \in Y} f(y \mid x) = 1
\] 
for all $x \in X$. For any $f \colon X \to Y$ and $g \colon Y \to Z$, the sequential composition $g \circ f \colon X \to Z$ is given by \rl{matrix composition:}
\[
(g \circ f) (z \mid x) := \sum_{y \in Y} g(z \mid y) f (y \mid x)
\]
The identity channel on $X$ is given by $\id{}(y \mid x) = \delta_{x,y}$. The monoidal product is given on objects by $X \otimes Y = X \times Y$, and on channels $f \colon X \to W$ and $g \colon Y \to Z$ by 
\[
(f \otimes g)(w, z \mid x, y) = f(w \mid x) g(z \mid y)
\]
corresponding to two independent probability channels. Here $I=\{\star\}$ a singleton, so that states $\omega$ on $X$ correspond to distributions over $X$ via $\omega(x) := \omega(x \mid \star)$.   

The copy morphism on $X$ is $\tinymultflip[whitedot](y,z \mid x) = \delta_{x,y,z}$.  A state is sharp precisely when it is a point distribution $\delta_x$ for some $x \in X$. We denote such a state simply by $x$, so that sharp states $x$ of $X$ correspond to elements of $X$. More generally, deterministic morphisms $f \colon X \to Y$ correspond to functions $f \colon X \to Y$ via $f \circ x = f(x)$. Discarding $\discard{X}$ on $X$ corresponds to the unique function $X \to I$.

Many notions from probability theory can be described in terms of string diagrams in $\FStoch$ and more general Markov categories \rl{\cite{fritz2020synthetic}}.\footnote{When carrying out such probabilistic reasoning it can also be useful to work in the broader category $\MatR$ of positive real matrices, with $\FStoch$ being the subcategory of channels therein, see for example \cite{lorenz2023causal}; here the latter category suffices.} In particular, a morphism as below is precisely a conditional probability distribution $P(Y_1,\dots,Y_m \mid X_1,\dots, X_n)$:  
\[
\input{condprob.tikz}
\] 
while composition with discarding amounts to marginalisation in the usual sense: 
\[
\input{FStoch-marginal_v2.tikz}
=
\begin{minipage}{9.8cm} \centering $\sum\limits_{ \begin{minipage}{1.5cm} \centering \tiny $y_k,...y_m \in$ \\[0.05cm] $Y_k\times ... \times Y_m$  \end{minipage}  }$ {\small $P(Y_1,\dots,Y_{k-1} , Y_k=y_k,\dots,Y_m=y_m\mid X_1,\dots, X_n)$} \end{minipage}
\]
\end{example}

% \paragraph{Convention.}
% In what follows, categories are always taken to be of some fixed class, either that of terminal SMCs or Markov categories.

% Recall that a functor $F \colon \catC \to \catD$ consists of a mapping on objects $\A \mapsto F(\A)$ and morphisms $(f \colon \A \to \B) \mapsto F(f) \colon F(\A) \to F(\B)$, respecting composition and identities. Here we will always mean one which is \emph{strong symmetric monoidal}, preserving $I, \otimes$ and $\tinyswap$, and in the Markov case also preserving copy morphisms and with deterministic structure isomorphisms. 

% A \emph{natural transformation} $\alpha \colon F \implies G$ is given by a family of morphisms $\alpha_\A \colon F(\A) \to G(\A)$ such that \ $\alpha_\B \circ F(f) = G(f) \circ \alpha_\A$ for all $f \colon \A \to \B$. We will always mean one which is \emph{monoidal}, with $\alpha_{A \otimes B} = \alpha_A \otimes \alpha_B$, and each $\alpha_A$ deterministic in the Markov case.

\paragraph{Functors and natural transformations.} 
In what follows, categories are always taken to be of some fixed class, either that of terminal SMCs or Markov categories. Recall that a \emph{functor} $F \colon \catC \to \catD$  between categories $\catC, \catD$ is given by a mapping on objects $\A \mapsto F(\A)$ and morphisms $(f \colon \A \to \B) \mapsto F(f) \colon F(\A) \to F(\B)$ from $\catC$ to $\catD$, respecting composition and identities. Here we will always mean one which is \emph{strong symmetric monoidal}, meaning that it preserves $I, \otimes$ and $\tinyswap$. In the Markov case we also always mean that functors preserve copy morphisms and have deterministic structure isomorphisms.%$F$ is \emph{(strong) symmetric monoidal} when it preserves $I, \otimes$ and $\tinyswap$. By a \emph{(c)d-functor} between (c)d-categories we mean a strong symmetric monoidal functor which moreover preserves discard morphisms (and copy morphisms).
\footnote{In more detail, functors satisfy $F(g \circ f) = F(g) \circ F(f)$ and $F(\id{}) = \id{}$. Strong monoidal functors comes with an isomorphism $F(I) \simeq I$ and natural isomorphisms $F(\A \otimes \B) \simeq F(\A) \otimes F(\B)$ satisfying coherence conditions. A Markov functor is such that the equations \eqref{eq:functor-rules} with $\otimes$ and $\tinycopy$ hold up to structure isomorphisms.
%in for all objects $\A$, up to its structure isomorphisms, 
%$F(\discard{\A}) = \discard{F(\A)}$, and
 %$F(\tinycopy_\A) = \tinycopy_{F(\A)}$ in the cd-case, for all objects $A$.
  In diagrams we omit to draw the structure isomorphisms.} 
\begin{equation} \label{eq:functor-rules}
\input{Ffgsimpler.tikz}
\qquad \qquad
\input{Fftensg.tikz}
\qquad \qquad 
%F\left( \tikzfig{tinycopy} \right) = \tikzfig{tinycopy}
F\left( \input{tinycopy_withobject.tikz} \right) = \input{tinycopy_F_object.tikz}
%\qquad
%F\left( \tikzfig{tinydisc} \right) = \tikzfig{tinydisc}
%F\left( \tikzfig{tinydisc_with_object} \right) = \tikzfig{tinydisc_F_object}
\end{equation}

Given functors $F, G \colon \catC \to \catD$ a \emph{natural transformation} 
$\alpha \colon F \implies G$ is given by a family of morphisms $\alpha_\A \colon F(\A) \to G(\A)$, in diagrams all written simply as $\alpha$ due to their wire labels, such that for all $f \colon A \to B$ we have: 
\begin{equation} \label{eq:naturality}
% \tikzfig{alpha} 
% \qquad \qquad 
% \text{such that}
% \qquad \qquad 
\input{naturality.tikz}
\end{equation}
We will always implicitly mean natural transformations which are \emph{monoidal} meaning that:
%\emph{monoidal}, with $\alpha_{A \otimes B} = \alpha_A \otimes \alpha_B$, and each $\alpha_A$ deterministic in the Markov case.
%A \emph{monoidal} natural transformation between strong monoidal functors is one that respects $\otimes$, in that:
\footnote{And also $I$ in that $\alpha_I \colon F(I) \to G(I)$ is the composite of the structure isomorphisms for $F, G$.}
\[
% \tikzfig{emptydiagnat} \qquad \qquad \qquad 
% \qquad \qquad 
\input{monnat.tikz} 
\]
and with each $\alpha_A$ deterministic in the Markov case. %Finally, a \emph{(c)d-natural transformation} between (c)d-functors is such that furthermore each component $\alpha_\A$ is a (deterministic) channel. 

\begin{remark}
Intuitively, for each kind of category the corresponding kind of functor $F \colon \catC \to \catD$ maps any diagram in $\catC$ to one of the same shape in $\catD$, preserving $\discard{}$ and $\tinycopy$ when they are present.
\[
\input{functor-map.tikz}
\]
A natural transformation $\alpha$ has the necessary properties allowing us to  `pull' it through any such diagrams of the form $F(f)$, $G(f)$, as below. 
\[
% \tikzfig{nat-rewrite-long}
\input{nat-rewrite-short.tikz}
\]
\end{remark}

\section{Compositional models and queries} \label{sec:comp-models}

We can now define the general forms of models we will consider in this article. Firstly, we begin by specifying their abstract structure, as follows.

\begin{definition}[Signature]
A \emph{(monoidal) signature} \rl{$\Gensig$}\ consists of:
\begin{itemize}
\item 
a set \rl{$\Gensigob$}\ of abstract `objects' $\syn{A,B,C,\dots}$;
\item 
a set \rl{$\Gensigmor$}\ of abstract `morphisms' $\syn{f} \colon (\syn{A_i})^n_{i=1} \to (\syn{B_j})^m_{j=1}$, where each morphism is given with a lists of input \rl{$(\syn{A_i})^n_{i=1}$ and output $(\syn{B_j})^m_{j=1}$}\ objects from \rlc{$\Gensigob$}. 
\end{itemize}
\end{definition} 

We fix some class of monoidal categories and associated string diagrams, either that of terminal SMCs or Markov categories.\footnote{In Section \ref{sec:strong-constructive-abstraction} we also consider cd- and Cartesian categories.} Any signature $\Gensig$ then freely 
generates a \rl{category $\Gensigcat$ of that kind}: objects are lists from \rl{$\Gensigob$}\ and morphisms $\diagD \colon (\syn{A_i})^n_{i=1} \to (\syn{B_j})^m_{j=1}$ are \rl{string diagrams of the chosen kind}\ built from \rl{$\Gensigmor$}\ with inputs $(\syn{A_i})^n_{i=1}$ and outputs $(\syn{B_j})^m_{j=1}$.\footnote{Composition is composing diagrams `on the page', the empty list is the unit and each identity is the diagram of plain wires. A signature can also include a set \rl{$\Gensigeq$}\ of \emph{equations} $\diagD_1 = \diagD_2$  where $\diagD_1, \diagD_2$ are string diagrams built from the generators, but we will not focus on this here. Diagrams are equivalent when they can be re-written \rlb{into each other}\ using $\Gensigeq$. Models should preserve any equations in $\Gensigeq$.} 
We consider two diagrams $\diagD, \diagD'$ equal as morphisms if one can be rewritten to the other in finitely many steps using the rules of such diagrams.

We can now define what it means to model such structure in a category. The following notion takes centre stage in this work.

\begin{definition}[\rl{Model}]
A \rl{\emph{model} $\modelM$ of a signature $\Gensig$ in a \emph{semantics}}\ category $\catC$ consists of a functor: 
\[
\modelMgenfunc \colon \rl{\Gensigcat} \to \catC
\]
Equivalently, this consists in specifying an object $A := \modelMgen{\syn{A}}$ of $\catC$ for each $\syn{A} \in \rl{\Gensigob}$ and a morphism $f := \modelMgen{{\syn{f}}}$ in $\catC$ for each $\syn{f}$ in \rl{$\Gensigmor$}, of the appropriate type as below.
%\footnote{Moreover the functor should preserve any equations in \rl{$\Gensigeq$}.} 
\begin{equation} \label{eq:generator-map} 
\input{gen-map-2.tikz}
\end{equation}
\end{definition} 

In other words, a \rl{model}\ chooses specific objects and morphisms in $\catC$ corresponding to the `abstract' ones in \rl{$\Gensig$}. As above we typically omit to write $\modelMgenfunc$ and use distinct fonts to distinguish any object $\syn{A}$ or morphism $\syn{f}$ in $\Gensigcat$ from its semantics $A, f$ in $\catC$.

\rlb{The}\ models we consider here will in fact typically come with \emph{two} distinct, \rlb{but related, signatures. Although}\ both simply cases of the above definition, it is helpful to introduce distinct terminology for them.

\begin{definition}[Compositional Model]
By a \emph{compositional model} $\modelM$ we mean a signature $\SigS$ along with a 
\rl{model of $\SigS$}\ in a \emph{semantics} category $\catC$:
\[
\rl{\semcompM{\modelM}{-}} \colon \strucS \to \catC
\] 
where we call $\SigS$ the \emph{structure} of the model, the elements of $\Sigob$ the \emph{variables}, and those of $\Sigmor$ the \rl{\emph{components}}\ of the model. 
\end{definition} 

Though simply terminology, referring to a model as a compositional model is to convey that its signature $\Sig$ captures \emph{compositional structure}, i.e.~the basic underlying components, mechanisms or building-block processes of interest to the phenomenon being modelled. Typically there \rlb{also are}\ further processes of interest constructed from these, which the aim of the model is to compute; for example specific questions for which the model is to provide the answers. We will refer to these as `queries', organised into another signature as follows. 

\begin{definition}[Model of Queries]
By a \rlb{\emph{signature of queries}}\ we simply mean a signature $\SigQ$ whose morphisms we refer to as \emph{queries}. We call (lists of) objects in $\SigQ$ \emph{query types}. By a \emph{model of queries} we simply \rlb{mean}\ a model $\modelM$ of $\SigQ$ in $\catC$, with \rlc{its}\ functor denoted:
\[
\semMfunc_{\modelM} \colon \strucQ \to \catC
\]
By a \emph{compositional model of queries} $(\modelM, \SigS, \SigQ, \catC)$ we simply mean a compositional model $\semcompM{\modelM}{-}$ of structure $\SigS$ in $\catC$ along with a model $\semMfunc_{\modelM}$ of queries $\SigQ$ in $\catC$. 
\end{definition}

Thus again, each query specifies an abstract input-output process $\syn{Q}$, which a model $\modelM$ gives semantics $Q=\semM{\syn{Q}}_\modelM$ denoted by changing font. 
\[
\input{gen-map-queries.tikz}
\]

\rlb{Typically, we will work with a compositional model of queries $(\modelM, \SigS, \SigQ, \catC)$, rather than just a model of queries. There may of course be many compositional models of the same \rlb{queries}\ $\SigQ$, but with distinct structures $\SigS$. Moreover, for any such tuple $(\modelM, \SigS, \SigQ, \catC)$}, 
the two functors are usually not arbitrary but closely related; a particularly strong link between \rlc{them}\ is when they factorise as follows.

\begin{definition}[Abstract Queries] \label{def:abstract-query}
For any \rl{compositional model of queries $(\modelM,\SigS, \SigQ, \catC)$, we call the queries $\SigQ$}\ \emph{abstract} when they come with a \rl{functor $\qabs{-} \colon \strucQ \to \strucS$}, sending each query to a diagram in $\strucS$, such that the following commutes.  
\[
\input{abstract-query-4.tikz}
\] 
\end{definition}

\rlb{Given a compositional model of queries $(\modelM, \SigS, \SigQ, \catC)$, even}\ 
when queries are not technically abstract, the computation of each query $\semM{Q}_{\modelM}$ will typically make use of the structure $\strucS$ and semantics $\semcompM{\modelM}{-}$ of the compositional model. Finally, when in context the particular model $\modelM$ is clear we may drop the subscripts and just write $\semcompM{}{-}$ and $\semMfunc_{}$ for the semantics functors of structure and queries, respectively.

\subsection{Causal models} \label{subsec:causal-models}

Let us now illustrate our set-up via our prime examples of compositional models and queries; those of causal models. \rl{The following presentation will closely follow that of \cite{lorenz2023causal} (and \cite{jacobs2019causal}).}\  

By an \emph{open DAG} \rl{$\dagG=(\dagG,\Vin)$ we mean a DAG $\dagG$ over vertices $\VG$ along with subset $\Vin \subseteq \VG$ of \emph{input} vertices}\ such that each input has no parents in $\dagG$. We then define a signature $\SigS_\dagG$ containing an object, \st{i.e.~variable}, \rlb{$\syn{X}$ for each vertex $\syn{X} \in \VG$ and for each non-input vertex $X \in \Vnin := \VG \setminus \Vin$}\ a generator called the \rl{(causal)}\ \emph{mechanism} for $\syn{X}$: 
\[
\input{causal-generator.tikz}
\]
where $\Pa(\syn{X})$ denotes the parents of $\syn{X}$ in $G$. We write $\strucS_G$ for the free Markov category generated by $\SigS_G$, i.e.~consisting of all diagrams we can build from these generators along with copy and discard maps.

\begin{definition} \label{def:causal-model}
Let $\catC$ be a Markov category and $G$ an open DAG. An \emph{(open) causal model} $\modelM$ of $G$ in $\catC$ is given by a compositional model of $\SigS_G$ in $\catC$, along with a specification of 
\rlb{output variables $\Vout \subseteq \VG$}\footnote{The output variables may optionally be thought of as the variables considered `observed'.}. 
\end{definition}

A causal model $\modelM$ of $G$ in $\FStoch$ amounts to specifying a finite set $X := \semM{\syn{X}}$ of values for each variable $\syn{X}$ as well as a probability channel $c_X \colon \Pa(X) \to X$ for each  
\rlb{non-input variable $X \in \Vnin$}, more conventionally denoted:\footnote{\rl{For $\catC = \FStoch$ we adopt the common notation of $P( - | - )$, but the mechanism $c_X $ really just is a stochastic channel, which does not have to arise as a conditional from a joint distribution; also see \cite{lorenz2023causal}.}} 
\[
P(X \mid \Pa(X))
\]
Hence such a model is precisely an \emph{(open) causal Bayesian network (CBN)} for the DAG $G$, the typical notion of causal model on finite sets. Here `open' refers to the fact that $G$ can come with inputs, for which no mechanism is specified.\footnote{\rl{Although the definition of \emph{open} causal model from \cite{lorenz2023causal} that has inputs is not standard in the literature, it is actually very natural seeing as, e.g., `unspecified' exogenous variables or do-interventions give rise to such open models (see below).}} 

The following will also be useful. By a set of \emph{input-output variables} $\syn{V}$ we mean a set of (abstract) variables given with subsets 
\st{$\syn{\Vin}, \syn{\Vout} \subseteq \syn{V}$, }\
of \emph{input} and \emph{output} variables, respectively. 
By a set of \emph{concrete variables} $\modelV=(\syn{V}, \sem{-})$ in $\catC$ we mean a set of input-output variables $\syn{V}$ along with a given semantics $X=\sem{\syn{X}}$ in $\catC$ for each $\syn{X} \in \syn{V}$.\footnote{\rl{The notation reflects that $\modelV$ is formally a model of $\syn{V}$ seen as a signature with no morphisms. Concrete variables $\modelV$ correspond to what is sometimes called a `signature' in the causal model literature.}} At times we write simply $V$ for $\modelV$. A \emph{causal model $\modelM$ over $\modelV$} is then one for some $\SigS_G$ with variables $\syn{V}$ and inputs $\syn{\Vin}$, outputs $\syn{\Vout}$, and where $X = \sem{\syn{X}}_{\modelM}$ for all $X \in \syn{V}$.

\begin{example} \label{ex:open-DAGs} 
Consider the open DAG $G$ over $\syn{V=\{A,B,C,D,E\}}$ with \rl{inputs $\syn{\Vin=\{B,C\}}$}, 
shown on the LHS below, \rlb{where we used arrows `without a source' to indicate the subset $\syn{\Vin}$}. 
\rlb{In conventional presentation, an open}\ CBN with this causal structure \rlb{specifies}\ (finite) sets $A,B,C,D,E$ along with the probability channels listed on the RHS, \rl{as well as a subset of variables considered outputs, e.g. \rlb{$\syn{\Vout = \{C,D,E\}}$.}} 
\[
%\tikzfig{open-DAG-rotate3} 
\input{open-DAG-rotate3_v2.tikz} 
\qquad 	\qquad 	
\begin{minipage}{5cm}
			\centering
			$P( E | B, C, D)$ \\[0.3cm]
			$P(D | A, B)$ \\[0.3cm]
			$P(A)$	
\end{minipage}
\] 
\rlb{In the language of compositional models, the}\ 
corresponding signature $\Sig_G$ consists of objects (variables) $\syn{A,B,C,D,E}$ and morphisms (mechanisms): 
\[
\input{ex-mechanisms.tikz}
\]
A causal model $\modelM$ of $G$ in a Markov category $\catC$ is given by a functor $\semMfunc$ sending each variable to \rlc{an object}\
$A=\semM{\syn{A}}$, B = \semM{\syn{B}}, \dots and each mechanism to its corresponding 
\rlb{channel, i.e. for $\catC = \FStoch$, for instance, $\semM{\syn{c_D}} = c_D =  P(D \mid A, B)$}.  
The free Markov category $\strucS_G$ generated by $\Sig_G$ contains as morphisms all Markov diagrams we can build from these components, such as: 
\[
\input{ex-diags.tikz}
\]
The functor $\semMfunc$ sends any such diagram $\diagD$ to a channel $\semM{\diagD}$ from its inputs to its outputs.
\end{example}

\subsection{Causal queries}

For a causal model there are various possible types of queries, corresponding to kinds of `causal questions' we can ask, usually based on interventions. These queries are `answered', i.e. modelled in $\catC$, by computing corresponding channels from the model's mechanisms. We will \rlb{now}\ introduce several examples of queries for causal models, starting with the most basic.

\subsubsection{Input-output queries}

Most simply, we can consider as a query the overall channel from inputs $\Vin$ to any subset $\syn{O \subseteq \Vout}$ of outputs. For a CBN this is the marginal $P(O \mid \Vinconc)$ on $O$ of the overall channel: 
\begin{align}
%\rl{P(\Ven \mid \Vin)} := \prod_{X \in \rl{\Ven}} P(X \mid \Pa(X))
%\rlb{P(\Vout \mid \Vin) \ := \ \Big( \prod_{X \in \Vout \setminus \Vin} P(X \mid \Pa(X)) \Big) \ \Big( \prod_{X \in \Vout \cap \Vin} \delta_{X,X} \Big) \ , }
\rlb{P(\VGconc \mid \Vinconc) \ := \ \Big( \prod_{X \in \Vnin} P(X \mid \Pa(X)) \Big) \ \Big( \prod_{Y \in \Vin} P(Y \mid Y) \Big) \ , } \label{eq:CBN-io-channel}
\end{align}
\rlb{where the second factor could be dropped, seeing as $P(Y \mid Y) = \delta_{Y,Y}$ but is used to emphasise the fact that we in general allow for $\Vout \cap \Vin \neq \emptyset$, in which case the corresponding inputs are `copied out' and regarded outputs, too. 
More generally, for}\ 
any causal model \rl{$\modelM$}\ over $\dagG$ and subset \rl{$O\subseteq \Vout$ of outputs}, $\SigS_G$ contains a diagram called the (normalised) \emph{network diagram} $\netdiag{\modelM} \colon \syn{\Vin} \to \syn{O}$, 
%\rl{\netdiag{G,O}}
given by wiring together the mechanisms of every variable $\syn{X}$ with a path to $\syn{O}$ in $\dagG$ using copy maps. Each $\syn{X} \in \Vin$ is an input to the diagram and each $\syn{X} \in O$ is copied an additional time to form an output to the diagram.\footnote{In contrast to~\cite{lorenz2023causal} here we use the normalised network diagram, where every \rl{non-input}\  variable has a path to an output. 
\rl{This in particular means that any input $X \in \Vin$ that in $G$ has no path to $O$, in $\netdiag{\modelM}$ 
is the input into a disconnected discard $\discard{X}$. 
Note that leaving $O$ implicit in writing $\netdiag{\modelM}$ to avoid clutter is harmless, because $O$ will always be explicit as type in diagrams.}}   
We denote the resulting channel $\modelMio := \semcompM{\modelM}{\netdiag{\modelM}}$ in $\catC$ as:
\begin{equation} \label{eq:query-map-explicit}
% \tikzfig{query-expl}
% \tikzfig{query-expl-simpler}
% \tikzfig{Mio2}
\input{Mio2-nodots.tikz} %Optional-nodots
\end{equation} 
For a CBN, $\modelMio$ is then precisely the probability channel $P(O \mid \Vin)$. See the below example and for full details \rl{\cite{jacobs2019causal, lorenz2023causal, fritz2023free}}.

Formally, we define a \rlb{signature of queries}\ $\ioqueries(\syn{V})$ with types $\syn{V}$ and a query $\syn{(\ioquery{O}{\Vin}) \colon \Vin \to O}$ for each subset $\syn{O \subseteq \Vout}$. Then $\modelM$ models these as abstract queries via  
$\qabs{V} = V$,  $\qabs{(\ioquery{O}{\Vin})} = \netdiag{\modelM}$,    
so that $\semM{(\ioquery{O}{\Vin})} = \eqref{eq:query-map-explicit}$.

\begin{example}
For the DAG $G$ and model $\modelM$ in Example \ref{ex:open-DAGs}, the network diagram 
for \rlb{$\syn{O}=\syn{\{E,C\}} \subset \syn{\Vout}$}\ is: 
\[
\input{netdiag.tikz} 
\qquad  := \qquad 
\input{net-diag-simpler.tikz}
\]
Hence the corresponding query $\syn{(\ioquery{O}{\Vin})} \colon \syn{\Vin} \to \syn{O}$ in $\ioqueries(\syn{V})$ 
is mapped to the equivalent composite $\modelMio$ of channels in $\catC$ (note the \rlb{change of font}\ representing passing to specific channels in $\catC$): 
\[
%\tikzfig{query-io-simplest}
\rlb{\input{query-io-simplest-added-marginalisation.tikz} \ ,}
\]
\rlb{where we have also shown how it arises as a marginal from $\modelMio$ for the full $\syn{(\ioquery{\Vout}{\Vin})}$}. 
For an open CBN in $\catC=\FStoch$, we have $\modelMio =  P(E,C \mid B ,C)$, and the above is equivalent to the formula \rlb{(compared with Eq.~\eqref{eq:CBN-io-channel}, dropping the factor $P(C \mid C)$)}: 
\[
%P(E, C \mid B, C) = \rl{\sum_{A,D} \ P(E \mid D,B,C) \ P(D \mid A,B) \ P(A)}
P(E, C \mid B, C) = \rl{\sum_{a\in A,d\in D} \ P(E \mid D=d,B,C) \ P(D=d \mid A=a,B) \ P(A=a)}
\]
\end{example}

\subsubsection{General interventions as queries} \label{subsub:general-int}

Given \rlb{any causal}\ model $\modelM$ in $\catC$, an \emph{intervention} $\inti$ on $\modelM$ is a specification of a new \rlb{causal}\ model $\transform{\inti}{\modelM}$ with the same variables, inputs and outputs, including their \rl{semantics}\ $V = \semM{\syn{V}}$ in $\catC$. We can think of the intervention as a map 
assigning new mechanisms $\{c'_{X}\}_{X \in A_{\inti}}$, to a subset of variables \st{$A_{\inti} \subseteq \Vint$}.\footnote{\rl{Seeing as this work focuses on acyclic causal models, the new mechanisms that specify an intervention on a given model $\modelM$ are unconstrained apart from that they mustn't introduce directed cycles in the causal structure.}} 
\begin{equation} \label{eq:intervention}
\input{intervention.tikz}
\end{equation}

Formally, 
given concrete variables $\modelV$, a set of interventions $\Intset$ can be seen as a query signature with types $\syn{V}$ and a query $\inti \colon \syn{\Vin} \to \syn{\Vout}$ for each  $\inti \in \Intset$ (corresponding to the data $\{c'_{X}\}_{X \in A_{\inti}}$). 
Any causal model $\modelM$ over $\modelV$ for which all $\sigma \in \Intset$ are well-defined interventions, that is do not introduce directed cycles, yields a model of $\sigfont{I}$ in $\catC$ via:  
\[ 
	% \tikzfig{inti-query-3_b}
		\input{inti-query-3_b-nodots.tikz}
\]
For later use we note that there is a second, only slightly different way in which a set of interventions $\Intset$ defines a corresponding query signature:\footnote{\rl{Which of the two is appropriate depends on the presence of a `variable alignment' in the abstraction, see Section~\ref{sec:causal-abstraction}.}} 
let $\Intsetio$ denote the signature with \emph{only two} types $\Invar, \Outvar$ and a query $\inti \colon \Invar \to \Outvar$ for each $\inti \in \Intset$, with $\semM{\Invar} = \Vinconc$, $\semM{\Outvar} = \Voutconc$. \

\subsubsection{Do-queries} 
 
The quintessential form of interventions are Do-interventions. Given concrete variables $\modelV$ 
in $\catC$, a  \emph{Do-intervention} $\Do(X=x)$ on 
\st{$\syn{X} \in \Vint$}\
is the special case of an intervention of the form:
\[
\input{Do-intervention.tikz}
\]
where $x$ is a sharp state of $X$. 
More generally, a Do-intervention $\Do(S=s)$ for \st{$\syn{S} \subseteq \Vint$}\ and 
a sharp state $s$ of $S$ 
replaces the mechanism of each $\syn{X \in S}$ with the corresponding factor of $s$ on $X$.  

Formalising Do-interventions as queries is straightforward. Given concrete variables $\modelV$, the query set $\Do(\modelV)$ has types $\syn{V}$ and a query $\Do(S=s) : \Vin \rightarrow O$ for every pair of subsets \rlb{$\syn{S \subseteq \Vint}, \syn{O \subseteq \Vout}$}\ and sharp state $s$ of $S$. A causal model $\modelM$ over $\modelV$ then gives a model of $\Do(\modelV)$ by
\[
\input{model-of-concrete-Do.tikz}
\]
For a CBN in $\catC=\FStoch$ the above channel is more commonly denoted $P(O \mid \Vin, \Do(S=s))$; see below for an example.\footnote{In the literature models tend not to have inputs and the common expression is then $P(O \mid \Do(S=s))$.}  In general one may consider query signatures $\Intset$ with the same types as $\Do(\modelV)$, but a strict subset of its queries, corresponding to the `sets of interventions' common in the literature \cite{BeckersEtAl_2019_AbstractingCausalModles, rubenstein2017causal}.

\paragraph{Abstract Do-queries.} Arguably, the essence of a Do-intervention is `purely syntactic' and independent from the specific value $s$ assigned to $S$. Motivated by this, given a causal model $\modelM$ and subset \st{$S \subseteq \Vint$}, we define $\opentransform{S}{\modelM}$ to be the causal model with the same variables (and outputs) given by simply deleting the mechanisms $\syn{c_X}$ for $X \in S$, making each $X \in S$ now an input.\footnote{Thus $\opentransform{S}{\modelM}$ is a causal model over the open DAG $\opentransform{S}{G}$ defined from $G$ by deleting all edges coming into $S$, and labelling all vertices in $S$ as inputs. $\Do(S)$ is a transition between causal models, but not an intervention in the above sense as it changes the set of input variables. See \cite{lorenz2023causal} for further details, therein called `opening, and \cite{richardson2023nested} for the related `fixing' operation.} 
As a graphical convention, we depict each resulting input-output diagram and channel in $\catC$ as follows.
%\footnote{\rlb{Note that our graphical convention for $\Do(S)$ just is that the wire corresponding to $S$ comes out of one of the two sides, that is, we allow it to come out of the left instead when this is more convenient for the diagram's overall typing such as in Ex.~\ref{ex:do-int}.}} 	
\begin{equation} \label{eq:opening-diagrams}
\input{opening-convention-D.tikz}
\qquad \qquad \qquad \qquad \qquad 
\input{opening-convention-two.tikz}
\end{equation}
For a CBN in $\catC=\FStoch$ the right-hand channel above is the probability channel commonly denoted: 
\begin{equation} \label{eq:P-Do}
P(O \mid \Vinconc, \Do(S))
\end{equation}

Formally, given input-output variables $\syn{V}$, we define the set of \emph{abstract Do-queries} $\Openqueries(\syn{V})$ with types $\syn{V}$ and a query $\syn{\openqueryshort{O}{\Vin}{S}} \colon \syn{\Vin \otimes S \to O}$ for each pair of subsets \st{$\syn{S} \subseteq \Vint, \syn{O} \subseteq \Vout$}. 
A causal model $\modelM$ over corresponding concrete variables $\modelV$ then gives a model of $\Openqueries(\syn{V})$ as abstract queries, where $\qabsMfunc$ and $\semMfunc$ send each such query $\syn{\openqueryshort{O}{\Vin}{S}}$ to the LHS and RHS of \eqref{eq:opening-diagrams}, respectively. 

Note that a Do-query $\Do(S=s)$, which we also call a \emph{concrete} Do-query, is then given by composing the abstract Do-query $\Do(S)$ with the sharp state $s$ as below.  
\[
\input{concrete-do_v2.tikz}
\]

\begin{example} \label{ex:do-int}
Continuing our example, the query ${\openqueryshort{O}{\Vin}{S}}$ for $\syn{S=\{D\}}$ \rlb{and}\ $\syn{O=\{E,C\}}$, has semantics: 
\[
%\tikzfig{open-ex-shorter}
\rlb{\input{open-ex-shorter-left-opening.tikz}}
% \tikzfig{open-ex}
% \qquad
% =
% \qquad 
% \tikzfig{open-ex-2}
\]
For any sharp state $d$ of $D$, the concrete Do-query $\Do(D=d)$ then has semantics: 
\[
% \tikzfig{do-expl}
%\tikzfig{do-expl-short}
\rlb{\input{do-expl-short-left-opening.tikz}}
\]
For a CBN these are the channels often denoted $P(E,C \mid B, C, \Do(D))$ and $P(E,C \mid B, C, \Do(D=d))$, respectively.  
\end{example}

\color{black}

\subsubsection{Interchange queries}

For causal models used in AI, such as neural networks,\footnote{Any neural network forms a causal model in a purely formal way; a separate question is whether these are `interesting' or `interpretable' causal variables.} 
it may be more natural to only consider interventions which set values to those actually arising from the model.  

Given input-output variables $\varV$, we define the query set of (abstract) \emph{interchange queries} $\WOpenqueries(\varV)$\ to have types $\varV$ and a query $\wopenquery{O}{S_1,\dots,S_n}$ from $n+1$ copies of $\syn{\Vin}$ to $\syn{O}$, for each subset 
$\syn{O \subseteq \Vout}$ and set of disjoint subsets \rlb{$\syn{S_1,\dots,S_n \subseteq \Vint}$}. 
Any causal model $\modelM$ with variables $\varV$ yields a model \rlb{of $\WOpenqueries(\varV)$}\ as abstract queries, with:  
\begin{equation} \label{eq:interchange-query}
% \tikzfig{weak-open-queries}
\input{weak-open-direct.tikz}
% \tikzfig{weak-open-queries-simpler}
\end{equation}
Just as for Do-queries there are also concrete forms of interchange queries, 
\rl{which in fact are special cases of ordinary Do-interventions $\Do(S=s)$.}\ 
Given disjoint subsets \rlb{$\syn{S_1,\dots,S_n \subseteq \Vint}$}\ and sharp states $x_1,\dots,x_n$ of $\Vin$, an \emph{interchange intervention} on $\modelM$ is an intervention of the form: 
\[
\II(S_j, x_j)\rlc{^n_{j=1}} := \Do(S_j = \rl{\io{\modelM}} \circ x_j)^n_{j=1}
\]

\rl{Given concrete variable $\modelV$, we}\ write $\II(\modelV)$ 
for the \rlb{signature of queries with types $\syn{V}$ and a query for each possible interchange intervention and set of outputs $\syn{O}$, also referred to as a \emph{concrete interchange query}.
A causal model $\modelM$ represents any such query by applying}\ an interchange query to a collection of sharp states $x_1, \dots, x_n$. 
\[
% \tikzfig{II-open} 
\input{II-open-2.tikz} 
\]

\section{Abstraction} \label{sec:abstraction}

We now introduce the central notion of abstractions between models. Fix a (c)d-category $\catC$ throughout and write $\SigModel{\SigQ}{\modelM}$ for a pair consisting of a query signature $\SigQ$ along with a model $\modelM$ in $\catC$, with functor $\sem{-}_\modelM \colon \strucQ \to \catC$.\footnote{While this is all the data explicitly required in this section, in practice we usually have a compositional model $(\modelM,\SigS, \SigQ, \catC)$ of further structure $\SigS$, as for all later examples in \rlb{Sections~\ref{sec:causal-abstraction} - \ref{sec:quantum-abstraction}}.} We will in fact meet two notions of abstraction, both relating a \emph{low-level} model $\SigModel{\sigQL}{\refdefML}$ to a \emph{high-level} model ${\SigModel{\sigQH}{\refdefMH}}$, and sharing the following basic ingredient. 
Recall that, in any category, a morphism $f$ is \emph{epic} (an \emph{epimorphism}) if whenever $g \circ f = h \circ f$ we have $g = h$; \rlb{this is}\ the categorical generalisation of surjective functions between sets.

\begin{definition}[\typedowncaps] 
A \rl{\emph{\typedownfull{}}}\ 
\rlb{$\typealign{\HtoL}{\LtoH}{\SigModel{\sigQL}{\refdefML}}{\SigModel{\sigQH}{\refdefMH}}$}\
is given by specifying for each type \rl{$\syn{X}$ of $\sigQH$ a list of types  $\HtoL(\syn{X})$}\ of $\sigQL$ along with an epic \rl{deterministic}\ channel in $\catC$: 
\begin{equation} \label{eq:tau-one}
\input{tau-pic.tikz}
\end{equation}
where \rl{$X = \semMH{\syn{X}}$, $\pi(X) := \semML{\pi(\syn{X})}$.} 
\end{definition}

Henceforth in diagrams we simply write $\LtoH$ for each $\LtoH_X$, distinguishing them by their wire labels. 
Intuitively, a \typedown{} constitutes data to go between a pair of models, by assigning a query type $\HtoL(\syn{X})$ of the low-level model $\modelML$ to each query type $\syn{X}$ of the high-level model $\modelMH$, and conversely a map $\LtoH$ from low-level states of the former to high-level states of the latter. 
We extend it to any list of types by defining \rl{$\HtoL(\syn{X_1},\dots,\syn{X_n})$ as the concatenation of the $\HtoL(\syn{X_j})$}\ and $\LtoH$ as the monoidal product:\footnote{We also require that every such \rlb{product}\ morphism \eqref{eq:tau-factors} is epic; this is automatic in categories such as $\FStoch$. \rl{Also note that while Eq.~\eqref{eq:tau-factors} does not require the $\HtoL(\syn{X_j})$ to be disjoint for different $j$, in our examples they actually will always be disjoint, though without implying a \emph{\varalign} for an underlying compositional model (see Sec.~\ref{sec:causal-abstraction}) since $\sigQL$ and $\sigQH$ may have only two types $\syn{in}_L, \syn{out}_L$ and $\syn{in}_H, \syn{out}_H$, respectively, and $\HtoL(\syn{in}_H) = \syn{in}_L$ and $\HtoL(\syn{out}_H) = \syn{out}_L$.}} 
\begin{equation} \label{eq:tau-factors}
\rl{\input{tau-factorises_b.tikz}}
\end{equation}

While automatic for epic deterministic morphisms in $\FStoch$, in general we say \rlc{the}\ $\LtoH$ are \emph{surjective} when every sharp state \rlc{of}\ any $X \in \varV_H$ is of the form $\LtoH \circ s$ for some sharp state $s$ of $\HtoL(X)$.

Now, in an abstraction we also wish to relate queries themselves at each level, and there are two main ways we may do so. In each case, we relate each high-level query $\QH \colon \syn{X} \to \syn{Y}$ to at least one low-level query $\QL \colon \HtoL(\syn{X}) \to \HtoL(\syn{Y})$, such that \emph{consistency} holds in $\catC$:
\begin{equation} \label{eq:query-relation}
%\tikzfig{consistency-intext}
\input{consistency-intext_tau-convention.tikz} %Optional-nodots
% \tikzfig{consistency-intext_tau-convention-nodots}
\end{equation}
where $Q_H = \semMH{\QH}$, $Q_L=\semML{\QL}$.

\paragraph{Downward abstractions.}
 The first such relation involves a mapping $\HtoL$ on queries from \emph{high to low} level, and comes with a straightforward categorical description. 

\begin{definition}[\qdowncaps] \label{def:query-ref}
A \emph{\rlc{\qdownshort{}}
%(\qdownfull{})
} $\qdownabs{\HtoL}{\LtoH}{\SigModel{\sigQL}{\refdefML}}{\SigModel{\sigQH}{\refdefMH}}$ 
is given by a functor $\HtoL$ and epic natural transformation $\LtoH$ of the form: 
\[
\input{qref.tikz}
\]
such that for each query $\syn{Q}$ of $\SigQH$, $\HtoL(\syn{Q})$ is a query of $\sigQL$.\footnote{\rl{that is, of the signature $\sigQL$, rather than \rlb{just}\ of the category $\strucQL$}} 
\end{definition} 

Unpacking the definition, we see that a \qdown{} is precisely a \typedown{} (given by the map $\HtoL$ on types and transformation $\LtoH$) along with a further map $\HtoL$ sending each high-level query $\QH \colon \syn{X} \to \syn{Y}$ to a low-level query $\QL := \HtoL(\QH) \colon \HtoL(\syn{X}) \to \HtoL(\syn{Y})$. Naturality of $\LtoH$ is none other than the requirement that consistency \eqref{eq:query-relation} holds for each query $\QH$ in $\sigQH$, where $\QL = \HtoL(\QH)$. That is, in $\catC$ we have: 
\[
% \tikzfig{query-ref-sem}
% \tikzfig{query-refinement-equation}
% \qquad \qquad 
\input{query-refinement-equation-alt.tikz}
% \tikzfig{query-refinement-equation-alt-nodots} %Optional-nodots
\]

\paragraph{Upward abstractions.} 
For many `concrete' queries there are however typically many low-level queries associated with a single high-level one (e.g. as for specific Do-interventions and the notion of exact transformations of causal models \rl{\cite{rubenstein2017causal, BeckersEtAl_2019_AbstractingCausalModles}; see Sec.~\ref{sec:causal-abstraction})}. To capture this we consider our second notion of abstraction, instead using a partial mapping $\LtoHquery$ of queries from the low to high level. Intuitively, $\LtoH$ then forms a natural transformation as below. 
\begin{equation} \label{eq:exact-partial}
\input{exact-nat-trans-2.tikz}
\end{equation}

More precisely, we define the following.

\begin{definition} [\qupcaps] \label{def:up-abstraction}
\rlc{An}\ \emph{\rlc{\qupshort{}} 
%(\qupfull)
} 
$\qupabs{\HtoL}{\LtoH}{\queryLtoH}{\SigModel{\sigQL}{\modelML}}{\SigModel{\sigQH}{\modelMH}}$
is given by a \typedown{} $(\HtoL,\LtoH)$ along with a partial surjective mapping $\queryLtoH \colon \SigQL \pto \SigQH$ on queries of the form: 
\begin{equation} \label{eq:ex-trans-map}
\input{omega-queries.tikz}
% \tikzfig{omega-queries-nodots} %Optional-nodots
\end{equation}
such that consistency \eqref{eq:query-relation} holds with $\QH = \queryLtoH(\QL)$.  
\end{definition}

Thus, in \rlc{an}\ \qup, whenever $\queryLtoH(\syn{Q}_L)$ is defined and has type $\syn{X} \to \syn{Y}$ then $\syn{Q}_L$ has type $\HtoL(\syn{X}) \to \HtoL(\syn{Y})$, and the following holds in $\catC$:  
\begin{equation} \label{eq:consistency-condition}
\input{Q-tau-consistency.tikz}
% \tikzfig{Q-tau-consistency-nodots} %Optional-nodots
\end{equation}

Again this tells us that \qup s are some kind of natural transformation as in the picture \eqref{eq:exact-partial}. We make this precise in Proposition \ref{prop:ex-trans} in Appendix \ref{app:up-abs}, by relating \qup s to \qdown s. 
 
A useful fact is that both notions of abstraction are compositional in the following sense.

\begin{proposition}  
Downward and upward abstractions are each closed under composition. That is, if:
\[
\input{abstr-closed.tikz}
\] 
are \qdown{}s (resp. \qup{}s), then so is: 
\[
\input{abstr-closed-2.tikz}
\]
\end{proposition}

As a result, type alignments are closed under composition also. 

\begin{proof}
	Straightforward to verify. 
\end{proof}

\paragraph{From downward to upward abstractions.} As we will see repeatedly in upcoming examples, \rlc{an}\ \qup{} often arises simply by restricting \rlb{the consistency conditions of}\ a \qdown{} to certain input states. 
\rl{In the next section we will meet several examples of such pairs of causal \qdown{} and \qup{} that are related through corresponding abstract and concrete versions of the same kind of queries, summarised in Figure~\ref{fig:overview-table}.}

These can be seen to arise from the following general result. Consider a model of queries $\SigModel{\SigQ}{\modelM}$ where each query $\syn{Q}$ comes with a chosen (set of) input(s) $\syn{Z_Q}$. 
%Suppose we also have a  choice $\Stchoice$ of a set  $\Stchoice(\syn{Q})$ of states of $\syn{Z_Q}$ in $\catC$, for each $\syn{Q}$. 
We define a \rlb{signature of queries}\ $\SigQ^{\Sh}$ with the same types as $\SigQ$ and a query $(\syn{Q},s)$ for each $\syn{Q} \in \SigQ$ and sharp state $s$ of $\syn{Z_Q}$, with 
% \rl{$s \in \Stchoice(Q)$}, with 
$\semM{(\syn{Q},s)}$ given in $\catC$ by:
\[
\input{Qssmall.tikz}
\]

\begin{proposition} \label{prop:Exact-trans-from-ref}
Let be
$\qdownabs{\HtoL}{\LtoH}{\SigModel{\SigQL}{\modelML}}{\SigModel{\SigQH}{\modelMH}}$ a \qdown{}, where $\sigQH$ 
is of the above form, with surjective $\LtoH$. We obtain \rlc{an}\ \qup{} $\qupabsshort{\SigModel{\SigQL^\Sh}{\modelML}}{\SigModel{\SigQH^\Sh}{\modelMH}}$ via: 
\[
\left(\input{QS1a.tikz} \ , \ \begin{tikzpicture}
	\begin{pgfonlayer}{nodelayer}
		\node [style=none] (0) at (0.5, 1.25) {};
		\node [style=none] (1) at (0.5, 0.25) {};
		\node [style=point] (2) at (0.5, 0) {$s$};
		\node [style=label] (3) at (0.5, 1.75) {\rl{$\HtoL(Z_Q)$}};
	\end{pgfonlayer}
	\begin{pgfonlayer}{edgelayer}
		\draw (0.center) to (1.center);
	\end{pgfonlayer}
\end{tikzpicture}
 \right)
\qquad
\input{omega.tikz}
\qquad
\left(\input{QS2a.tikz} \ , \ \begin{tikzpicture}
	\begin{pgfonlayer}{nodelayer}
		\node [style=none] (0) at (14.25, 1.25) {};
		\node [style=label] (1) at (14.25, 1.75) {\rl{$Z_Q$}};
		\node [style=point] (2) at (14.25, -1.5) {$s$};
		\node [style=map] (3) at (14.25, 0) {$\LtoH$};
	\end{pgfonlayer}
	\begin{pgfonlayer}{edgelayer}
		\draw (2) to (0.center);
	\end{pgfonlayer}
\end{tikzpicture}
 \right)
\]
with $\syn{Z}_{\HtoL(\syn{Q})} := \HtoL(\syn{Z_Q})$.
% and 
%$\LtoH(\Stchoice)(\HtoL(\syn{Q})) := \{ \LtoH \circ s \mid s \in \Stchoice({\HtoL(Z_Q)})\}$}, 
%for each $\syn{Q} \in \SigQH$.  
\end{proposition}

\begin{proof}
Surjectivity holds by construction and \rlb{consistency directly from naturality of the \qdown{}, applied to each state $s$: 
\begin{align} 
%	\tikzfig{consistency-up-from-down}
	\input{consistency-up-from-down-boxed-up.tikz} 
%	\tikzfig{consistency-up-down-small}
	\label{eq:proof-prop-up-from-down}
\end{align}
}
\end{proof}

\begin{figure}[H]
\centering
\rl{
\begin{tabular}{c|cc|c}
Level     & Abstract      & Concrete  
& \begin{minipage}{4.3cm} \centering abstraction type \end{minipage} \\[0.2cm]  \hline 
$\Ltwo$   & Do-query      
& \begin{minipage}{4cm} \centering \vspace*{0.3cm} Do-intervention  \\ {\footnotesize (concrete Do-query) } \end{minipage}   
& \begin{minipage}{4.3cm} \centering constructive abstraction  \end{minipage} \\[0.7cm] 
%----
$\Ltwo$   & interchange query  & 
\begin{minipage}{5cm} \centering interchange intervention \\ {\footnotesize (concrete interchange query ) } \end{minipage} 
& \begin{minipage}{4.3cm} \centering \restrictedCCA  \end{minipage} \\[0.5cm]  
%----
$\Lthree$ & \CF{} query 
& \begin{minipage}{5.5cm} \centering concrete \CF{} query  \end{minipage}    
& \begin{minipage}{4.3cm} \centering \CF{} abstraction \end{minipage}  
\end{tabular}
}
\caption{\rl{Overview of several forms of causal abstraction from Section~\ref{sec:causal-abstraction}; $\Ltwo$ and $\Lthree$ denote the second (interventional) and third (counterfactual) level of the causal hierarchy (see, e.g., \cite{pearl2009causality, bareinboim2022pearl}).}\label{fig:overview-table}} 
\end{figure}

\section{Causal abstraction} \label{sec:causal-abstraction}

In this section we will now see how many notions of causal abstraction from the literature form special cases of our definitions, i.e.~natural transformations between \rlb{models of}\ queries. Throughout, we consider a pair of \emph{causal} models $\modelML$ and $\modelMH$ in $\catC$ with respective variables $\syn{V^{\text{in}}_{L/H}},\syn{V^{\text{out}}_{L/H}} \subseteq \syn{V_{L/H}}$.

In each case, we will first define the abstraction directly in a form closest to the literature, before then seeing how it falls as a special case of our definitions for a particular choice of causal queries from Section \ref{subsec:causal-models}. In several cases, the following form of 
\rlb{a mapping}\ will be present.

\begin{definition}[\rl{\varaligncaps}]
\cite[Def. 3.19]{BeckersEtAl_2019_AbstractingCausalModles} 
\cite[Def 31]{geiger2023causal}  
A \emph{\varalign} $\dvaralign{\HtoL}{\LtoH}{\modelVL}{\modelVH}$ 
between concrete variables in $\catC$ is given by:
\begin{itemize}
\item 
\rlc{A collection of disjoint subsets $(\HtoL(\syn{X}))_{\syn{\rlc{X \in \VH}}}$ of $\VL$, 
satisfying $\HtoL(\VinH) = \VinL$ and $\HtoL(\VoutH) \subseteq \VoutL$;}
\item 
For each $\rl{\syn{X}} \in \varV_H$ a deterministic epic channel in $\catC$:  
\[
%\tikzfig{tau-V}
\input{tau-X.tikz}
\] 
\end{itemize}
\end{definition}

Thus \rl{\varalign s}~are similar to \typedown s, with the variables as types and now the property that the \rl{$\HtoL(\syn{X})$}\ form disjoint subsets. Again in $\catC=\FStoch$ the conditions mean that each map is a surjective function \rl{$\LtoH \colon \HtoL(X) \to X$}. 

\begin{remark} \label{rem:var-alignment-ref} 
For any concrete variables $\modelV$, write $\varV^*$ for the set of abstract queries with types $\varV$ and a query $\synQ_\syn{S} \colon \syn{S} \to \syn{S}$ for each subset $\syn{S} \subseteq \varV$, with $\qabs{\synQ_\syn{S}}_\modelV = \id{\syn{S}}$. 
Then a \rl{\varalign\ $\dvaralign{\HtoL}{\LtoH}{\modelVL}{\modelVH}$ }\ is in fact a \qdown{} 
$\qdownabsshort{\SigModel{\varV^*_L}{\modelVL}}{\SigModel{\varV^*_H}{\modelVH}}$ satisfying $\HtoL(\VinH) = \VinL$ and $\HtoL(\VoutH) \subseteq \VoutL$; for a proof see Appendix \ref{app:proofs}. 
\end{remark}

\subsection{Exact transformations} \label{subsec:exact-intv}

Perhaps the \rl{first and weakest notion}\ 
of abstraction in the literature, \rl{relating concrete interventions}\ from low-level to high-level, 
is referred to as an `exact transformation' of models. We adapt 
\cite[Def 3]{rubenstein2017causal},
\cite[Def 3.1]{BeckersEtAl_2019_AbstractingCausalModles} and
\cite[Def 25]{geiger2023causal}
to our setting as follows \rl{(see Rem.~\ref{rem:ET-notions} later for a precise comparison).}

\begin{definition}[Exact Transformation] \label{def:exact-trans}
Given \rl{causal}\ models $\modelML, \modelMH$ with sets of interventions $\IntsetL$, $\IntsetH$, an \emph{exact transformation} $\modelML \to \modelMH$ is given by a pair of \rl{epic}\ deterministic channels $\LtoH \colon \VinLconc \to \VinHconc$ and $\LtoH \colon \VoutLconc \to \VoutHconc$ in $\catC$ (which we distinguish by their inputs) and a surjective partial function $\LtoHquery \colon \IntsetL \to \IntsetH$ such that whenever $\LtoHquery$ is defined on  $\inti \in \IntsetL$ we have:  
\begin{equation} \label{eq:int-exact-trans} 
% \tikzfig{int-exact-trans-2}
\input{int-exact-trans-2-nodots.tikz}
\end{equation}
\end{definition}

Recall that given set of interventions $\IntsetL$ the query signature $\IntsetioL$ contains a query $\inti : \syn{in_L} \rightarrow \syn{out_L}$ for every intervention $\inti \in \IntsetL$, and similarly for $\IntsetioH$. 

%\footnote{\rl{with $\sem{\syn{in_L}}_{\modelML} =\VinLconc$ and $\sem{\syn{out_L}}_{{\modelML}} =\VoutLconc$}}

\begin{proposition} \label{prop:ET-equiv-qup} 
An exact transformation $(\tau, \omega)$ from $\modelML$ to $\modelMH$ with intervention sets $\IntsetL$, $\IntsetH$ is equivalent to \rlc{an}\ 
\qup{} $\qupabs{\HtoL}{\LtoH}{\queryLtoH}{\SigModel{\IntsetioL}{\modelML}}{\SigModel{\IntsetioH}{\modelMH}}$.
\end{proposition}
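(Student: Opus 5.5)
The plan is to unpack both sides of the claimed equivalence into their raw data and check that the two lists coincide, including the conditions they impose. By Definition~\ref{def:up-abstraction}, an \qup{} $\qupabs{\HtoL}{\LtoH}{\queryLtoH}{\SigModel{\IntsetioL}{\modelML}}{\SigModel{\IntsetioH}{\modelMH}}$ consists of three things. First, a \typedown{} $(\HtoL,\LtoH)$ between the query signatures. Second, a partial surjective map $\queryLtoH$ on queries, respecting the type assignment $\HtoL$. Third, the consistency equations \eqref{eq:consistency-condition}. An exact transformation (Definition~\ref{def:exact-trans}) consists of two epic deterministic channels on inputs and outputs, a surjective partial function $\LtoHquery \colon \IntsetL \pto \IntsetH$, and equation \eqref{eq:int-exact-trans}.

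\emph{Type alignment.} The signature $\IntsetioH$ has exactly two types, $\Invar_H$ and $\Outvar_H$, and likewise $\IntsetioL$ has $\Invar_L$ and $\Outvar_L$. For a given exact transformation $(\LtoH,\LtoHquery)$, I set $\HtoL(\Invar_H) := \Invar_L$ and $\HtoL(\Outvar_H) := \Outvar_L$. I then take $\LtoH_{\Invar_H} \colon \VinLconc \to \VinHconc$ and $\LtoH_{\Outvar_H} \colon \VoutLconc \to \VoutHconc$ to be the two given channels. These are epic and deterministic by assumption, so they form a \typedown{}. One also needs the monoidal products \eqref{eq:tau-factors} over lists of types to be epic. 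I will argue this from epicness of each factor in the categories of interest, which is automatic in $\FStoch$, as footnoted; in general I would note it as the side condition implicit in the equivalence.

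\emph{Queries and consistency.} Each $\inti \in \IntsetL$ is a query $\Invar_L \to \Outvar_L$, and each $\inti' \in \IntsetH$ is a query $\Invar_H \to \Outvar_H$. Hence $\LtoHquery$ is automatically type-compatible: if $\LtoHquery(\inti)$ has type $\Invar_H \to \Outvar_H$, then $\inti$ has type $\HtoL(\Invar_H) \to \HtoL(\Outvar_H)$. Surjectivity is the same condition in both definitions. Since $\semML{\inti}$ is the input-output channel of $\transform{\inti}{\modelML}$ and similarly at the high level, the consistency equation \eqref{eq:consistency-condition} for the pair $(\inti, \LtoHquery(\inti))$ reads literally as \eqref{eq:int-exact-trans}.

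\emph{The converse.} Given an \qup{}, $\HtoL$ is forced on the two types, because the only types available are $\Invar_L$ and $\Outvar_L$. More precisely, $\HtoL(\Invar_H)$ must be a list of low-level types with $\HtoL(\Invar_H) \to \HtoL(\Outvar_H)$ matching the type $\Invar_L \to \Outvar_L$ of the queries in the image. So, for nonempty $\IntsetH$, surjectivity together with type-compatibility gives $\HtoL(\Invar_H) = \Invar_L$ and $\HtoL(\Outvar_H) = \Outvar_L$. Reading off $\LtoH_{\Invar_H}$ and $\LtoH_{\Outvar_H}$ recovers the exact transformation, and the two constructions are mutually inverse.

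I expect this last point to be the main obstacle: pinning down $\HtoL$ uniquely in the degenerate cases. These are an empty $\IntsetH$, or lists like the empty list or repeated types that could in principle also match. I would handle them by restricting to type alignments sending each type to a single type, which is the intended reading of the equivalence; everything else is bookkeeping.
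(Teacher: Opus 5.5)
Your proposal is correct and follows essentially the paper's own argument. The paper's proof is the one-line observation that the exact-transformation equation \eqref{eq:int-exact-trans} is literally the consistency condition for $\IntsetioL, \IntsetioH$, and that the type map is forced to be $\HtoL(\syn{in_H}) = \syn{in_L}$, $\HtoL(\syn{out_H}) = \syn{out_L}$ because $\LtoHquery$ must send queries to queries. Your additional remarks on the degenerate case of empty $\IntsetH$ and on epicness of the product maps \eqref{eq:tau-factors} (automatic in $\FStoch$) are side conditions the paper leaves implicit; they refine the argument rather than change it.
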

\begin{proof}
Immediate from the definitions, noting that for any such \qup{} to have $\LtoHquery$  map queries to queries, the map on types $\HtoL$ must be given by $\HtoL(\syn{in_H}) := \syn{in_L}$ and $\HtoL(\syn{out_H}) := \syn{out_L}$.
\end{proof}

Note this notion does not assume a \varalign. A stronger version, which one can think of as a vanilla, general notion of causal abstraction, just short of demanding a \varalign, is the following. 
For any causal $\modelM$, by the \emph{trivial intervention} 
we mean the intervention $\noint$ which changes no mechanisms, so that $\modelMio_\noint = \modelMio$.

\begin{definition}[\strongCAcaps]  \label{def:strong-CA}
Given causal models $\modelML, \modelMH$ with sets of interventions $\IntsetL$, $\IntsetH$, an exact transformation $(\LtoH, \LtoHquery)$ from $\modelML$ to $\modelMH$ is called a \emph{\strongCA} if 
\begin{enumerate}
	\item $\VninH \subseteq \VoutH$ 
	i.e.~all non-input variables are outputs;
	\item $\IntsetH=\Do(\modelV_H)$, i.e.~contains \emph{all} concrete do-interventions \rlb{at the high level}; % (on all of $\VoutH$);
	\item 
	$\IntsetL$ and $\IntsetH$ contain the trivial interventions, with $\LtoHquery(\nointL)=\nointH$. 
\end{enumerate}
\end{definition}

Examples of exact transformations and \strongCA s will follow in the subsequent sections, with comments on the motivations and relation to the literature postponed to Rem.~\ref{rmk:strong-CA}.

\subsection{Constructive abstraction} \label{subsec:CCA}

We now reach a central, \rl{perhaps \emph{the} paradigmatic}, notion of causal abstraction, \rl{which forms the strongest notion}\ of abstraction given in \cite{BeckersEtAl_2019_AbstractingCausalModles}, and \rl{is}\ of particular use for AI models and other scenarios relating subsets of micro to macro variables.

\begin{definition}[Constructive Abstraction] \label{def:CCA}
\cite[Def 3.19]{BeckersEtAl_2019_AbstractingCausalModles} 
Let $\modelML, \modelMH$ be causal models in $\catC$, with respective concrete variables $\modelVL, \modelVH$ \rlb{and $\VninH \subseteq \VoutH$}. A  
\emph{constructive abstraction} $\modelML \to \modelMH$ is given by a \rl{\varalign\ $\dvaralign{\HtoL}{\LtoH}{\modelVL}{\modelVH}$} such that for all $\syn{S} \subseteq \VintH$  
the following holds in $\catC$: 
\begin{equation} \label{eq:CAabstraction}
\input{causal-abstr-open-simple.tikz}
\end{equation}
\end{definition} 

A constructive abstraction thus structurally relates (abstract) Do-interventions on $\modelH$ to those on $\modelL$. We can now characterise this categorically as follows.

\begin{theorem} \label{Thm:CCA-as-refinement}
A constructive abstraction $\modelML \to \modelMH$ 
\rlc{with \varalign\ $(\HtoL,\LtoH)$}\ is \rl{equivalent to}\ 
a \qdown{} $\qdownabs{\HtoL}{\LtoH}{\SigModel{\Openqueries(\varL)}{\modelML}}{\SigModel{\Openqueries(\varH)}{\modelMH}}$ 
such that $\HtoL(\VinH) = \VinL$. 
\end{theorem}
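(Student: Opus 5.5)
The plan is to prove both directions by unpacking the data on each side and matching them term by term, using that $\strucQ$ for $\Openqueries(\varV)$ is the free category on its generators. A functor out of it, and naturality of a monoidal transformation, are then determined by and checked on generators.

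\emph{From constructive abstraction to downward abstraction.} Given the \varalign\ $(\HtoL,\LtoH)$, I first define $\HtoL$ on types as the given assignment $\syn{X}\mapsto\HtoL(\syn{X})$, extended to lists by concatenation, and take $\LtoH$ to be the monoidal product of the given channels as in \eqref{eq:tau-factors}. Each high-level generator $\syn{\openqueryshort{O}{\VinH}{S}}\colon \syn{\VinH\otimes S\to O}$ is sent to the low-level generator $\syn{\openqueryshort{\HtoL(O)}{\VinL}{\HtoL(S)}}$. This is a legitimate query of $\Openqueries(\varL)$ for three reasons:
\begin{itemize}
\item $\HtoL(\syn{O})\subseteq\HtoL(\VoutH)\subseteq\VoutL$;
\item $\HtoL(\syn{S})\subseteq \VintL$, since the $\HtoL(\syn{X})$ are disjoint and $\HtoL(\VinH)=\VinL$;
\item its type $\VinL\otimes\HtoL(\syn{S})\to\HtoL(\syn{O})$ equals $\HtoL(\VinH\otimes \syn{S})\to\HtoL(\syn{O})$ (up to the chosen ordering of wires, handled by symmetry).
\end{itemize}
By freeness this extends uniquely to a strong symmetric monoidal functor, or Markov functor if copies are present. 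Naturality of $\LtoH$ then needs checking only on generators, and there it is literally equation \eqref{eq:CAabstraction}. Epicness of $\LtoH$ comes from the footnoted requirement on products of the epic deterministic $\LtoH_X$. Monoidality and determinism hold by construction.

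\emph{From downward abstraction to constructive abstraction.} Conversely, suppose we are given a \qdown\ with $\HtoL(\VinH)=\VinL$. The map on types gives subsets $\HtoL(\syn{X})$ with channels $\LtoH_X$, and I must show these form a \varalign\ and that \eqref{eq:CAabstraction} holds. Consistency is immediate from naturality, once I show that $\HtoL$ of each generator is the expected low-level Do-query. The real content is recovering the remaining alignment conditions from the requirement that $\HtoL$ maps queries to queries.
\begin{itemize}
\item Since $\HtoL(\syn{\openqueryshort{O}{\VinH}{S}})$ must be a single generator of type $\VinL\otimes\HtoL(\syn{S})\to \HtoL(\syn{O})$, it must be $\syn{\openqueryshort{\HtoL(O)}{\VinL}{\HtoL(S)}}$. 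This forces $\HtoL(\syn{O})\subseteq\VoutL$ and $\HtoL(\syn{S})\subseteq\VintL$.
\item Taking $\syn{O}=\VoutH$ gives $\HtoL(\VoutH)\subseteq\VoutL$.
\item Taking $\syn{S}$ and $\syn{O}$ to be singletons, and using that the domain of a Do-query is a set of distinct inputs $\VinL\sqcup\HtoL(\syn{S})$, should give disjointness of the $\HtoL(\syn{X})$. For outputs, the hypothesis $\VninH\subseteq\VoutH$ lets every non-input variable appear in $\syn{O}$, so the codomain $\HtoL(\syn{O})$, a set of distinct variables, forces disjointness there too. This is essentially the argument of Remark~\ref{rem:var-alignment-ref}.
\end{itemize}
Finally, the two constructions are mutually inverse, since each determines the other by the same data.

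I expect the main obstacle to be this disjointness step in the converse direction. Types in $\strucQ$ are lists, so one has to argue carefully that a list which is the type of a Do-query has no repetitions. For two distinct variables that are both inputs, or both outputs, one uses the query with $\syn{O}$ containing both (or $\syn{S}$ containing both). The mixed case of an input and a non-input variable is handled by the domain $\VinL\otimes\HtoL(\syn{S})$ of a query with $\syn{S}$ containing the non-input variable.
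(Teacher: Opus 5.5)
Your proposal is correct and follows essentially the same route as the paper. In the forward direction you define $\HtoL$ on generators by $\Do(\syn{S})\mapsto\Do(\HtoL(\syn{S}))$ and read off naturality from \eqref{eq:CAabstraction}; strictly, a generator with outputs $\syn{O}\subsetneq\VoutH$ gives a marginal of that equation rather than the equation itself. In the converse you note that typing forces the query map, and that repetition-free low-level domains and codomains force the variable-alignment conditions. The one difference is that the paper obtains all subset and disjointness conditions at once from the single query intervening on all of $\VintH$, whose low-level input list $\HtoL(\VinH\cup\VintH)$ must have no repeats. This is tidier than your case analysis, and it shows that $\VninH\subseteq\VoutH$ is not needed for that step.
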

\begin{proof}
Appendix \ref{app:proofs}.
\end{proof}

Here we have exposed constructive abstraction as an entirely structural notion, in terms of abstract Do-queries, while the usual presentation is in terms of Do-interventions with concrete values. The link with the latter is however  straightforward via composition with input states. 
The following tells us that we can see constructive abstraction also as \rlc{an}\ \qup\ between Do-interventions 
which must surjectively map low-level Do-interventions $\Do(S_L=s_L)$ to high-level ones $\Do(S_H=s_H)$.

\begin{corollary} \label{cor:CCA-as-qup}
Any constructive abstraction 
%$(\HtoL,\LtoH) \colon 
$\modelML \to \modelMH$
%coming 
with surjective $\LtoH$ 
defines \rlc{an}\ \qup{}
of the form 
$\qupabs{\HtoL}{\LtoH}{\LtoHquery}{\SigModel{\rl{\Do(\modelV_L)}}{\modelML}}{\SigModel{\rl{\Do(\modelV_H)}}{\modelMH}}$ 
via: 
\[
\LtoHquery(\Do(\HtoL(S)=s))) := \Do(S=\LtoH \circ s) 
\]
\end{corollary}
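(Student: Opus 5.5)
The plan is to derive the corollary from Theorem \ref{Thm:CCA-as-refinement} and Proposition \ref{prop:Exact-trans-from-ref}, and then check that the resulting \qup{} is exactly the one stated on $\Do(\modelV_L)$ and $\Do(\modelV_H)$.

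\emph{Step 1.} By Theorem \ref{Thm:CCA-as-refinement}, the constructive abstraction gives a \qdown{} $\qdownabs{\HtoL}{\LtoH}{\SigModel{\Openqueries(\varL)}{\modelML}}{\SigModel{\Openqueries(\varH)}{\modelMH}}$. It sends each abstract Do-query $\syn{\openqueryshort{O}{\VinH}{S}}$ to $\syn{\openqueryshort{\HtoL(O)}{\VinL}{\HtoL(S)}}$.

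\emph{Step 2.} Give $\Openqueries(\varH)$ the structure required by Proposition \ref{prop:Exact-trans-from-ref}: for the query $\syn{\openqueryshort{O}{\VinH}{S}}$, take the chosen input to be $\syn{Z_Q} := \syn{S}$. Then $\syn{Z}_{\HtoL(\syn{Q})} = \HtoL(\syn{S})$, which is indeed the intervened input of the image low-level query. The queries $(\syn{Q}, s)$ of $\Openqueries(\varH)^{\Sh}$ are then, by the concrete-Do convention, exactly the concrete Do-queries $\Do(S=s)$ with outputs $\syn{O}$. So $\Openqueries(\varH)^{\Sh}$ coincides with $\Do(\modelV_H)$ as a query signature and in its semantics.

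On the low level, $\Openqueries(\varL)^{\Sh}$ contains the queries $\Do(S_L = s_L)$ for all $S_L$ and sharp $s_L$, i.e.\ all of $\Do(\modelV_L)$. Only those with $S_L = \HtoL(S)$ and outputs $\HtoL(O)$ lie in the domain of the partial map; this is allowed, since an upward map need only be partial.

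\emph{Step 3.} Apply Proposition \ref{prop:Exact-trans-from-ref}, using that $\LtoH$ is assumed surjective. This yields an \qup{} with $\LtoHquery(\Do(\HtoL(S)=s)) = \Do(S = \LtoH \circ s)$, as stated. For this to make sense, $\LtoH \circ s$ must be a sharp state of $S$. This holds because $\LtoH$ is deterministic and deterministic morphisms preserve sharp states: using \eqref{eq:deterministic} and \eqref{eq:copy-points}, copying $\LtoH \circ s$ equals $(\LtoH \otimes \LtoH)\circ(s\otimes s)$. Consistency is the equation \eqref{eq:proof-prop-up-from-down}, i.e.\ naturality of the \qdown{} precomposed with $s$. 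Surjectivity of the partial map follows from surjectivity of $\LtoH$: every high-level $\Do(S=s_H)$ is reached from some $s$ with $\LtoH\circ s = s_H$. If $\syn{S}$ splits into several variables, $s_H$ factors as a product of sharp states, and one lifts each factor separately using \eqref{eq:tau-factors}.

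\emph{Main obstacle.} The only delicate point is bookkeeping in Step 2. One must identify $\Openqueries^{\Sh}$ with $\Do(\modelV)$ so that the type alignment matches: the query types of the \qup{} are the variables, with $\HtoL(\VinH)=\VinL$ and $\HtoL(O)\subseteq \VoutL$. The mismatch is that $\Do(\modelV_L)$ is larger than the image of $\HtoL$; this is harmless because $\LtoHquery$ is only required to be partial.
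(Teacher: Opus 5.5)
Your proposal is correct and matches the paper's own argument: the paper observes that the consistency condition is just \eqref{eq:CAabstraction} precomposed with the sharp state $s$, and notes that, alternatively, the corollary is the instance of Proposition~\ref{prop:Exact-trans-from-ref} with $\syn{Z_Q} = \syn{S}$ applied to the \qdown{} from Theorem~\ref{Thm:CCA-as-refinement}, which is exactly your route. You also fill in two details the paper leaves implicit, both correctly: that $\LtoH \circ s$ is again sharp, and that surjectivity extends to products because in a Markov category a sharp state of a product is the product of its (sharp) marginals.
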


As a result it also defines an exact transformation of Do-interventions, i.e.~\rlc{an}\ \qup{} of the form 
${\qupabs{\HtoL}{\LtoH}{\LtoHquery}
{\SigModel{\Do(\modelV_L)^{\mathsf{io}}}{\modelML}}
{\SigModel{\Do(\modelV_H)^{\mathsf{io}}}{\modelMH}}}$ with each $\Do(\modelV)^{\mathsf{io}}$ understood as a signature $\Intsetio$ with only two types. 
\rlc{In fact it defines a \strongCA\ (Def.~\ref{def:strong-CA}).}\

\begin{proof}
The consistency condition for such \rlc{an}\ \qup{} is precisely that: 
\begin{equation} \label{eq:constr-CA-explicit}
\input{constr-CA-explicit-2.tikz}
% \eqref{eq:constr-CA-explicit}
\end{equation}
Alternatively, this forms a special case of Proposition \ref{prop:Exact-trans-from-ref} with $\SigQH=\Openqueries(\VH)$ and $\SigQL=\Openqueries(\VL)$, where for $\syn{Q}=\Do{(\syn{S})}$ we take \rlb{$\syn{Z_Q} = \syn{S}$}. 
% along with the full set $\Sh(S)$ the set of sharp states $s$ of $S$; 
\rlb{We can then recognise Eq.~\eqref{eq:constr-CA-explicit} as an instance of Eq.~\eqref{eq:proof-prop-up-from-down}}. 
The statement where the query types are only $\syn{in,out}$ is immediate. 
\end{proof}

\begin{example} \label{ex:CCA} 
	The following simple example of a constructive abstraction is from \cite{BeckersEtAl_2019_AbstractingCausalModles}.\footnote{\rl{Originally from \cite{rubenstein2017causal} though discussed there with restricted sets of interventions exemplifying exact transformations.}} Consider open deterministic causal models $\modelL$ and $\modelH$ in $\FStoch$ shown left and right below, respectively. 
	\begin{equation} \label{eq:CCA-example-models}
		\input{CCA-example-L_b.tikz} 
		\hspace*{3cm}
%		\hspace*{1.0cm} , \hspace*{2.0cm}
		\input{CCA-example-H.tikz}
	\end{equation}
	Here $X_1,..., X_{99}$ are binary variables, valued in $\{0,1\}$, representing the votes on some petition by 99 individuals (`1' corresponds to `yes'), $T$ is the total number in support of the petition (i.e. $f_T = \sum_i X_i$) and $A_1$, $A_2$ are some discrete variables representing two possible advertisements the campaign for the petition can choose to run; $\{f_{X_i}\}_{i=1}^{99}$, $f_{A_1}$ and $f_{A_2}$ are some functions of the corresponding types with $U_{1},..., U_{101}$ the background variables considered (exogenous) inputs. 
	The high-level model arises from dividing the voters into three distinct groups 
	$G_1,G_2$ and $G_3$. 
	$G_1$ is valued in $[1,33]$ with $U_1' = \prod_{i=1}^{33} U_i$ and $f_{G_1}(a_1,a_2,u_1') = \sum_{i=1}^{33} f_{X_i}(a_1,a_2,(u_1')_i)$, analogously for $G_2, U_2',f_{G_2}$ and $G_3, U_3', f_{G_3}$; finally, $T'$ is binary, representing whether the petition is accepted or not, with $f_{T'}(g_1,g_2,g_3)$ yielding one if $g_1+g_2+g_3 > 49$ and zero otherwise. 
	
The relation between the two models can be recast in terms of a \varalign, with 
	$\HtoL(G_1)=\{X_1,...,X_{33}\}$, $\HtoL(U_1')=\{U_1,...,U_{33}\}$, analogously for $G_2,U_2'$ and $G_3,U_3'$ and with $\HtoL(T')=\{T\}$, $\HtoL(A_1)=\{A_1 \}$, $\HtoL(U_{100})=\{U_{100} \}$, $\HtoL(A_2)=\{A_2\}$, $\HtoL(U_{101})=\{U_{101}\}$; 
	the map $\LtoH$ is given by setting $\LtoH_{T'}(t) = 1$  if $t>49$ and $0$ otherwise, $\LtoH_{G_1} = \sum_{i=1}^{33} X_i$, analogously for $G_2$ and $G_3$, and setting $\LtoH$ to be the identity on the (exogenous) input variables and on $A_1,A_2$. 
	It then holds that: 
	\begin{equation} \label{eq:CCA-example-mechanism-consistency}
		\input{CCA-example-fTprime-consistency.tikz} 
		\hspace*{2.0cm}
%		\hspace*{0.5cm} , \hspace*{1.5cm}
		\input{CCA-example-fG-consistency.tikz}
		\hspace*{-0.5cm}
	\end{equation}
	and analogously for $f_{G_2}$ and $f_{G_3}$. 
	
	 One may check that this data indeed defines a constructive abstraction $\modelL \rightarrow \modelH$. 
	 As an illustrative example we verify consistency for an abstract Do-intervention on $S=\{G_1\}$; other ones go through analogously. 
	 The left-to-right order matches Eq.~\eqref{eq:CAabstraction}, but the computation is actually most easily done right to left: (a) and (d) hold by definition; (b) uses equalities like the RHS of Eq.~\eqref{eq:CCA-example-mechanism-consistency} for $f_{G_2}$ and $f_{G_3}$; (c) passes three instances of the deterministic morphisms $\LtoH$ through the copy maps 
	  and then rewrites using the LHS of Eq.~\eqref{eq:CCA-example-mechanism-consistency}. 
	\begin{center}
		\begin{minipage}{0.2\textwidth}
			\centering
			\input{CCA-example-consistency-example-1.tikz} 
		\end{minipage}
		\hspace*{0.5cm} $\stackrel{(d)}{=}$ \hspace*{0.5cm}
		\begin{minipage}{0.4\textwidth}
			\centering
			\input{CCA-example-consistency-example-2.tikz} 
		\end{minipage} 
		\hspace*{1.0cm}
		$\stackrel{(c)}{=}$ \\
		\begin{minipage}{0.41\textwidth}
			\centering
			\hspace*{-0.6cm}
			\input{CCA-example-consistency-example-3.tikz} 
		\end{minipage} 
		$\stackrel{(b)}{=}$
		\begin{minipage}{0.3\textwidth}
			\centering
			\vspace*{0.5cm}
			\hspace*{-0.6cm}
			\input{CCA-example-consistency-example-4.tikz} 
		\end{minipage}
		$\stackrel{(a)}{=}$
		\begin{minipage}{0.21\textwidth}
			\centering
%			\hspace*{0.2cm}
			\input{CCA-example-consistency-example-5.tikz} 
		\end{minipage}  
	\end{center} 
	\color{black}
\end{example}

\subsection{\restrictedCCAcaps} \label{subsec:weak-CA}

In practice, one \rl{may not be able}\ to verify a constructive \rl{causal abstraction}\ condition on arbitrary inputs, but only (a subset of) those which arise from the low-level model itself. This leads to the following weaker notion based instead on interchange queries \rl{\cite{GeigerEtAl_2021_CausalAbstracttoinOfNN, geiger-etal-2020-neural}.}  

\begin{definition}[\rl{\restrictedCCAcaps}]
\rl{An \emph{\restrictedCCA}}\ $\modelML \to \modelMH$ is given by a \rl{\varalign\ $\dvaralign{\HtoL}{\LtoH}{\modelVL}{\modelVH}$}, such that for all disjoint subsets $\syn{S_1,\dots,S_n \subseteq \VintH}$ the following holds in $\catC$:
\begin{equation} \label{eq:weak-CA-natural}
\input{weak-CA-natural.tikz}
\end{equation}
\end{definition}

The above corresponds to a consistency condition between interchange queries $\WOpenqueries(\varV)$, which we can formalise as follows.

\begin{proposition} \label{prop:interchange-queries}
\rl{An \restrictedCCA}\ 
$\modelML \to \modelMH$ 
\rlc{with \varalign\ $(\HtoL,\LtoH)$}\ 
is \rl{equivalent to}\  
a \qdown{}  
%of the form 
$\qdownabs{\HtoL}{\LtoH}{\SigModel{\WOpenqueries(\varL)}{\modelML}}{\SigModel{\WOpenqueries(\varH)}{\modelMH}}$ 
such that $\HtoL(\VinH) = \VinL$, 
$\HtoL(\VintH) \subseteq \VintL$ and: 
\begin{equation} \label{eq:weakCA-query-map}
\syn{
%\HtoL(\wopenquerylong{X}{\VinH}{S_1,\dots,S_n}) \ \ := \ \  (
%\syn{\wopenquerylong{\HtoL(X)}{\VinL}{\HtoL(S_1),\dots,\HtoL(S_n)}})
\rlb{\HtoL \big( \wopenquery{}{S_1,\dots,S_n} : \VinH \to X \big) \ = \ \wopenquery{}{\HtoL(S_1),\dots,\HtoL(S_n)} : \VinL \to \HtoL(X)}
}
\end{equation}
\end{proposition}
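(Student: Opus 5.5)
The plan mirrors the proof strategy of Theorem~\ref{Thm:CCA-as-refinement}, with abstract Do-queries replaced by interchange queries. Both directions rest on one observation: a \qdown{} $(\HtoL,\LtoH)$ between $\SigModel{\WOpenqueries(\varL)}{\modelML}$ and $\SigModel{\WOpenqueries(\varH)}{\modelMH}$ is fixed by two pieces of data. The first is its action on types, which gives a \typedown{}. The second is its action on the generating queries $\wopenquery{O}{S_1,\dots,S_n}$. Naturality of $\LtoH$ then only has to be checked on these generators, since naturality for composites and monoidal products follows from functoriality and monoidality of $\LtoH$.

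\emph{From \restrictedCCA{} to \qdown{}.} Take the \varalign{} $(\HtoL,\LtoH)$. On types, set $\HtoL(\syn{X})$ to be the given subset of $\VL$, and $\LtoH_X$ to be the given epic deterministic channel. The factorisation \eqref{eq:tau-factors} over disjoint subsets is then automatic. On queries, define $\HtoL$ by \eqref{eq:weakCA-query-map}. For this to be a query of $\WOpenqueries(\varL)$ we need three facts:
\begin{itemize}
\item $\HtoL(\VinH)=\VinL$, so the $n+1$ copies of the input type match;
\item $\HtoL(S_j)\subseteq \VintL$, which uses $\HtoL(\VintH)\subseteq\VintL$;
\item the $\HtoL(S_j)$ are pairwise disjoint, which follows from disjointness in the \varalign;
\end{itemize}
and also $\HtoL(O)\subseteq \VoutL$, which holds by the \varalign{} conditions. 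I expect that $\HtoL(\VintH)\subseteq\VintL$ follows from $\HtoL(\VinH)=\VinL$ together with disjointness of the subsets. If that fails, I will add it as a hypothesis implicit in the \restrictedCCA{} definition. Extend $\HtoL$ freely to a monoidal functor on $\strucQH$; this is possible because $\strucQH$ is free. Naturality of $\LtoH$ on each generator is then literally \eqref{eq:weak-CA-natural}, the input wire being $\LtoH^{\otimes(n+1)}$ on $\VinL^{\otimes(n+1)}$. Since $\LtoH$ is epic, the transformation is epic.

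\emph{From \qdown{} to \restrictedCCA{}.} Conversely, suppose we are given such a \qdown{} with $\HtoL(\VinH)=\VinL$, $\HtoL(\VintH)\subseteq\VintL$ and query map \eqref{eq:weakCA-query-map}. Then \eqref{eq:weakCA-query-map} and naturality give exactly \eqref{eq:weak-CA-natural}. It remains to recover a \varalign. The types of $\WOpenqueries(\varH)$ are the variables, and $\HtoL$ sends each to a list of low-level variables. Well-typedness of the image queries forces $\HtoL(O)\subseteq\VoutL$ and the $\HtoL(S_j)$ to lie in $\VintL$.

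The main obstacle is \emph{disjointness} of the subsets $\HtoL(\syn{X})$. My first approach is to use the query with $S_1=\{\syn X\}$, $S_2=\{\syn Y\}$ for distinct $\syn X,\syn Y$. Its image must be a valid interchange query, which requires $\HtoL(S_1),\HtoL(S_2)$ to be disjoint. This settles all pairs of non-input variables. For pairs involving inputs, I will use $\HtoL(\VinH)=\VinL$ together with the fact that the images of non-input variables lie in $\VintL$, disjoint from $\VinL$. Within $\VinH$, disjointness can be argued as in Remark~\ref{rem:var-alignment-ref} / Theorem~\ref{Thm:CCA-as-refinement}, by appeal to the list-valued type map and \eqref{eq:tau-factors}. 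Finally, I will check that the two constructions are mutually inverse. This is immediate, since both are determined by the same data $(\HtoL,\LtoH)$ on variables.
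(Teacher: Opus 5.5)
Your proposal is correct and follows essentially the paper's proof. The query map is fixed by \eqref{eq:weakCA-query-map}, and naturality on generators comes from \eqref{eq:weak-CA-natural}; when the output set is not all of $\VoutH$ this needs a marginalisation step, so it is not literal. Conversely, the \varalign{} is recovered from well-typedness of the image queries together with $\HtoL(\VinH)=\VinL$, $\HtoL(\VintH)\subseteq\VintL$ and $\VinL\cap\VintL=\emptyset$. The implication you only expected in the forward direction does hold: each $\HtoL(\syn{X})$ with $\syn{X}\in\VintH$ is disjoint from $\HtoL(\VinH)=\VinL$ and so lies in $\VintL$. So no extra hypothesis is needed, and this is a check the paper itself leaves implicit.
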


\begin{proof}
Appendix \ref{app:proofs}.
\end{proof}

Just as constructive abstraction relates to concrete Do-interventions, \rl{\restrictedCCA}\ relates to concrete interchange interventions. \rlc{An}\ \qup{} \rl{$\qupabsshort{\SigModel{\II(\modelVL)}{\modelML}}{\SigModel{\II(\modelVH)}{\modelMH}}$}\ must surjectively map low-level interchange interventions to high-level ones in a consistent manner, and \rl{\restrictedCCA}\ provides a structured way to do so.

\begin{corollary}
Any \rl{\restrictedCCA}\ 
$\modelML \to \modelMH$ 
%coming 
with surjective $\LtoH$ 
defines \rlc{an}\ \qup{} of the form 
%of the form 
\rl{$\qupabs{\HtoL}{\LtoH}{\LtoHquery}{\SigModel{\II(\modelVL)}{\modelML}}{\SigModel{\II(\modelVH)}{\modelMH}}$}\ via: 
\[
\LtoHquery ( \II(\HtoL(S_j), x_j)^n_{j=1}) :=  \rlc{\II(S_j,\LtoH \circ x_j)^n_{j=1}}
\]
\end{corollary}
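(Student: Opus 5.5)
The plan mirrors the proof of Corollary \ref{cor:CCA-as-qup}: obtain the corollary as an instance of Proposition \ref{prop:Exact-trans-from-ref}, applied to the \qdown{} of Proposition \ref{prop:interchange-queries}.

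First, by Proposition \ref{prop:interchange-queries} the given \restrictedCCA{} is a \qdown{} $\qdownabs{\HtoL}{\LtoH}{\SigModel{\WOpenqueries(\varL)}{\modelML}}{\SigModel{\WOpenqueries(\varH)}{\modelMH}}$ whose query map is given by \eqref{eq:weakCA-query-map}. To put $\WOpenqueries(\varH)$ in the form required by Proposition \ref{prop:Exact-trans-from-ref}, I take $\syn{Z_Q}$ to be all $n+1$ copies of $\syn{\VinH}$ for each interchange query $\syn{Q}=\wopenquery{O}{S_1,\dots,S_n}$. Then $\syn{Z}_{\HtoL(\syn{Q})}=\HtoL(\syn{Z_Q})$ is the $n+1$ copies of $\syn{\VinL}$, because $\HtoL(\VinH)=\VinL$.

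A sharp state of $\syn{Z_Q}$ is a tuple $(x_0,x_1,\dots,x_n)$ of sharp inputs. The query $(\syn{Q},(x_0,\dots,x_n))$ of $\WOpenqueries(\varH)^{\Sh}$ has, by \eqref{eq:interchange-query}, semantics given by running the model on base input $x_0$ with each $S_j$ set to the value $\io{\modelM}\circ x_j$. This is exactly the concrete interchange query $\II(S_j,x_j)^n_{j=1}$ on input $x_0$, restricted to outputs $O$. The same identification holds at the low level.

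Given surjective $\LtoH$, Proposition \ref{prop:Exact-trans-from-ref} then yields a \qup{} sending $(\HtoL(\syn{Q}),s)$ to $(\syn{Q},\LtoH\circ s)$. Here $\LtoH\circ s$ factorises as the tuple $(\LtoH\circ x_j)_j$, since $\LtoH$ is monoidal by \eqref{eq:tau-factors} and $s$ is a product of sharp states. Under the identification above this is precisely $\LtoHquery(\II(\HtoL(S_j),x_j)^n_{j=1})=\II(S_j,\LtoH\circ x_j)^n_{j=1}$.

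Two points need checking.

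\emph{Input wire.} The signature $\II(\modelV)$ may treat the base input $x_0$ as a remaining wire rather than a chosen state. In that case I would apply Proposition \ref{prop:Exact-trans-from-ref} with $\syn{Z_Q}$ equal to only the $n$ extra copies of $\syn{\VinH}$, leaving one copy as input; the consistency equation \eqref{eq:proof-prop-up-from-down} is the same.

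\emph{Well-definedness and surjectivity.} I expect this to be the main, if mild, obstacle. For $\LtoHquery$ to be a function, distinct low-level labels mapping to the same high-level query is harmless. However, a single low-level interchange intervention might arise from two different high-level $S_j$ via $\HtoL$, which is ruled out because the sets $\HtoL(\syn{X})$ are disjoint. Surjectivity follows from surjectivity of $\LtoH$ on sharp states of $\VinH$: every high-level $\II(S_j,x'_j)$ has each $x'_j=\LtoH\circ x_j$ for some sharp $x_j$. It also uses that every subset $\syn{O}\subseteq\VoutH$ maps to $\HtoL(\syn{O})\subseteq\VoutL$.

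Consistency itself is then immediate from naturality, exactly as in \eqref{eq:proof-prop-up-from-down}.
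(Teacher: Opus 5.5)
Your proposal is correct and follows the paper's own alternative argument: instantiate Proposition~\ref{prop:Exact-trans-from-ref} on the \qdown{} of Proposition~\ref{prop:interchange-queries}, so that consistency is just \eqref{eq:weak-CA-natural} evaluated on sharp input states. Make your fallback choice of $\syn{Z_Q}$, the $n$ extra copies of $\syn{\VinH}$ with the base input kept as a wire, the primary one, since that is what yields the queries of $\II(\modelVH)$; it also corrects the paper's stated $\syn{Z_Q} = \syn{S_1 \otimes \dots \otimes S_n}$, which are not inputs of the interchange query.
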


As for constructive abstraction, such a u-abstraction also gives an exact transformation of concrete (interchange) interventions $\qupabs{\HtoL}{\LtoH}{\LtoHquery}
{\SigModel{\II(\modelVL)^{\mathsf{io}}}{\modelML}}
{\SigModel{\II(\modelVH)^{\mathsf{io}}}{\modelMH}}$.

\begin{proof}
The consistency condition on queries for such a u-abstraction simply becomes the statement: 
\begin{equation} \label{eq:interchange-ints} 
\input{II-concretely.tikz}
% \hspace*{-0.7cm}
% \tikzfig{II-concrete-moredetail_b}
\end{equation}
which follows immediately from \eqref{eq:weak-CA-natural}. Alternatively, apply Proposition \ref{prop:Exact-trans-from-ref} with 
%with $\Stchoice=\Sh$ the full set of sharp states, 
$\SigQH=\WOpenqueries(\VH)$, $\SigQL=\WOpenqueries(\VL)$. For $\syn{Q} = \wopenquery{X}{\syn{S_1,\dots,S_n}}$ we take \rlb{$Z_Q = S_1 \otimes \dots \otimes S_n$}. 
\end{proof}

\begin{remark} \label{remark:alt-view-weak-abstraction}
It follows from \rl{an \restrictedCCA}\ relation that we can re-write the \rl{RHS of}\ \eqref{eq:weak-CA-natural} as:\footnote{\rl{The re-write uses the special case of \eqref{eq:weak-CA-natural} for $\syn{S_1}=\dots = \syn{S_n} = \emptyset$.}} 
\[
\input{weak-CA-rewrite-LHS.tikz}
\]
Hence \rl{an \restrictedCCA}\ is precisely the statement that a constructive abstraction holds but restricted only to input states of $\HtoL(S)$ of the following form, with $\syn{S = S_1 \cup \dots \cup S_n}$ disjoint. 
\begin{equation} \label{eq:weak-abs-inputs}
\input{input-states.tikz}
\end{equation} 
\end{remark}

\subsection{Counterfactual abstraction} \label{subsec:CF-abstraction}

%\paragraph{Counterfactual queries} 
%\label{subsec:queries-for-FCMs}

\rl{Let us first name the kind of causal model with respect to which counterfactuals are defined.} 
\begin{definition} \label{def:FCM}
A \emph{Functional Causal Model (FCM)} in a Markov category $\catC$ is a causal model $\modelM$ with variables partitioned as $\syn{V = \FMCVen \cup U}$ with $\Vout = \syn{\FMCVen}$, $\Vin=\emptyset$. The \emph{exogenous} variables, $\syn{U}=\{\syn{U_i}\}^n_{i=1}$, are such that each $\syn{U_i}$ has no parents, with mechanisms as left below, and the \emph{endogenous} variables  $\syn{\FMCVen} = \{\syn{X_i}\}^n_{i=1}$ of the right-hand form below:
% being states $\lambda_i$. 
\begin{equation} \label{eq:noise-var}
\begin{tikzpicture}
	\begin{pgfonlayer}{nodelayer}
		\node [style=none] (0) at (0.5, 1.25) {};
		\node [style=map] (1) at (0.5, 0) {$\syn{\lambda_i}$};
		\node [style=label] (2) at (0.5, 1.75) {$\syn{U_i}$};
	\end{pgfonlayer}
	\begin{pgfonlayer}{edgelayer}
		\draw (0.center) to (1);
	\end{pgfonlayer}
\end{tikzpicture}

% \end{equation}
\qquad \qquad 
% % the \emph{endogenous} variables, where each $\syn{X_i}$ has a mechanism of the form:  
% \begin{equation} \label{eq:SCM-2}
\input{SCM-2.tikz}
\end{equation}
where $\Pa'(\syn{X_i}) \subseteq \syn{\FMCVen}$ and $f_i = \semSM{\syn{f_i}}$ is deterministic \eqref{eq:deterministic}.\afootnote{Here we follow the typical convention that an FCM has no inputs, though one could easily consider variants with \rlb{extra inputs}. Formally, the signature for an FCM includes equations specifying that each $\syn{f_i}$ is deterministic.} 
\end{definition}
	
% \rl{Recall from \eqref{eq:deterministic} what it means for a process in a cd-category to be deterministic.}\ 
%We use the (standard) name \emph{Structural Causal Model} (SCM) to refer to an FCM in $\catC = \FStoch$. An SCM is thus given by finite sets $X_i,U_i$ and for each $i$ a distribution $\lambda_i$ over $U_i$ and function $f_i \colon \Pa'(X_i) \times U_i \to X_i$ where \rlb{$\Pa'(X_i) \subseteq \FMCVen = \{X_1,\dots, X_n\}$}. 

\begin{example}
An FCM in $\catC = \FStoch$ is precisely a \emph{Structural Causal Model (SCM)} \rlc{in}\ the usual sense. It consists of finite sets $X_i,U_i$ and for each $i$ a distribution $\lambda_i$ over $U_i$ and function $f_i \colon \Pa'(X_i) \times U_i \to X_i$ where \rlb{$\Pa'(X_i) \subseteq \FMCVen = \{X_1,\dots, X_n\}$}. 
\end{example}

\begin{example} 
The following shows a network diagram for an FCM \rlc{with}\ \rlb{$\syn{\FMCVen=\{S,L,A\}}$}\ and \allowbreak\rlb{$\syn{U=\{U_S,U_L,U_A\}}$}. 
%with outputs $\syn{S,L,A}$. 
\[
\input{FCM-ex-smaller.tikz}
\]
\end{example} 

Closely related is the following. By a \emph{deterministic} causal model $\modelF$ we mean a causal model whose mechanisms $c_X$ are deterministic for all \rlb{$X \in \Vnin$}.\footnote{\rl{Note that being a deterministic causal model doesn't come with a condition on a particular partitioning of the variables, hence a root node variable either is an input or else has a sharp state (a point distribution) as its `mechanism'.}} 
Formally, we can view any FCM as a composite\footnote{\rlb{See \cite[Sec 5.2]{lorenz2023causal} for the details on the notion of composition between open causal models, which takes the union of the respective sets of mechanisms together with identifying respective input and output variables over which the composition happens.}} $\modelM = \modelF \circ \modelU$ of two open causal models: 
\rl{$\modelF$,}\ a deterministic model on \rlb{$\FMCVen, U$}\ with $U$ as inputs, containing all the deterministic mechanisms $f_i$, 
and 
\rl{$\modelU$, a}\ model containing only $U$ and their mechanisms, corresponding to the joint state:  
\[
\input{lambda-state.tikz}
\]

\paragraph{Counterfactual queries.}
Let us now consider \rl{queries definable}\ for FCMs specifically, following \cite{xia2024neural}. 
%Let \rlb{$\syn{\CFVen}$}\ be a set of 
%\rlb{abstract variables}
Given abstract variables $\syn{\CFVen}$, 
% and, borrowing notation from \cite{xia2024neural} which studies a similar definition, we}\ 
define the \rlb{signature $\absLthree(\syn{\CFVen})$}\ of \emph{\CF{} queries} to have types \rlb{$\syn{\CFVen}$}\ and a query \rlb{$\syn{\Lthreequery{Y_1|_{\CFOpenSet_1},\dots,Y_m|_{\CFOpenSet_m}}}$ with inputs $(\CFOpenSet_j)_{j=1}^m$ and outputs $(Y_j)_{j=1}^m$}, for each choice of subsets \rlb{$\CFOpenSet_j, Y_j \subseteq \syn{\CFVen}$}\ for $j=1,\dots,m$.
% modelled by an FCM $\modelM$ with endogenous variables $\CFVen$ as:
Any FCM $\modelM = \modelF \circ \modelU$ over $\syn{\CFVen}, \syn{U}$ yields a model of the queries via:
% of $\absLthree(\syn{\CFVen})$, as abstract queries, via} 
% Define the query \rlb{signature $\absLthree(\syn{\CFVen})$}
% %\ of \emph{\CF{} queries} 
% to have types \rlb{$\syn{\CFVen}$}\ and a query $\syn{\Lthreequery{Y_1|_{\CFOpenSet_1},\dots,Y_m|_{\CFOpenSet_m}}}$ \rlc{as below for any $\CFOpenSet_j, Y_j \subseteq \syn{\CFVen}$ for $j=1,\dots,m$, modelled by an FCM $\modelM$ with endogenous variables $\syn{\CFVen}$ as}:
\begin{equation} \label{eq:l3-expression-abstract}
\rlb{\input{L3-inC-simple_b.tikz}}
% \tikzfig{L3-abs-query}
% \qquad
% % \tikzfig{L3-abstract-proper}
% \tikzfig{L3-inC-proper}
\end{equation} 
Similarly, given \rl{concrete variables \rlb{$\model{\CFVen}$}\  in $\catC$,}\ 
we define the \rlb{signature $\Lthree(\model{\CFVen})$}\ of \emph{concrete \CF{} queries} 
to have types \rlb{$\syn{\CFVen}$ and a query $\Lthreequery{Y_1|_{\CFDostates_1},\dots,Y_m|_{\CFDostates_m}}$}\ with \rl{outputs $\syn{Y_1,\dots,Y_m}$, and no inputs}, for each choice of subsets  \rlb{$\CFOpenSet_j, Y_j \subseteq \syn{\CFVen}$ and sharp states $\CFDostates_j$ of $\CFOpenSet_j$},  for $j=1,\dots,m$,\footnote{\rl{Concrete variables come with specified input and output subsets; here we require \rlb{$\syn{\CFVen^{\text{in}}}=\emptyset$}\ since the types of \CF{} queries range over only the variables considered endogenous (in any FCM that yields a model of those queries).}} modelled by any FCM $\modelM = \modelF \circ \modelU$  over \rlb{$\CFVen, U$} as: 
\[
%\tikzfig{conc-expression-2_b}
%\rlb{\tikzfig{conc-expression-2_b-expanded}}
\input{conc-exp-simpler.tikz}
% \qquad
% \tikzfig{conc-expression-2b}
% \semM{\Lthreequery{Y_1|_{x_1},\dots,Y_m|_{x_m}}} \qquad := \qquad  
% \tikzfig{conc-expression}
\]
Concrete \CF{} queries are closely related to \emph{counterfactuals}.  
Essentially, a counterfactual is given by 
a conditional probability distribution arising from a concrete \CF{} query; for details see
%\rlb{$\semM{\Lthreequery{Y_1|_{\CFDostates_1},\dots,Y_n|_{\CFDostates_n}}}_{\modelM}$}\ for a given FCM $\modelM$.  
\cite{lorenz2023causal}.\footnote{Including the relation between terms like \rlb{$\Lthreequery{Y_1|_{\CFDostates_1},\dots,Y_n|_{\CFDostates_n}}$}\ and `conjunctions of counterfactual expressions'.} %Now we can consider a notion of abstraction for functional causal models and \CF{} queries from \cite{xia2024neural}. 

\paragraph{Counterfactual abstraction.} We can now meet the related notion of abstraction, essentially from \cite{xia2024neural}. 

\begin{definition}[Counterfactual abstraction]
Let $\modelML, \modelMH$ be FCMs with concrete endogenous variables \rlb{$\model{\FMCVen}_L, \model{\FMCVen}_H$}. A \emph{\CF{} abstraction} $\modelML \to \modelMH$ is given by a \varalign\footnote{\st{Note that for \rlb{the endogenous variables of an FCM we have $\FMCVen=\Vout=\Vint$}\ and so the $\HtoL(\syn{X})$ must simply be disjoint subsets.}} 
\rlb{$\dvaralign{\HtoL}{\LtoH}{\model{\FMCVen}_L}{\model{\FMCVen}_H}$}\ 
such that for all \CF{} queries 
\rlb{$\Lthreequery{Y_1|_{\CFOpenSet_1},\dots,Y_m|_{\CFOpenSet_m}} \in \Lthree(\syn{\FMCVen_H})$}\ 
the following holds in $\catC$: 
\begin{equation} \label{eq:compatibility-condition}
\rlb{
\input{compat-simpler_b_no_H.tikz}
%\tikzfig{compat-simpler_b}  %% With superscript H everywhere
}
%\tikzfig{compat-simpler}
%\tikzfig{query-simplest}
% \tikzfig{query-simplest-2}
\end{equation}
\end{definition}

We can by now immediately recognise the above as a naturality condition. 

\begin{proposition} \label{prop:consistent-alignment}
A \CF{} abstraction
$\modelML \to \modelMH$ 
is \rl{equivalent to}\ a \qdown{} of the form 
\allowbreak\rlb{$\qdownabs{\HtoL}{\LtoH}{\SigModel{\absLthree(\syn{\FMCVen_L)}}{\modelML}}{\SigModel{\absLthree(\syn{\FMCVen_H})}{\modelMH}}$}\ with:  
\begin{equation} \label{eq:CF-abs-query-map}
\rlb{\HtoL(\Lthreequery{Y_1|_{\CFOpenSet_1},\dots,Y_m|_{\CFOpenSet_m}}) := \Lthreequery{\HtoL(Y_1)|_{\HtoL(\CFOpenSet_1)},\dots,\HtoL(Y_m)|_{\HtoL(\CFOpenSet_m)}}}
\end{equation}
\end{proposition}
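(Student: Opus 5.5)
The plan is to unpack both sides directly, as in the proofs of Theorem \ref{Thm:CCA-as-refinement} and Proposition \ref{prop:interchange-queries}. A \qdown{} of the stated form consists of a functor $\HtoL \colon \absLthree(\syn{\FMCVen_H}) \to \absLthree(\syn{\FMCVen_L})$ together with an epic monoidal natural transformation $\LtoH \colon \semMH{-} \Rightarrow \semML{\HtoL(-)}$. Since the queries form a signature, a functor out of the free category is determined by its values on types and on generating queries. Monoidality and naturality with respect to the free structure (swaps, identities, and copy/discard in the Markov case) then hold automatically once naturality holds on generators. Naturality on generators is exactly the consistency equation \eqref{eq:query-relation}.

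\emph{From a \CF{} abstraction to a \qdown{}.} First I take the \varalign{} $(\HtoL,\LtoH)$ and use it as a \typedown. The types of $\absLthree(\syn{\FMCVen})$ are the endogenous variables, and $\HtoL$ of a list is the concatenation. Each $\LtoH_X$ is deterministic and epic, and products of them are epic by the standing assumption after \eqref{eq:tau-factors}. I then define $\HtoL$ on queries by \eqref{eq:CF-abs-query-map}, and check that it is well-typed. The inputs of the high-level query are the lists $(\CFOpenSet_j)_j$, so its low-level image needs inputs $(\HtoL(\CFOpenSet_j))_j$ and outputs $(\HtoL(Y_j))_j$. These are genuine subsets of $\syn{\FMCVen_L}$ because the $\HtoL(\syn X)$ are disjoint subsets. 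The image is therefore a generator of $\absLthree(\syn{\FMCVen_L})$, as Definition \ref{def:query-ref} requires. Naturality on each generator is, after composing $\LtoH$ on the inputs and outputs, literally \eqref{eq:compatibility-condition}, with $\LtoH$ on $\HtoL(\CFOpenSet_j)$ factorising as in \eqref{eq:tau-factors}.

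\emph{From a \qdown{} to a \CF{} abstraction.} Conversely, given such a \qdown{} with the query map \eqref{eq:CF-abs-query-map}, I recover a \varalign. The $\HtoL(\syn X)$ for single types $\syn X$ are lists of low-level types, which must be read as subsets; disjointness is the step needing care. I would argue that \eqref{eq:CF-abs-query-map} being a generator for every choice of $\CFOpenSet_j, Y_j$ forces $\HtoL(\syn X)\cup\HtoL(\syn{X'})$ to be a subset whenever $\syn X\neq\syn{X'}$, because $\HtoL(\{\syn X,\syn X'\})$ must be a subset and not a list with repetitions. Since $\Vin=\emptyset$ and $\Vout=\Vint=\syn{\FMCVen}$, the remaining input/output conditions of a \varalign{} are vacuous or trivial. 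Naturality on each generator then gives \eqref{eq:compatibility-condition}. Finally, I check that the two constructions are mutually inverse, which is clear since both are determined by the same data $(\HtoL,\LtoH)$.

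I expect the main obstacle to be this bookkeeping of subsets versus lists: matching the ordering and concatenation conventions so that $\HtoL$ on a subset agrees with the concatenation in \eqref{eq:tau-factors}, and extracting disjointness in the converse direction. If the direct argument fails, I would instead follow the paper's convention that \eqref{eq:CF-abs-query-map} presupposes the subset reading, and take disjointness as part of the data, as in Remark \ref{rem:var-alignment-ref}.
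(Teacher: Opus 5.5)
Your proposal is correct and follows the paper's proof. The paper defines the query map by \eqref{eq:CF-abs-query-map}, notes that naturality on each generating query is exactly \eqref{eq:compatibility-condition}, and conversely reads off the variable alignment from the fact that $\HtoL$ sends subsets to subsets; your argument with an output block $\{\syn{X},\syn{X'}\}$ simply makes explicit the disjointness that the paper treats as immediate. One small slip: the transformation runs $\semMLfunc \circ \HtoL \Rightarrow \semMHfunc$, with components $\HtoL(X) \to X$, not in the direction you wrote.
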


\begin{proof} 
Given a \CF{} abstraction, extend $\HtoL, \LtoH$ to products of variables as usual and define a map on queries via \eqref{eq:CF-abs-query-map}. Then \eqref{eq:compatibility-condition} states precisely that consistency (naturality) holds. Conversely, given such a \qdown, consistency ensures \eqref{eq:compatibility-condition} holds. By definition $\HtoL$ sends subsets of variables to subsets of variables, making $\dvaralignpair{\HtoL}{\LtoH}$ a \varalign.
\end{proof}

Again, there is a corresponding form of concrete \qup{}, here applying to concrete counterfactual queries \rlb{$\Lthree(\model{\CFVen})$}. 

\begin{corollary}
Any \CF{} abstraction $\modelML \to \modelMH$
with surjective $\LtoH$ determines \rlc{an}\ \qup{} \rlb{$\qupabsshort{\SigModel{\Lthree(\model{\FMCVen}_L)}{\modelML}}{\SigModel{\Lthree(\model{\FMCVen}_H)}{\modelMH}}$}\ via:
\begin{equation} \label{eq:conc-L3}
\rlb{\LtoHquery(\Lthreequery{\HtoL(Y_1)|_{\CFDostates_1},\dots,\HtoL(Y_m)|_{\CFDostates_m})}
:= 
\Lthreequery{Y_1|_{\LtoH \circ \CFDostates_1},\dots,Y_m|_{\LtoH \circ \CFDostates_m}}}
\end{equation}
\end{corollary}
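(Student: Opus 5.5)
The plan is to obtain this as a direct instance of Proposition~\ref{prop:Exact-trans-from-ref}, as in the earlier corollaries for constructive and interchange abstraction. By Proposition~\ref{prop:consistent-alignment}, the given \CF{} abstraction is a \qdown{} $\qdownabs{\HtoL}{\LtoH}{\SigModel{\absLthree(\syn{\FMCVen_L})}{\modelML}}{\SigModel{\absLthree(\syn{\FMCVen_H})}{\modelMH}}$, with $\HtoL$ on queries given by \eqref{eq:CF-abs-query-map}, and $\LtoH$ is surjective by assumption. For each abstract \CF{} query $\syn{Q}=\Lthreequery{Y_1|_{\CFOpenSet_1},\dots,Y_m|_{\CFOpenSet_m}}$ I take the chosen input to be its full input $\syn{Z_Q} := \CFOpenSet_1 \otimes \dots \otimes \CFOpenSet_m$. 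Then $\syn{Z}_{\HtoL(\syn{Q})} = \HtoL(\CFOpenSet_1)\otimes\dots\otimes\HtoL(\CFOpenSet_m)$, as required by the proposition, since $\HtoL$ on lists of types is concatenation.

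Next I identify the signatures $\absLthree(\syn{\FMCVen})^{\Sh}$ with $\Lthree(\model{\FMCVen})$. A sharp state $s$ of $\CFOpenSet_1\otimes\dots\otimes\CFOpenSet_m$ in a Markov category factorises as $s_1\otimes\dots\otimes s_m$ with each $s_j$ sharp on $\CFOpenSet_j$. In $\FStoch$ this is immediate, since a point of a product is a tuple of points; in general one uses copy and discard to obtain the marginals and determinism of $s$ to show $s$ is their product. By comparing \eqref{eq:l3-expression-abstract} with the definition of concrete \CF{} queries, precomposing the abstract query with $s_1\otimes\dots\otimes s_m$ gives exactly $\Lthreequery{Y_1|_{s_1},\dots,Y_m|_{s_m}}$. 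So $(\syn{Q},s)\mapsto \Lthreequery{Y_1|_{s_1},\dots,Y_m|_{s_m}}$ is a bijection of signatures preserving semantics, at both levels.

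Under this identification, the map produced by Proposition~\ref{prop:Exact-trans-from-ref} sends $(\HtoL(\syn{Q}),s)$ to $(\syn{Q},\LtoH\circ s)$. Because $\LtoH$ on $\HtoL(\CFOpenSet_1)\otimes\dots\otimes\HtoL(\CFOpenSet_m)$ is the monoidal product \eqref{eq:tau-factors}, we have $\LtoH\circ s = (\LtoH\circ s_1)\otimes\dots\otimes(\LtoH\circ s_m)$, which is exactly \eqref{eq:conc-L3}. Consistency and surjectivity of $\LtoHquery$ then follow from the proposition; surjectivity uses that every high-level sharp state $\CFDostates_j$ has the form $\LtoH\circ s_j$.

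The main obstacle is the identification step. It needs the factorisation of sharp states on products, which requires a short diagrammatic argument outside $\FStoch$. It also needs care that the abstract query's inputs are ordered in the same way as the concrete query's states $\CFDostates_j$ when $\CFOpenSet_j$ overlap. If that bookkeeping proves awkward, I would instead verify consistency directly: precompose the naturality equation \eqref{eq:compatibility-condition} with $s_1\otimes\dots\otimes s_m$, and use determinism of $\LtoH$ together with \eqref{eq:tau-factors} to distribute $\LtoH$ over the factors.
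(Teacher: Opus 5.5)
Your proposal is correct and matches the paper's proof, which gives both of your routes. It derives consistency \eqref{eq:consistency} directly from \eqref{eq:compatibility-condition} by precomposing with sharp states, or alternatively applies Proposition~\ref{prop:Exact-trans-from-ref} with $\syn{Z_Q}$ the product of the $\CFOpenSet_j$; you lead with the latter and spell out the sharp-state factorisation that the paper leaves implicit.
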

\begin{proof}
The consistency condition for such \rlc{an}\ \qup{} is: 
\begin{equation} \label{eq:consistency}
\rlb{
\input{compat-concrete_b_no_H.tikz}
%\tikzfig{compat-concrete_b} %% With superscript H everywhere
}
%\tikzfig{compat-concrete}
\end{equation}
which follows from \eqref{eq:compatibility-condition}. Alternatively, apply Proposition \ref{prop:Exact-trans-from-ref} with 
%$\Stchoice=\Sh$, 
\rlb{$\SigQH=\absLthree(\FMCVen_H)$, $\SigQL=\absLthree(\FMCVen_L)$}. For each such query $\syn{Q}$ as in \rl{\eqref{eq:l3-expression-abstract}}, \rlb{$\syn{Z_Q}$}\ is 
\rlb{$\syn{\CFOpenSet_1 \otimes \dots \otimes \CFOpenSet_n}$.}\ 
\end{proof}

As for other notions in the literature, counterfactual abstractions have previously been introduced as concrete \qup{} \cite{xia2024neural}. By a collection of \rlb{\emph{concrete \CF{} queries}}\ on \rlb{$\model{\CFVen}$}\ we mean a subset of queries \rlb{$\SigQ \subseteq \Lthree(\model{\CFVen})$}\ with types given by the variables \rlb{$\syn{\CFVen}$}. Given \rlb{concrete \CF{} queries $\SigQ$ on $\model{\FMCVen}_H$}\ 
\rl{and a \varalign\ \rlb{$\dvaralign{\HtoL}{\LtoH}{\model{\FMCVen}_L}{\model{\FMCVen}_H}$}\ we say that $\modelML, \modelMH$ satisfy \emph{$\SigQ$-$\LtoH$ consistency} }\  \cite[Def 7]{xia2024neural} when \eqref{eq:consistency} holds, for all queries \rlb{$\Lthreequery{Y_1|_{\LtoH \circ \CFDostates_1},\dots,Y_m|_{\LtoH \circ \CFDostates_m}}$}\ in $\SigQ$. Immediately we then have the following. 

\begin{corollary} 
\rlc{The FCMs}\ 
\rl{$\modelML, \modelMH$ satisfy $\SigQ$-$\LtoH$ consistency iff the data determines}\ \rlc{an}\ \qup{} 
\rlb{$\qupabs{\HtoL}{\LtoH}{\LtoHquery}{\SigModel{\Lthree(\model{\FMCVen}_L)}{\modelML}}{\SigModel{\rl{\sigQH}}{\modelMH}}$}\ via \eqref{eq:conc-L3},  
for all queries \rlb{$\Lthreequery{Y_1|_{\LtoH \circ \CFDostates_1},\dots,Y_m|_{\LtoH \circ \CFDostates_m}}$}\ in $\sigQH$. 
\end{corollary}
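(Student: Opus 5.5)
The plan is to unpack both sides of the equivalence and observe that the defining conditions coincide term by term, using the definition of \qup{} (Definition~\ref{def:up-abstraction}) and of $\SigQ$-$\LtoH$ consistency.

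First fix the data. We have a \varalign{} $(\HtoL,\LtoH)$ between $\model{\FMCVen}_L$ and $\model{\FMCVen}_H$. Extending $\HtoL$ and $\LtoH$ to lists of variables as in \eqref{eq:tau-factors} gives a \typedown{} between the query models on $\Lthree(\model{\FMCVen}_L)$ and $\sigQH$, since both signatures have the endogenous variables as types. Next define the partial map $\LtoHquery \colon \Lthree(\model{\FMCVen}_L) \pto \sigQH$ by \eqref{eq:conc-L3}. It is defined on a low-level concrete query $\Lthreequery{\HtoL(Y_1)|_{\CFDostates_1},\dots,\HtoL(Y_m)|_{\CFDostates_m}}$ exactly when the corresponding high-level query $\Lthreequery{Y_1|_{\LtoH \circ \CFDostates_1},\dots,Y_m|_{\LtoH \circ \CFDostates_m}}$ lies in $\sigQH$. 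Each $\CFDostates_j$ is then a sharp state of $\HtoL(\CFOpenSet_j)$. The type of the low-level query is $I \to \HtoL(Y_1)\otimes\dots\otimes\HtoL(Y_m)$, which is the image under $\HtoL$ of the high-level type, as \eqref{eq:ex-trans-map} requires.

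Two checks remain. The first is well-definedness. Distinct choices of $(Y_j)$ give distinct $\HtoL(Y_j)$, because the $\HtoL(\syn{X})$ are disjoint and nonempty. So the $Y_j$ are recovered from the low-level query, and the image $\LtoH\circ\CFDostates_j$ is a sharp state because $\LtoH$ is deterministic.

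The second is partial surjectivity onto $\sigQH$, which I expect to be the only nontrivial point. Every high-level query in $\sigQH$ has the form $\Lthreequery{Y_1|_{\CFDostates'_1},\dots}$ with $\CFDostates'_j$ sharp. For this to lie in the image of \eqref{eq:conc-L3}, each $\CFDostates'_j$ must be of the form $\LtoH\circ\CFDostates_j$ for some sharp $\CFDostates_j$. This is exactly surjectivity of $\LtoH$ in the sense defined before Definition~\ref{def:query-ref}. It is automatic in $\FStoch$, and should otherwise be assumed as in the preceding corollary. I would state this assumption explicitly, or read the statement as quantifying over queries in $\sigQH$ that are written as $\LtoH\circ\CFDostates_j$. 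Note that the statement already phrases $\sigQH$'s queries in that form.

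With the map in place, the consistency condition \eqref{eq:consistency-condition} for the pair $(\QL,\LtoHquery(\QL))$ is literally \eqref{eq:consistency}. The output $\LtoH$ is applied to $\HtoL(Y_1)\otimes\dots\otimes\HtoL(Y_m)$, and since the queries have no inputs, the input $\LtoH$ is trivial. Hence the \qup{} axioms hold for all defined queries iff \eqref{eq:consistency} holds for every query of $\sigQH$, which is by definition $\SigQ$-$\LtoH$ consistency. Conversely, given such a \qup{}, its consistency equations restricted to the queries of $\sigQH$ give \eqref{eq:consistency}.
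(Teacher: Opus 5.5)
Your proof is correct and matches the paper, which states the corollary as immediate: once the query map is fixed by \eqref{eq:conc-L3}, the consistency condition of an upward abstraction for these input-free queries is literally \eqref{eq:consistency}, and that is the definition of $\SigQ$-$\LtoH$ consistency. Your explicit checks of well-definedness and of surjectivity onto $\sigQH$ (via surjectivity of $\LtoH$, or by reading the queries of $\sigQH$ as those written with states $\LtoH\circ\CFDostates_j$) only spell out what the paper leaves implicit. One caveat: nonemptiness of the $\HtoL(\syn{X})$ is not part of the definition of a variable alignment, so it is an assumption you share tacitly with the paper rather than a consequence of the setup.
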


\subsection{Distributed causal abstraction} \label{sec:distributed-abstraction}

While notions of abstraction such as constructive abstraction are `local' in that they associate variables $\syn{V}$ at the high-level with \emph{disjoint} subsets $\HtoL(\syn{V})$ at the low-level, \rl{via a \varalign, it has been argued that more general kinds of abstractions without this feature, which we may call `distributed', may be more practical, e.g. in ML \cite{geiger2024finding, geiger2023causal}. 
Indeed, exact transformations and strong causal abstraction from Section \ref{subsec:exact-intv} generically allow for such `distributedness'. In this section we discuss a specific class of distributed abstractions, inspired by \cite{geiger2024finding, geiger2023causal}, making use of the following way to define a causal model.}\

\subsubsection{Model induction}

Throughout this section, $\modelM$ will always refer to a deterministic open causal model over variables $\modelV$ with $\Vout = \Vnin$, such that each variable $X$ has at least one normalised state in $\catC$.

\begin{definition}[Models from \rl{deterministic channels and vice versa}]  
We define the \emph{parallel mechanism channel} $\parmechdiag_\modelM \colon V \to V$ as: 
\begin{equation} \label{eq:par-mech}
% \tikzfig{par-mech-2-conc}
\input{par-mech-2-conc-nodots.tikz}
\end{equation}
where $\syn{V} = \syn{X_1,\dots,X_n}$. Here for each non-input variable $\syn{X}$ we set 
\begin{equation}
\input{FX-only-nodots.tikz} \ \ \ := \ \ \  
 \input{FX.tikz} \label{eq:F-mechanism}
\end{equation} 
and for each input variable $\syn{X}$ we define: 
\begin{equation} 
\input{FX-only-nodots.tikz} \ \ \ := \ \ \  \input{mech-input-only.tikz} \label{eq:mechanism-inputs}
\end{equation}
Conversely, let $\parmechdiag \colon V \to V$ be a deterministic channel in $\catC$ and write $\parmechdiag_X$ for each marginal on $X$. We define a signature $\Sig_\parmechdiag$ and model $\modelfrom{\parmechdiag}$ as follows. We declare that $\syn{X}$ is an input variable when \eqref{eq:mechanism-inputs} holds. Otherwise we define $\Pa(\syn{X})$ to be the least subset of variables, and $c_X$ the corresponding channel such that $\parmechdiag_X$ factors as in \eqref{eq:F-mechanism}, which exists by \cite[Lemma 112]{lorenz2023causal}. We then include a generator $c_\syn{X} \colon \syn{\Pa(X) \to X}$ with this representation. 
 We call the channel $\parmechdiag$ \emph{acyclic} when $\Sig_\parmechdiag$ defines a valid causal signature, i.e.~these parent sets $\Pa(X)$ induce a directed graph on $\syn{V}$ which is acyclic. 
\end{definition}

It is easy to see that if $\modelM$ is faithful\footnote{\rl{In the sense of mechanism-faithfulness of \cite{lorenz2023causal}: the parental sets $Pa(X)$ of $\modelM$ are already the minimal ones such that \eqref{eq:F-mechanism} holds, i.e. $X$ depends through $c_X$ on every of its parent variables.}} as an open deterministic causal model then $\parmechdiag_\modelM$ is acyclic with $\modelfrom{\parmechdiag_\modelM} = \modelM$.  The above construction lets us conversely define a causal model $\modelfrom{\parmechdiag}$ from any such acyclic channel $\parmechdiag$ on $\varVconc$, with \rl{$\parmechdiag_{\modelfrom{\parmechdiag}} = \parmechdiag$}. 

A useful fact connecting the input-output and parallel mechanism views of a model is the following.
\begin{lemma} \label{lem:fixpoints}
For any sharp state \rl{$v = (i, o)$ of $\Vinconc \otimes \Vninconc$ we have 
\begin{equation}
\input{io-view-vs-fixpoints.tikz}
\end{equation}
}
\end{lemma}
\begin{proof} 
Taking marginals, the latter holds iff $v_X = c_X \circ v|_\Pa(X)$ for all variables $\syn{X} \in \Vnin$. By carrying out induction over the longest chain of ancestors of a variable, one can see that this is equivalent to \rl{$o = \modelMio \circ i$}.  
\end{proof}

\begin{example} \label{ex:II-PMs} 
It is straightforward to verify that Do-interventions and interchange interventions alter $\parmechdiag_\modelM$ as follows.
\begin{equation} \label{eq:II-diag}
% \tikzfig{par-mech-do}
\input{par-mech-do-nodots.tikz}
\qquad \qquad \qquad 
% \tikzfig{par-mech-II-nodots}
\input{par-mech-II-nodots.tikz}
\end{equation}
\end{example}

We can now make use of the construction above to define the following.

\begin{definition}[Model Induction] \label{def:model-induction}
Let $\modelM$ be an open deterministic causal model over \rl{variables $\modelV$}. 
Let \rl{$\model{W}$ be a further set of concrete}\ variables and $\modin \colon V \simeq W$ a deterministic isomorphism. We say that $\modin$ \emph{respects} $\modelM$ when the channel: 
\begin{equation} \label{eq:induced-model}
% \tikzfig{induced-model}
\input{induced-model-nodots.tikz}
\end{equation}
is acyclic, with inputs $\Win$,\footnote{\rl{Recall that as concrete variables, $\model{W}$ comes with subset ${W^{\text{in}}}$ and the condition here is that the inputs as defined via \eqref{eq:mechanism-inputs} coincide with that set ${W^{\text{in}}}$.}} and moreover: 
\begin{equation} \label{eq:respects-inputs}
\input{modelin.tikz}
\end{equation}
for isomorphisms $\modinin$, $\modinnin$. In this case we define the \emph{induced} causal model $\inmod{\modelM} := \modelfrom{\parmechdiag_{\inmod{\modelM}}}$. 

Similarly, given a set of interventions $\Intset$ on $\modelM$, we say that $\modin$ \emph{respects} $\SigModel{\Intset}{\modelM}$ when it respects $\transform{\inti}{\modelM}$ for each $\inti \in \Intset$. In this case for each $\inti \in \SigI$ we define the induced intervention $\inducedint{\inti}$ on $\inmod{
\modelM}$ via $\transform{\inducedint{\inti}}{\inmod{\modelM}} := \inmod{\transform{\inti}{\modelM}}$. 
 We denote the set of such interventions on $\inmod{\modelM}$ by $\inmod{\Intset}$. 
\end{definition}

Now let us observe that such model induction does indeed yield a special case of \rlc{\qupshort}. Recall that any interventions $\Intset$ on $\modelM$ form a \rlb{signature of queries}\ $\Intsetio$ with types \rl{$\syn{\invar, \outvar}$}\ and a query $\syn{\inti \colon \invar \to \outvar}$ for each $\inti \in \Intset$. In this section \rlc{we}\ always assume each such set $\Intset$ contains the trivial intervention $\noint$ corresponding to $\modelM$ itself.  
Recall that, as in the proof of Proposition \ref{prop:ET-equiv-qup}, for \qup s with respect to query signatures $\Intsetio$ there is only one possible mapping $\HtoL$ on types ($\HtoL(\syn{in_H}) \rlc{=} \syn{in_L}$ and $\HtoL(\syn{out_H}) \rlc{=} \syn{out_L}$).  
We will hence drop the corresponding $\HtoL$ throughout the rest of this section and refer to data of \rlc{an}\ \qup\ by just $(\LtoHquery, \LtoH)$. 

Finally, we\ will assume that the semantics category $\catC$ has \emph{enough states}, meaning that whenever we have $f \circ x = g \circ x$ for all sharp states $x$ we have $f = g$; this is true for $\FStoch$.

\begin{proposition} \label{prop:Model-Induction}
Let $\modelM$ be an open deterministic causal model \rl{over $\modelV$}\ with set of interventions $\Intset$. Suppose that $\modin \colon V \to W$ is a deterministic isomorphism which respects $\SigModel{\Intset}{\modelM}$. Then 
\[
\qupabslarge{}{\modin}{\inducedintmap}{\SigModel{\IntQsimple{\Intset}}{\modelM}}{\SigModel{\IntQsimple{\inducedintmap(\SigI)}}{\inmod{\modelM}}}
\]
defines a (bijective) \qup.  
Explicitly, for all $\inti \in \SigI$ we have: 
\begin{equation} \label{eq:ex-trans-induced}
\input{induced-ex-trans.tikz}
\end{equation}
In particular $\modinnin \circ \modelMio = \io{\inmod{\modelM}} \circ \modinin$.
\end{proposition}

\begin{proof} Appendix \ref{app:proofs}.
\end{proof}

\subsubsection{Composing induction and abstraction} \label{subsubsec:isoCCA} 
\rl{Seeing as \qup{}s are closed under composition, the above discussed notion of model induction yields a natural construction: 
consider a model $\modelML$ over $\modelV_L$ and a deterministic isomorphism, which induces an `intermediate' model, which in turn is abstracted by a higher-level model $\modelMH$.}\ 
The resulting composite can yield a `distributed' abstraction relation between $\modelML$ and $\modelMH$. 
In more detail, suppose we have:
\begin{itemize}
\item A `low-level' model $\Mone=\modelfrom{\parmechdiag_{\Mone}}$ \rl{on $\modelVone$;} 
\item A determinstic isomorphism $\modin \colon \Vone \to \Vtwo$ which respects $\Mone$;
\item A set of interventions $\Intwo$ on $\Mtwo := \inmod{\Mone}$ respected by $\modin^{-1}$;
\item \rlc{An}\ \qup{} $\qupabs{}{\LtoH}{\LtoHquery}{\SigModel{\IntQsimple{\Intwo}}{\Mtwo}}{\SigModel{\IntQsimple{\Inthree}}{\Mthree}}$ to a model $\Mthree$ \rl{on $\modelVthree$}.
\end{itemize}
Then defining $\Inone := \inducedintmap^{-1}(\Intwo)$ on $\Mone$ we have a sequence of \qup s:\ 
\begin{equation} \label{eq:induction-circ-u-abs}
	% \tikzfig{exact-trans-pair}
	\input{exact-trans-pair-new.tikz}
\end{equation}
which composes to yield an overall  \qup{} \rlb{$(\LtoH \circ \modin, \LtoHquery \circ \inducedintmap)$}\ from $\Mone$ to $\Mthree$. Explicitly, for each intervention $\inti\in \Intwo$ on which $\LtoHquery$ is defined we have the following:

\begin{equation} \label{eq:compose-exact-trans-explicit}
\input{composing-exact-trans_b.tikz}
\end{equation}

This construction is of particular interest when the \qup{} \rlb{$(\LtoH, \LtoHquery)$}\ arises from a constructive or interchange abstraction. 
This yields an overall notion which we discuss next, implicitly considered in the works of \cite{geiger2024finding, geiger2023causal},\footnote{\rl{They introduced a `distributed version' of interchange interventions, though without a derivation of its form and without a definition of an associated notion of abstraction.}}. 

\begin{definition}[Iso-constructive/interchange abstraction] \label{def:iso-CCA} %\isoCCAcaps
	By an \emph{iso-constructive} (resp.~\emph{iso-interchange}) \emph{abstraction} we mean the above setup as in \eqref{eq:induction-circ-u-abs}, where $(\LtoHquery,\LtoH)$ is induced by a constructive (resp. interchange) abstraction.  
\end{definition}

\paragraph{Iso-constructive abstraction.}
It is worth spelling out what \isoCCA\ amounts to in more detail. 
Let $(\HtoL, \LtoH)$ be the data of the \CCA\ that induces $(\LtoHquery,\LtoH)$, $S \subseteq \Vthree$ and $p$ a sharp state of $\HtoL(S)$.  
The composite \qup{} then relates $\Do$-interventions $\Do(S = \LtoH \circ p)$ on $\Mthree$, 
via corresponding $\Do$-interventions $\Do(\HtoL(S)=p)$ on $\Mtwo$, 
to interventions on $\Mone$ of the form  $\inducedintmap^{-1}(\Do(\HtoL(S)=p))$, which we refer to as \emph{distributed Do-interventions}, defined generally for any $\Stwo \subseteq \Vtwo$: 
\[
	\DDo(\Stwo=\stwo,\rho) := \inducedintmap^{-1}(\Do(\Stwo=\stwo)) 
\] 
Thus we are interested in distributed Do-interventions on $\Mone$ with $S'=\HtoL(S)$. One may compute (proven in Appendix \ref{app:proofs}) that the parallel mechanism channels for these intervened low-level models are as follows. 
\begin{equation} \label{eq:DDo-pic}
\input{DI-relabel2_b.tikz}
\end{equation}

\isoCCAcaps s yield a well-characterised class of exact transformations with genuine `distributedness' in general.\footnote{This also motivates the general notion of exact transformation (Def.~\ref{def:exact-trans}) going beyond mere do-interventions.}
In fact by construction it is of the strong kind. 
\begin{proposition}\label{prop:isoCCA-is-strongCA}
	For an \isoCCA\ the overall \qup{} \rlb{$(\LtoH \circ \modin, \LtoHquery \circ \inducedintmap)$}\ from $\Mone$ to $\Mthree$ is a \strongCA\ (see Def.~\ref{def:strong-CA}).   
\end{proposition}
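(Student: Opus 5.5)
We need to check the three conditions of Definition~\ref{def:strong-CA} for the composite $(\LtoH \circ \modin, \LtoHquery \circ \inducedintmap)$. It is already an exact transformation: by Proposition~\ref{prop:Model-Induction} the first factor is a \qup{} between $\SigModel{\IntQsimple{\Inone}}{\Mone}$ and $\SigModel{\IntQsimple{\Intwo}}{\Mtwo}$. By Corollary~\ref{cor:CCA-as-qup} the second factor is a \qup{} of the form $\Intsetio$ with only two types. Since \qup{}s are closed under composition, the composite is a \qup{} between two-type intervention signatures, and Proposition~\ref{prop:ET-equiv-qup} identifies it with an exact transformation. Its maps are epic and deterministic because $\modin$ is a deterministic isomorphism and $\LtoH$ is epic and deterministic.

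Condition 1, $\VninH \subseteq \VoutH$ for $\Mthree$, is part of the hypotheses of a constructive abstraction (Definition~\ref{def:CCA}), so it transfers directly.

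For condition 2 we must show the high-level intervention set $\Inthree$ is all of $\Do(\modelVthree)$. I will interpret the setup so that $\Inthree$ is the image of $\LtoHquery$ as given in Corollary~\ref{cor:CCA-as-qup}, and $\Intwo$ contains the low-level Do-interventions $\Do(\HtoL(S)=p)$. Surjectivity of $\LtoH$ then shows every $\Do(S=s)$ with $\syn{S}\subseteq \Vint$ has the form $\Do(S=\LtoH\circ p)$. The main care needed here is bookkeeping. The second factor uses queries of the form $\Do(\modelV)^{\mathsf{io}}$, and its domain restricted to $\Intwo$ must still hit everything. This holds because $\Intwo$ is assumed to consist of Do-interventions respected by $\modin^{-1}$, namely those of the form $\DDo$ pulled back. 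If needed, I will state explicitly that $\Intwo \supseteq \{\Do(\HtoL(S)=p)\}$, which is implicit in saying the \qup{} is induced by the constructive abstraction. I expect this to be the only real obstacle.

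For condition 3, the trivial intervention on $\Mthree$ is $\Do(\emptyset)$, which is the image under $\LtoHquery$ of $\Do(\HtoL(\emptyset))=\Do(\emptyset)=\noint$ on $\Mtwo$. On the other side, $\inducedintmap(\noint_{\Mone})=\noint_{\Mtwo}$ because $\inmod{\Mone}=\Mtwo$ by definition. The standing assumption that each intervention set contains $\noint$ guarantees membership. Composing gives $(\LtoHquery\circ\inducedintmap)(\nointL)=\nointH$, which completes the proof.
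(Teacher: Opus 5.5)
Your proposal is correct and matches the paper, which gives no separate proof and only says the result holds ``by construction''. Conditions 1 and 2 are inherited from the second factor, the u-abstraction induced by the constructive abstraction: Corollary~\ref{cor:CCA-as-qup} already notes this factor is a strong causal abstraction, and since $\tau$ is surjective, $\omega$ hits every $\mathrm{Do}(S=s)$ on $M_3$. Condition 3 follows because $\tilde\rho$ sends the trivial intervention on $M_1$ to the trivial one on $M_2 = M_1^\rho$. Your explicit assumption that $I_2$ contains all $\mathrm{Do}(\pi(S)=p)$ is the reading of ``induced by a constructive abstraction'' that the paper itself uses when it describes the composite.
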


\paragraph{Iso-interchange abstraction.} Similarly for \isointabs, let $\dvaralignpair{\HtoL}{\LtoH}$ be the \varalign\ of the \restrictedCCA\ that induces \rlb{$(\LtoH, \LtoHquery)$}, let $S_1,\dots,S_n$ be disjoint subsets of $\Vthree$ and $\{x_1,\dots,x_n\}$ input states of $\Vone^{\text{in}}$. 
The composite \qup{} then relates interchange interventions $\II((S_j,\LtoH \circ \modinin \circ x_j))^n_{j=1}$ on  $\Mthree$, 
via interchange interventions $\II((\HtoL(S_j), \modinin \circ x_j))^n_{j=1}$ on $\Mtwo$ 
to interventions on $\Mone$ of the form $\inducedintmap^{-1}(\II((\HtoL(S_j), \rlb{ \modinin \circ} x_j))^n_{j=1})$, which we refer to as \emph{distributed interchange interventions}, defined generally for any disjoint subsets $Y_1,\dots,Y_n$ of $\Vtwo$:
\[
%\DII((Y_j,x_j)^n_{j=1},\modin) := \inducedintmap^{-1}(\II(Y_j, x_j)^n_{j=1})
\DII((Y_j,x_j)^n_{j=1},\modin) := \inducedintmap^{-1}(\II(Y_j,\modinin \circ x_j)^n_{j=1})
\]
Thus we are interested in distributed interchange interventions on $\modelL$ with the $Y_j = \HtoL(S_j)$. Setting $S:=\cup_j S_j$, one may again compute the parallel mechanism channels for these intervened low-level models to be:
\begin{equation} \label{eq:DII-pic}
\input{DII-relabel_b.tikz}
\end{equation}
which we prove in Appendix \ref{app:proofs}.

\paragraph{Consistency for iso-abstractions.}
\rl{Finally, observe that above we showed the models yielded by low-level `distributed' interventions in terms of parallel mechanism channels, but did not spell out the consistency condition for the overall \qup. 
The latter refers to models as input-output processes $\Vin$ to $\Vout$, while the former can only be made explicit in terms of $\parmechdiag_\modelM$ channels. The relation between the two would involve a quantification over fixed points of $\parmechdiag_\modelM$ channels as in Lemma~\ref{lem:fixpoints}. 

There actually is a way to render the relation between the two views `structural' allowing for a diagrammatic consistency condition of, e.g., \isoCCA s, without explicit reference to fixed points. Since it however needs a little more formalism we present this in App.~\ref{app:abs-in-pm-via-tracedC}.

In any case, given candidate data for an \isoCCA\ we can determine whether it holds as follows. 
For each high-level intervention $\LtoHquery(\inti) = \Do(S = \LtoH \circ p)$ we first consider $\io{\Mthree_{\LtoHquery(\inti)}}$ and the composite on the RHS of \eqref{eq:compose-exact-trans-explicit}. 
On the other hand, with $\inducedintmap^{-1}(\inti) = \inducedintmap^{-1}(\Do(\HtoL(S)=p))$ we can compute the induced model $\transform{\inducedintmap^{-1}(\inti)}{\Mone}$ via the channel \eqref{eq:DDo-pic} and reconstruct $\io{\transform{\inducedintmap^{-1}(\inti)}{\Mone}}$ and the composite on the LHS of \eqref{eq:compose-exact-trans-explicit}, comparing it against the RHS on various inputs. 
An analogous procedure works for \isointabs.}

\section{\strucdowncaps} \label{sec:structure-refinements}

So far we have considered abstraction relations at the level of queries. The categorical perspective however also suggests \rlb{to study}\ stronger notions of abstraction that apply at the \emph{structure} level of compositional models, i.e.~in terms of their individual components. Throughout this section we consider all compositional models \rlc{to be models in $ \catC$ and}\ to come with given \rl{structure \emph{and} query}\ signatures \rlc{$(\modelM,\SigS, \SigQ, \catC)$, where}\ all queries are abstract (Def.~\ref{def:abstract-query}).

\begin{definition}[\strucdowncaps]  \label{def:complevel}
A \emph{\strucdownfull}: 
\[
\strucdownabslarge{\HtoL}{\HtoLS}{\LtoH}{(\modelML,\SigS_L,\sigQ_L)}{(\modelMH,\SigS_H,\sigQ_H)}
\]
is given by functors $\HtoL, \HtoLS$ 
and an epic natural transformation $\LtoH$ as below, such that $\HtoL$ maps queries \rl{in $\sigQ_H$ to queries in $\sigQ_L$}\ and the upper square commutes:
\begin{equation} \label{eq:structure-comm-square}
% \tikzfig{mech-ref}
\input{mech-ref-swap_b.tikz}
\end{equation}
A \strucdown{} is \emph{strict} when $\LtoH=\id{}$. 
\end{definition}

As the terminology suggests, any such data indeed is a special case of an abstraction at the query \rlb{level, too}, by the following straightforward result.

\rl{
\begin{proposition} \label{prop:c-level-is-d-abs}
	Any \strucdownfull~\eqref{eq:structure-comm-square}  
	induces a \qdown{} $\qdownabs{\HtoL}{\LtoH'}{\SigModel{\sigQL}{\modelML}}{\SigModel{\sigQH}{\modelMH}}$ where $\LtoH' := {\LtoHS} \circ 1_{\qabsMHfunc}$. 
\end{proposition}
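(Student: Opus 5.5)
The plan is to obtain the query-level natural transformation by whiskering. The component-level transformation is pre-composed with the abstract-query functor $\qabsMHfunc$, and the upper commuting square of \eqref{eq:structure-comm-square} is then used to identify the domain of the whiskered transformation with the low-level query semantics.

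We recall the data of \eqref{eq:structure-comm-square}. There are functors $\HtoL \colon \strucQH \to \strucQL$ and $\HtoLS \colon \strucSH \to \strucSL$. The upper square states $\qabsMLfunc \circ \HtoL = \HtoLS \circ \qabsMHfunc$. Finally, $\LtoHS \colon \semcompM{\modelML}{-} \circ \HtoLS \Rightarrow \semcompM{\modelMH}{-}$ is an epic monoidal natural transformation, with deterministic components in the Markov case.

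\textbf{Step 1: domain and codomain of $\LtoH'$.} Whiskering gives a natural transformation
\[
\LtoH' := \LtoHS \circ 1_{\qabsMHfunc} \colon \semcompM{\modelML}{-} \circ \HtoLS \circ \qabsMHfunc \Rightarrow \semcompM{\modelMH}{-} \circ \qabsMHfunc .
\]
Its codomain is $\semMfunc_{\modelMH}$, because queries are abstract (Def.~\ref{def:abstract-query}). For the domain, the upper square rewrites it as $\semcompM{\modelML}{-} \circ \qabsMLfunc \circ \HtoL$, and abstractness of the low-level queries turns this into $\semMfunc_{\modelML} \circ \HtoL$. So $\LtoH' \colon \semMfunc_{\modelML} \circ \HtoL \Rightarrow \semMfunc_{\modelMH}$, which is exactly the shape required in Definition~\ref{def:query-ref}.

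\textbf{Step 2: naturality.} Naturality of $\LtoH'$ is immediate, since whiskering a natural transformation by a functor preserves naturality. Explicitly, for a query $\syn{Q}$ we apply naturality of $\LtoHS$ to the morphism $\qabs{\syn{Q}}_{\modelMH}$ of $\strucSH$.

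\textbf{Step 3: monoidality and determinism.} Both are inherited from $\LtoHS$. This uses that $\qabsMHfunc$ is strong monoidal, so that $\LtoH'_{\syn{X}\otimes\syn{Y}} = \LtoHS_{\qabs{\syn{X}}\otimes\qabs{\syn{Y}}}$ factors as a tensor.

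\textbf{Step 4: the query condition.} The requirement that $\HtoL$ sends queries of $\sigQH$ to queries of $\sigQL$ is part of the hypotheses of Definition~\ref{def:complevel}.

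\textbf{Step 5: epicness (main obstacle).} The remaining point, and the only one needing care, is that $\LtoH'$ is epic. Its components are $\LtoH'_{\syn{X}} = \LtoHS_{\qabs{\syn{X}}_{\modelMH}}$, a component of $\LtoHS$ at the object $\qabs{\syn{X}}_{\modelMH}$. That object is a list of variables, so the component is a monoidal product of components of $\LtoHS$. Each such product is epic by the standing assumption on $\LtoHS$, since epic monoidal transformations are required to have epic tensor products as in the footnote to \eqref{eq:tau-factors}. Hence $\LtoH'$ is epic, and the type alignment is given on types by $\syn{X} \mapsto \HtoL(\syn{X})$ with maps $\LtoH'_{\syn{X}}$.
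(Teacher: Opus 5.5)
Your proposal is correct and follows the same route as the paper's proof. You whisker $\LtoHS$ by $\qabsMHfunc$, then use the upper square of \eqref{eq:structure-comm-square} and abstractness of both query sets to identify the domain as $\semMLfunc \circ \HtoL$ and the codomain as $\semMHfunc$. Your Steps 3--5 only make explicit what the paper's one-line proof leaves implicit, and epicness is not really an obstacle, since each component $\LtoH'_{\syn{X}}$ is literally a component of $\LtoHS$.
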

}
\begin{proof} Appendix \ref{app:proofs}. \end{proof} 

Explicitly then, a \strucdown{} assigns to each high-level variable \rl{$\syn{X}$} a list of low-level variables \rl{$\HtoLS(\syn{X})$}\ from $\SigSL$ and a (deterministic) channel $\LtoHS \colon \rl{\HtoLS(X) \to X}$ in $\catC$, extending to lists of high-variables via products as usual \eqref{eq:tau-one}, \eqref{eq:tau-factors}. Next, it sends each high-level component $\syn{f}$ to a low-level diagram $\HtoLS(\syn{f})$ in the low-level structure category $\strucSL$, such that the following holds in $\catC$: 
\[
\input{mechanism-refinement-equation.tikz}
% \tikzfig{mechanism-refinement-equation-nodots} % Optional-nodots
\]
Then $\HtoLS$ automatically extends to a functor, and $\LtoH$ a natural transformation, or in other words the above holds for arbitrary diagrams $\syn{f}$ in $\strucSH$, i.e. composites of components.

We also as usual have a map $\HtoL$ from the high to low level on queries and types sending $\syn{Q} \colon \syn{X} \to \syn{Y}$ to $\HtoL(\syn{Q}) \colon \HtoL(\syn{X}) \to \HtoL(\syn{Y})$. 
Finally, the \rl{component}\ mapping respects the diagrams for these (abstract) queries, in that for each such $\syn{Q}$ we have the following equality in the low-level structure category $\strucSL$: 
\[
\input{implementation-3.tikz}
% \tikzfig{implementation-3-nodots} % Optional-nodots
\]
In summary, as well as mapping high-level queries to low-level queries with consistent representations in $\catC$, a \strucdownfull~ now also comes with a consistent mapping from high-level formal diagrams to low-level formal diagrams. Moreover when applied to the diagram for a high-level query, we obtain an identical low-level diagram \rlb{for}\ its corresponding low-level query.

 \subsection{Strict \strucdown s}

An important special case are strict \strucdown{}s, where the natural transformations are trivial, with $\LtoH=\id{}$. Though we have defined them last, these are perhaps the simplest way in which we can relate a high-level model to a low-level model. Here we simply `unbox', `refine' or `implement' each high-level component directly as a diagram of low-level components (while respecting the diagrams associated with queries).
\[
\input{unboxing-generic_b.tikz}
\]

\begin{examples} \label{ex:strict-a-abs} \
\begin{enumerate}

\item 
By an open causal model $\modelM$ \emph{of} a channel $c$ in $\catC$ we mean one with $\modelMio = c$. 
\[
\input{modelof_b.tikz}
\]
Equivalently, there is a strict \strucdown{} $\strucdownabsshort{\modelM}{\syn{c}}$ preserving the inputs and outputs. Here we consider both models to have a single query $\syn{io} \colon \syn{in} \to \syn{out}$, sent to $\modelMio$ and $c$ respectively, and $\syn{c}$ to have variables the $\syn{X_i}, \syn{Y_j}$, and a single \rl{component}\ $\syn{c}$ which abstractly represents $\syn{io}$. \footnote{Hence $c$ is itself a causal model only if $\modelM$ and $c$ have only \rlb{one}\ output.}

For example, consider a distribution $\omega$ as a state in $\FStoch$. A causal model of $\omega$ in $\catC$ then is one whose network diagram composes to yield $\omega$, i.e.~the usual notion of a `causal model for a distribution'. More concretely, consider a distribution $\omega$ as below where $S$ represents whether someone smokes or not, $L$ whether they develop lung cancer or not and $A$ their age. The left-hand causal model below, with an additional latent variable $B$ for socio-economic backgrond conditions, would be a causal model of $\omega$ when the following equality holds (see \cite{CorreaEtAl_2020_CalculusForStochasticInterventions, lorenz2023causal}).

\[
%\tikzfig{causal-model-of}
\input{causal-model-of_b.tikz}
\]

\item \label{enum:FCM-refine}
Strict \strucdown{}s can also capture the way in which we can see an FCM as `refining' a causal model. \rl{Often an FCM $\modelF$ with endogenous and exogenous variables \rlb{$\FMCVen = \{X_i\}^n_{i=1}$}\ and $U = \{U_i\}^n_{i=1}$, respectively (Def.~\ref{def:FCM}), is considered to represent the (candidate) deterministic mechanisms $f_i$ that underlie a (non-functional) causal model $\modelM$ over the variables \rlb{$\FMCVen$}, where all randomness comes from our ignorance about the variables $U$, captured in the respective `noise' distributions $\lambda_i$.   
Indeed, given any FCM $\modelF$ we can define a causal model \rlb{$\modelF|_{\FMCVen}$ over just $\FMCVen$ by assigning each $X_i \in \FMCVen$}\ the mechanism $c_i$ defined by:
\begin{equation} \label{eq:deSCM}
%\tikzfig{de-SCM}
\input{de-SCM_b.tikz}
\end{equation} 
where $\Pa'(X_i)=\Pa(X_i) \setminus \{U_i\}$ with $\Pa(X_i)$ the parental sets with respect to $\modelF$.  
Conversely, given a (non-functional) causal model $\modelM$ over \rlb{$\FMCVen$}\ there are in general many ways to refine it to an FCM $\modelF$ such that \rlb{$\modelF|_{\FMCVen} = \modelM$}.  
One then obtains a strict \strucdown{} \rlb{$\strucdownabsshort{(\modelF,\SigS_{\modelF},\Openqueries(\syn{\FMCVen}))}{(\modelM,\SigS_{\modelM},\Openqueries(\syn{\FMCVen}))}$}, 
where we identify the $\syn{X_i}$ and $X_i$ in both models and map each $\syn{c_i}$ to the composite of $\syn{f_i}$ and $\syn{\lambda_i}$ as syntax, such that \eqref{eq:deSCM} holds in $\catC$. 
We can depict such a relation as below, where we identify variables with the same names and corresponding dashed boxes.

\begin{equation}
%    \tikzfig{FCMrefinement} 
	\input{FCMrefinement_c.tikz}
\end{equation}
}
\end{enumerate}
\end{examples}

\subsection{\strongCCAtitle} \label{sec:strong-constructive-abstraction}

In principle one may ask when any of the causal \qdown s from Section~\ref{sec:causal-abstraction} more strongly extend to the component level. In this section we will do so for constructive abstractions, calling a constructive causal abstraction which extends to a \strucdown~a \emph{\strongCCA}. This gives us a novel stronger notion of  causal abstraction not so far considered in the literature.\footnote{This also generalises Example~\ref{ex:strict-a-abs} \eqref{enum:FCM-refine}, which is an instance with $\LtoH=\id{}$.}

Explicitly, consider a constructive abstraction as a \qdown{} 
$\qdownabs{\HtoL}{\LtoH}{\SigModel{\Openqueries(\VL)}{\modelML}}{\SigModel{\Openqueries(\VH)}{\modelMH}}$ 
(Thm.~\ref{Thm:CCA-as-refinement}). To form a \strongCCA~would mean that there is a further mapping $\HtoLS$~sending each high-level mechanism to a low-level diagram: 
\begin{equation} \label{eq:strong-CA}
%\tikzfig{strong-CA-1}
% \tikzfig{strong-CA-1_b}
\input{strong-CA-1_b-nodots.tikz}
\end{equation}
such that the following must hold in $\catC$ (as usual using our font convention to omit $\semcompM{\modelML / \modelMH}{-}$):
\begin{equation} \label{eq:strong-CA-nat}
%\tikzfig{strong-CA-2}
%\tikzfig{strong-CA-2_b}
% \tikzfig{strong-CA-2_c}
\input{strong-CA-2_c-nodots.tikz}
\end{equation}
We then freely \rlb{extend}\ \rlc{$\HtoLS$}\ to a functor from high-level diagrams to low-level diagrams, with $\HtoLS(X) = \HtoL(X)$ for all $X \in \VH$.\footnote{Here we use that the variables of $\SigS_L$ are the types of $\Openqueries(\VL)$. Note that as before the naturality condition \eqref{eq:strong-CA-nat} then automatically extends to arbitrary high-level diagrams.} 
Moreover, for every \rl{subset $\syn{S} \subseteq \VninH$}\ we have an equality in $\strucSL$, purely at the diagrammatic level:
\begin{equation} \label{eq:strong-CA-diag}
\input{MDSL.tikz}
\qquad = \qquad
\HtoLS \left( \input{MDSH.tikz} \right)
% \qquad \qquad \qquad \qquad 
% \HtoLS \left( \tikzfig{MDSH2} \right)
% \qquad = \qquad 
% \tikzfig{MDSL2}
\end{equation}
That is, opening the low-level model at $\HtoL(S)$ yields the same formal diagram as applying $\HtoLS$ to the diagram for opening the high-level model at $S$. 

We will see \rl{shortly}\ that, under very mild assumptions on the model, the mapping $\HtoLS$ is unique whenever it exists, and hence we can view this as a \emph{property} of the abstraction $(\HtoL,\LtoH)$, as our terminology suggests. 
\rl{To this end the following purely graph-theoretic definition will be useful.}\  

\rl{
\begin{definition} \label{def:abstraction-abstract-conds}
Let $\pi$ be the partition of a constructive causal abstraction $(\HtoL,\LtoH)$ between $\modelML$ and $\modelMH$, and for each $\syn{X} \in \varH$ define:
\begin{align*}
\mechlevelset(\syn{X}) :&= 
% \An(\HtoL(X))  \ \setminus \  \An(\HtoL(\Pa(X))) \\ 
%  &=
\{\syn{Z} \in \varL \mid \exists \text{ a directed path } Z \to \pi(\syn{X}) 
\text{ in $\dagGL$ which does not pass through $\pi(\Pa(\syn{X}))$} 
\} 
\end{align*}
We say that $\pi$ is:
\begin{itemize}
\item 
\emph{\strong} if $\mechlevelset(\syn{X}) \cap \pi(\syn{Y}) = \emptyset$ for $\syn{X} \neq \syn{Y}$; 
\item 
\emph{\extrastrong}\footnote{Note that $\pi(\syn{X}) \subseteq \mechlevelset(X)$ by definition.} if $\mechlevelset(\syn{X}) \cap \mechlevelset(\syn{Y}) = \emptyset$ for $\syn{X} \neq \syn{Y}$; 
\item 
\emph{full} if for all $\syn{Y} \in \pi(\Pa(X) \setminus \VinH)$ there exists a directed path $\syn{Y} \to \pi(X)$ in $\dagGL$. 
\end{itemize}
\end{definition} 
\noindent Intuitively, $\mechlevelset(X)$ captures the degree to which the low-level representation of $\Pa(X)$ `screens off' that of $X$.}\ 

We \rl{can now}\ name the conditions on $(\HtoL,\LtoH)$ to obtain the equality of diagrams \eqref{eq:strong-CA-diag}. In fact we will present several results in one, since the conditions will depend on precisely how one chooses to define the structure category $\strucS$ for each model, generated by its signature $\Sig$, for which we will consider a few options with a corresponding result for each. 

For the first of these options we consider a slight weakening of the notion of Markov category from e.g.~\cite{cho2019disintegration}. A \emph{cd-category} $\catC$ is defined just like a Markov category except with a chosen (rather than unique) discard morphism $\discard{A} \colon A \to I$ on each object $A$, satisfying:
% such that :
\[
\input{disc-nat.tikz} \qquad \qquad  \qquad \input{disc-I.tikz}
\]
A \emph{cd-functor} $F \colon \catC \to \catD$ must also respect these discarding morphisms.\footnote{ 
%We also require $\discard{I} = 1$ and $\discard{A \otimes B} = \discard{A} \otimes \discard{B}$. 
In a cd-category we call a morphism $f$ a \emph{channel} when $\discard{} \circ f = \discard{}$. Cd-functors $F$ and natural transformations $\alpha$ are like the Markov case, but also with $F(\discard{A}) = \discard{F(A)}$ and all transformation components $\alpha_X$ (and structure isomorphisms for $F$) being channels.}
%between cd-categories is a strong monoidal functor which respects discarding morphisms.%A \emph{cd-functor} $F$ must satisfy $F(\discard{A}) = \discard{F(A)}$.

\begin{definition} \label{def:struc-well-behavedness}
\rl{Given causal models $\modelML, \modelMH$ with respective free structure categories $\strucSL, \strucSH$, call}\ a 
collection $(\HtoL(\syn{X}))_{X \in \syn{\VH}}$ of disjoint subsets of $\syn{\VL}$ \emph{structurally well-behaved} if $\HtoL$ is: 
\begin{itemize}
\item 
\extrastrong\ and full, when using the free cd-category $\strucS = \FreeCD(\Sig)$ generated by $\Sig$ (as in \rl{\cite{lorenz2023causal, tull2024towards}} and essentially \cite{jacobs2019causal}); 
\item 
\extrastrong, when using the free Markov category $\strucS = \FreeMarkov(\Sig)$, where we equivalently add axioms stating that every mechanism is a channel (as in \cite{fritz2023free}):
\[
\input{mechanisms-channels.tikz}
\]
\item 
\strong, when using the free Cartesian category $\strucS = \FreeCart(\Sig)$, where we equivalently add yet further axioms stating that each mechanism is a deterministic channel: 
\[
\input{mechanisms-det.tikz}
\]
\end{itemize}
\st{
We call a cd-functor $\HtoLS \colon \strucSH \to \strucSL$ \emph{structurally well-behaved} if each $\HtoLS(\syn{c_X})$ is a normalised network diagram in $\strucSL$ and $(\HtoL(\syn{X}))_{X \in \syn{\VH}} := (\HtoLS(\syn{X}))_{X \in \syn{\VH}}$ are disjoint subsets with $\HtoL(\VinH) = \VinL$, $\HtoL(\VoutH) \subseteq \VoutL$, which are 
structurally well-behaved, in each case.
}
\end{definition}

\color{black}

Going down \rl{the list in Def.~\ref{def:struc-well-behavedness}}, we include more diagrammatic axioms in $\strucS$ and hence the equality \eqref{eq:strong-CA-diag} is easier to obtain. 
\rl{We are now able to prove our main result characterising \strongCCA{}}.
\begin{theorem} \label{Thm:strongCCAnew}
\st{Let $\modelL, \modelH$ be causal models in $\catC$ such that $\VninH \subseteq \VoutH$}. Suppose every input \rl{variable in $\VinL$ of}\ $\modelML$ has at least one normalised state in $\catC$.\footnote{More generally, for the result we only require that for every input variable $\syn{V} \in \Vin$ and all $f, g \colon X \to Y$ in $\catC$ we have $\discard{V} \otimes f = \discard{V} \otimes g \implies f = g$.} 
For each structure-type (cartesian, Markov or cd) it is equivalent to specify the following: 
\begin{enumerate}
\item \label{enum:mech-level-CCA}
\rl{A \strongCCA{}}\ $(\HtoL,\HtoLS,\LtoH)$;
\item \label{enum:constr-CCA-well}
\rl{A \CCA}\ $(\HtoL,\LtoH)$ such that $\HtoL$ is structurally well-behaved; 
\item \label{enum:pis-functor}
A structurally well-behaved cd-functor $\HtoLS$ and natural transformation $\LtoH$ as below:
\[
%\tikzfig{struc-refinement}
\input{struc-refinement_b.tikz}
\]
 \end{enumerate} 
\end{theorem}

\begin{proof}
Appendix \ref{app:strongCA-proof}.
\end{proof}

The result tells us that a \CCA~extends to the mechanism-level iff its partition $\HtoL$ satisfies the corresponding conditions: being \rl{\strong, \extrastrong, or \extrastrong\ and full, depending on the}\ choice of structure type. While we have stated a result for $\strucS = \FreeCD(\Sig)$, for a typical causal model it is perhaps more natural to use the free Markov category $\strucS = \FreeMarkov(\Sig)$, since we know every mechanism will be a channel in $\catC$, and in this case $\HtoL$ must be \extrastrong. For deterministic models however, such as neural networks, it is natural to take $\strucS = \FreeCart(\Sig)$, and now we only require $\HtoL$ to be \strong. 

Equivalently, the final condition expresses such an abstraction instead purely in terms of a functor at the structure level. When using free Markov categories as structure, this final condition can be captured in a yet simpler form.

\begin{lemma} \label{lem:Markov-equiv}
Let $\modelML, \modelMH$ be causal models such that $\VninH \subseteq \VoutH$. A cd-functor $\HtoLS \colon \FreeMarkov(\Sig_\modelMH) \to \FreeMarkov(\Sig_\modelML)$ is structurally well-behaved iff $\HtoLS(\netdiag{\modelMH})$ is again a network diagram in $\strucSL$ with $\HtoLS(\VinH) = \VinL$, $\HtoLS(\VoutH) \subseteq \VoutL$ and $\HtoLS(\syn{\VninH}) \subseteq \syn{\VninL}$. 
\end{lemma}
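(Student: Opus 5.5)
We would argue both directions directly, relying on the structure of normal forms in the free Markov category $\FreeMarkov(\Sig)$. In $\FreeMarkov(\Sig_\modelML)$ every mechanism is a channel, so unused generators can be discarded away. Every morphism therefore has a canonical form. It is a string diagram of mechanisms in which each mechanism has a path to an output, with copies and discards. When the diagram is acyclic and uses each mechanism at most once, it is a (normalised) network diagram of its set of mechanisms. The cd-functor $\HtoLS$ is determined by the values $\HtoLS(\syn{c_X})$ for $\syn{X}\in\VninH$ and by $\HtoLS$ on variables. Throughout we write $\pi(\syn{X}) := \HtoLS(\syn{X})$.

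For the forward direction, suppose $\HtoLS$ is structurally well-behaved in the Markov sense. Then each $\HtoLS(\syn{c_X})$ is a normalised network diagram from $\pi(\Pa(X))$ to $\pi(X)$. The mechanisms it uses are exactly those of $\mechlevelset(\syn{X})\setminus\pi(\Pa(\syn{X}))$, because normalisation keeps precisely the mechanisms with a path to $\pi(X)$ not passing through the inputs $\pi(\Pa(X))$. Now apply $\HtoLS$ to $\netdiag{\modelMH}$. This composes the diagrams $\HtoLS(\syn{c_X})$ along copies in the shape of $\dagGH$. By \extrastrong ness the sets $\mechlevelset(\syn X)$ are pairwise disjoint, so no low-level mechanism is used twice. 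Acyclicity of the composite follows from acyclicity of $\dagGH$ and of each piece. Every kept mechanism still has a path to an output, since $\HtoLS(\VoutH)\subseteq\VoutL$ and $\VninH\subseteq\VoutH$. The result is therefore again a network diagram of $\strucSL$. The typing conditions $\HtoLS(\VinH)=\VinL$, $\HtoLS(\VoutH)\subseteq\VoutL$ and $\HtoLS(\VninH)\subseteq\VninL$ are part of the definition of well-behavedness, up to the observation that a non-input high-level variable must map to non-inputs for $\HtoLS(\syn{c_X})$ to be a nontrivial network diagram landing in $\pi(X)$.

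For the converse, assume $D := \HtoLS(\netdiag{\modelMH})$ is a network diagram with the stated typing. The first step is to recover each $\HtoLS(\syn{c_X})$ as a sub-diagram. Since $\netdiag{\modelMH}$ contains $\syn{c_X}$ exactly once, $D$ contains $\HtoLS(\syn{c_X})$ as a sub-diagram from $\pi(\Pa(X))$ to $\pi(X)$. A sub-diagram of a network diagram that uses each mechanism at most once is itself (after normalisation in the Markov setting) a network diagram. The outputs $\pi(X)$ of this sub-diagram are outputs of $D$, because $X\in\VoutH$. The next step is to show that the mechanisms occurring in different $\HtoLS(\syn{c_X})$ are disjoint. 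This holds because a network diagram uses each mechanism once, and the functor image keeps the pieces as distinct boxes.

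The step we expect to be the main obstacle is identifying the mechanism set of $\HtoLS(\syn{c_X})$ with $\mechlevelset(\syn X)\setminus\pi(\Pa(\syn X))$, which is what is needed to deduce \extrastrong ness. Inputs of the piece are exactly $\pi(\Pa(X))$, and the piece must produce $\pi(X)$ from the low-level mechanisms. Hence every $Z$ with a path to $\pi(X)$ avoiding $\pi(\Pa(X))$ must have its mechanism inside that piece. Otherwise the path would enter the piece through an input wire, but the only input wires are $\pi(\Pa(X))$. Here we also use that $\pi(\VinH)=\VinL$, so that low-level inputs are accounted for. With this identification, disjointness of the pieces gives $\mechlevelset(\syn X)\cap\mechlevelset(\syn Y)=\emptyset$. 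This requires some care, since $\mechlevelset$ also contains $\pi(X)$ itself and input variables, which are handled using disjointness of the $\pi$-blocks. Together with the typing conditions, this yields structural well-behavedness in the Markov sense.
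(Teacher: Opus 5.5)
Your overall route matches the paper's. The forward direction is the Markov case of Proposition~\ref{prop:caus-mech-abs-main} unrolled by hand. The converse locates each $\HtoLS(\syn{c_X})$ inside $D=\HtoLS(\netdiag{\modelMH})$ and uses Lemma~\ref{lem:unique-ND}\eqref{enum:path-to-diag} to put $\mechlevelset(\syn{X})$ in that piece.

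\textbf{First gap.} The converse breaks at the step ``each $\HtoLS(\syn{c_X})$ is a network diagram''. Your criterion (acyclic, each mechanism used at most once) is not sufficient. Lemma~\ref{lem:unique-ND}\eqref{enum:NND-equivalent-conditions} also requires that no $\syn{c_V}$ occurs when $\syn{V}$ is an input of the diagram. For instance, if $\Pa(\syn{Y})=\{\syn{X}\}$, the diagram $\syn{X}\otimes\syn{Y}\to\syn{Y}$ that discards its $\syn{Y}$ input and applies $\syn{c_Y}$ to $\syn{X}$ uses $\syn{c_Y}$ once but is not a network diagram. So ``a sub-diagram of $D$ using each mechanism once'' does not rule out $\syn{c_V}$ occurring in $\HtoLS(\syn{c_X})$ with $\syn{V}\in\HtoL(\Pa(\syn{X}))$. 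This is exactly the case the paper treats explicitly:
\begin{itemize}
\item then $\syn{V}\in\HtoL(\syn{Y})$ for some $\syn{Y}\in\Pa(\syn{X})$, and $\syn{Y}$ must be a non-input (otherwise $\syn{V}\in\VinL$ has no mechanism);
\item by Lemma~\ref{lem:unique-ND}\eqref{enum:path-to-diag}, $\syn{c_V}$ then also occurs in $\HtoLS(\syn{c_Y})$;
\item both occurrences have paths to outputs $\HtoL(\syn{X})$ and $\HtoL(\syn{Y})$ of $D$ (since $\VninH\subseteq\VoutH$), so both survive normalisation and $D$ is not a network diagram.
\end{itemize}
Your later ingredients (disjoint mechanism sets plus the inclusion from Lemma~\ref{lem:unique-ND}\eqref{enum:path-to-diag}) yield this. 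But you must run that argument before concluding the pieces are network diagrams, not after.

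\textbf{Second gap.} Disjointness of the blocks $\HtoL(\syn{X})$ is part of the definition of structural well-behavedness. It is also needed in the step above, to get $\syn{V}\notin\HtoL(\Pa(\syn{Y}))$ before applying Lemma~\ref{lem:unique-ND}\eqref{enum:path-to-diag}. You only invoke ``disjointness of the $\pi$-blocks'' and never derive it; in fact your converse never uses the hypothesis $\HtoLS(\VninH)\subseteq\VninL$. That hypothesis is exactly what is needed:
\begin{itemize}
\item blocks of non-inputs are disjoint because they sit among the outputs of $D$, whose labels are distinct;
\item input blocks are disjoint because together they list the distinct inputs $\VinL$ of $D$;
\item $\HtoLS(\VninH)\subseteq\VninL$ separates the two families.
\end{itemize}
The hypothesis is not redundant. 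Take $\syn{W}\to\syn{X}$ with $\HtoLS(\syn{W})=\HtoLS(\syn{X})=\syn{V}$, for a low-level input $\syn{V}$ that is also an output, and $\HtoLS(\syn{c_X})=\id{}$. Then $D$ is a bare wire, hence a network diagram, yet the blocks overlap.

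Relatedly, in the forward direction $\HtoLS(\VninH)\subseteq\VninL$ follows directly from disjointness together with $\HtoLS(\VinH)=\VinL$. Your appeal to ``nontrivial'' network diagrams is unnecessary.
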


\begin{proof}
Appendix \ref{app:strongCA-proof}.
\end{proof}

When using free Markov categories, specifying a \strongCCA\ is therefore equivalent to simply giving a cd-functor $\HtoLS$ and transformation $\LtoH$ which respect the overall network diagrams in the above sense. Perhaps surprisingly, this tells us that we can define this potentially broad class of constructive abstractions purely in terms of mechanisms and input-output behaviour, and thus without explicit reference to interventions or (Do-)queries at all. An open question is then whether all interesting cases of constructive abstraction in practice are of this form, i.e.~such that their partition $\HtoL$ is \strong. 

Let us now give some examples and non-examples of \strongCCA s. 

\begin{example}
Example~\ref{ex:CCA} of a constructive abstraction is in fact mechanism-level when taking $\strucS = \FreeCart(\Sig)$, which is an appropriate choice since the model is deterministic, and in fact more strongly also when taking $\strucS = \FreeMarkov(\Sig)$. This is easy to check via $(1) \Leftrightarrow (2)$ of Thm.~\ref{Thm:strongCCAnew}, since $\HtoL$ is clearly simple, and in fact it is straightforward to see it is \extrastrong~also.\footnote{Note that the conditions in Eq.~\eqref{eq:CCA-example-mechanism-consistency} of Ex.~\ref{ex:CCA} make it straightforward to also establish the relation more directly at the mechanism level due to $(1) \Leftrightarrow (3)$ of Thm.~\ref{Thm:strongCCAnew} together with Lem.~\ref{lem:Markov-equiv}. That $\HtoLS(\netdiag{\modelMH})$ is a network diagram is obvious.} 
\end{example}

Further pedagogical examples, which illustrate the different cases of Def.~\ref{def:struc-well-behavedness}, follow below.

\begin{example} \label{example:notsimple}
Let \rl{$\modelML$ and $\modelMH$ be causal models}\ defined by the following \rl{respective network diagrams in $\catC$:} 
\[
% \tikzfig{CA-ex-1-DAG}
% \qquad \
% \tikzfig{CA-ex-1-DAGL}
% \qquad
\input{CA-ex-1-DL.tikz}
\qquad \qquad \qquad \qquad 
\input{CA-ex-1-DH.tikz}
\]
Suppose that, in $\catC$, the morphisms $a,b,c,d,e$ are all deterministic and that $a=c \circ e$ and $b = d\circ e$. 
Then there is a \CCA\ with $\LtoH=\id{}$ given by setting $\syn{\HtoL :: X \mapsto \{X\}, Y \mapsto \{Y\}, W \mapsto \{W\}}$. For example for the high-level input-output channel we obtain $\modelMHio = \modelMLio$ since:  
\[
%\tikzfig{CA-ex-1-equality}
\input{CA-ex-1-equality_b.tikz}
\]

This abstraction is not \extrastrong{} since for this $\pi$ we have $\mechlevelset(\syn{X}) = \{\syn{X, Z}\}$, $\mechlevelset(\syn{Y}) = \{\syn{Y,Z}\}$ and $\mechlevelset(\syn{W}) = \{\syn{W}\}$ so $\mechlevelset(\syn{X})$ and $\mechlevelset(\syn{Y})$ are not disjoint. If for the structure we use \rl{a free Markov category $\strucS=\FreeMarkov(\SigS)$ (or free cd-category)}\ we see that no \strucdown{} is possible, since one finds that \rl{$\HtoLS(\netdiagout{\modelMH}{X,Y}) \neq  \netdiagout{\modelML}{X,Y}$}. Indeed, as diagrams we have:
\begin{equation} \label{eq:equality-CA-ex}
% \pi_S(\diagD^\syn{X,Y}_{\modelMH})  \ \  = \ \  
\input{CA-ex-1-inequality.tikz}
 % \ \ = \ \  
% \diagD^{X,Y}_{\modelML}
\end{equation}
Nonetheless, this abstraction is \strong{}, and if for the structure we instead take a free Cartesian category $\strucS=\FreeCart(\SigS)$, \rl{that is requiring each box be}\ deterministic, \eqref{eq:equality-CA-ex} becomes an equality and we do obtain a \strucdown.

Note that one may alternatively define a \CCA, again with $\tau = \id{}$, by setting $\syn{\HtoL :: X \mapsto \{X\}}$, $\syn{Y \mapsto \{Y\}}$, $\syn{W \mapsto \{W,Z\}}$. Now \rl{$\mechlevelset(\syn{N}) = \HtoL(\syn{N})$ for each of $\syn{N}=\syn{X},\syn{Y},\syn{Z}$}, making the abstraction \extrastrong{}, and this defines \rl{a \strongCCA}\
even when using free Markov categories. 
\end{example}

\begin{example}
Let \rl{$\modelML$ and $\modelMH$}\ be causal models defined by the following \rl{respective network diagrams in $\catC$:}
\[
%\tikzfig{CA-ex-2-nds} 
\input{CA-ex-2-nds_b.tikz} 
\]
There is a constructive abstraction between $\modelML$ and $\modelMH$ given by $\syn{\pi :: X \mapsto \{X\}, Y \mapsto  \{Y,Y'\}, Z \mapsto \{Z,Z'\}}$ and \rl{($\LtoH_X=id_X$)}: 
\[
\input{CA-ex-2-taus.tikz}
\]
However this is not \strong{} since $\syn{X \in \pi(X) \cap \mechlevelset(Y)}$.\footnote{This means condition \eqref{eq:cond-1} fails, so indeed $\HtoLS$ is not definable.} This abstraction does not extend to a \strucdown{}, and indeed we cannot even define $\HtoLS$ as in \eqref{eq:strong-CA} since $\HtoLS(\syn{a})$ would be a state of $\syn{Y} \otimes \syn{Y'}$ in $\strucSL$, but no such state exists.
\end{example}

\begin{remark}
	Note that any \CCA\ indeed implies a condition for each high-level mechanism close to Eq.~\eqref{eq:strong-CA-nat}. For any $\syn{X} \in \VninH$, set $\syn{S}:=\Pa(X) \setminus \VinH$. Then in $\catC$ we have the right-hand equality below:  
	\begin{equation}
		% \tikzfig{remark-on-ML-CCA} 
		\input{remark-on-ML-CCA-v2.tikz} 
	\end{equation}
where $R = \VinL \setminus \HtoL(\Pa(X))$. We would like to then define a normalised network diagram $\syn{h_X}$, as on the LHS, as our candidate for $\HtoLS(\syn{c_X})$. For such a normalised diagram to exist, with inputs $\HtoL(\Pa(X))$ as required for $\HtoLS$ to form a functor, would mean that every path from an input in $\VinL$ to $\HtoL(X)$ passes through $\HtoL(\Pa(X))$, i.e.~that $\mechlevelset(X) \cap \VinL = \emptyset$. 
This is thus an equivalent condition to defining the functor $\HtoLS$ in the first place. To establish that $\HtoLS$ satisfies the conditions of a \strongCCA~in each case requires precisely the stronger conditions in Definition \ref{def:struc-well-behavedness}; for more details see the proofs in Appendix \ref{app:strongCA-proof} 
\end{remark}

\section{Abstraction for quantum models} \label{sec:quantum-abstraction}

A benefit of our categorical definition of abstraction is that it can be applied not merely to causal models, but any compositional model whatsoever. Another major class of models are those based on quantum processes. In this section we present new notions of abstraction between \rl{quantum and classical models}, touching on potential applications in explainable quantum AI. To do so, we will work \rl{in the following example of a terminal d-category that can capture 
both finite-dimensional quantum processes (as in quantum computation) and finite classical probabilistic processes (in particular classical computation) \cite{coecke2018picturing}}.

\begin{example}[\QC] \label{ex:QC-catgeory}
In the category $\QC$ the objects are pairs $(\hilbH, X)$ where $\hilbH$ is a finite-dimensional complex Hilbert space (a quantum system) and $X$ is a finite set (a classical system), depicted as thick and thin wires respectively, placed side-by-side. A morphism $f \colon (\hilbH, X) \to (\hilbK, Y)$, drawn as: 
\[
\input{q-mor.tikz}
\]
is given by a \emph{controlled quantum instrument}, i.e.~a collection of linear maps \rl{$\left(f(y \mid x) \colon L(\hilbH) \to L(\hilbK)\right)_{x \in X, y \in Y}$ that}\ are each \emph{completely positive} and jointly trace-preserving:  
\[
\sum_{y \in Y} \Tr(f(y \mid x)(a)) = \Tr(a)
\]
for all \rl{$x \in X$ and}\ operators $a \in L(\hilbH)$. Here $L(\hilbH)$ is the set of linear operators on $\hilbH$.\footnote{Recall that a linear map $f \colon L(\hilbH) \to L(\hilbK)$ is positive when it sends positive operators $a=b^\dagger b \in L(\hilbH)$ to positive operators $f(a)$ in $L(\hilbK)$, and is completely positive when each map $f \otimes \id{\hilbH'}$ is also positive, for any other Hilbert space $\hilbH'$.} Given any $g \colon (\hilbK, Y) \to (\hilbW, Z)$ we define $g \circ f \colon (\hilbH, X) \to (\hilbW, Z)$ \rl{via}\ summation: 
\[
(g \circ f)(z \mid x) := \sum_{y \in Y} g(z \mid y) \circ f(y \mid x)
\]
We define $(\hilbH, X) \otimes (\hilbK, Z) := (\hilbH \otimes \hilbK, X \otimes Y)$ where $\hilbH \otimes \hilbK$ is the tensor product of Hilbert spaces, and on morphisms define $(f \otimes g)(y, z \mid x, w) := f(y \mid x) \otimes g(z \mid w)$, with unit object $I=(\mathbb{C},\{\star\})$. This category has discarding \rl{$\discard{} \colon (\hilbH, X) \to (\mathbb{C}, \{\star\})$, where $\discard{}(\star \mid x) \colon L(\hilbH) \to \mathbb{C}$ is defined by $a \mapsto \Tr(a)$}\ for all $a \in L(\hilbH)$. By definition every morphism is a channel due to trace-preservation. 

In this category \rl{the `fully quantum'}\ systems $\hilbH = (\hilbH, \{\star\})$, correspond to finite dimensional Hilbert spaces (thick wires), such as qubits $\hilbH=\mathbb{C}^2$, and `fully classical' systems $X=(\mathbb{C},X)$, corresponding to finite sets (thin wires). Quantum channels, i.e.~those between quantum systems $f \colon \hilbH \to \hilbK$, are precisely \rl{completely positive and trace-preserving (CPTP) maps}, with $\discard{}$ corresponding to the partial trace. These include quantum states, and quantum unitary circuits. Classical channels $c \colon X \to Y$ between finite sets are precisely those of $\FStoch$, giving an embedding $\FStoch \hookrightarrow \QC$, which allows us to describe classical finite causal models within $\QC$. Some processes of interest between both kinds of systems in $\QC$ are shown in Figure \ref{Fig:QC}. 
\begin{figure}
\centering
\[
\input{Hdisc.tikz}
\qquad
%\tikzfig{unitary}
\rl{\input{unitary_b.tikz}}
\qquad
%\tikzfig{controlled-unitary}
\rl{\input{controlled-unitary_b.tikz}}
\qquad
\input{CPTP.tikz}
\qquad
\input{DM.tikz}
\qquad
\input{purezero.tikz}
\qquad
\input{meas.tikz}
\qquad
\input{encode.tikz}
\qquad
\input{channel.tikz}
\qquad
\input{omegadist.tikz}
\]
\caption{\rl{Morphisms in $\QC$, in order: partial trace on $\hilbH$; unitary gate, i.e.~CPTP map $U(-)U^\dagger$ for unitary $U$, on multiple \rlc{quantum systems}; controlled unitary $U$ with classical control  $X$; CPTP map $f \colon L(\hilbH) \to L(\hilbK)$; density matrix $\rho \in L(\hilbH)$; pure state in computational basis $\ket{0}\bra{0}$;  measurement $\meas$ on $\hilbH$ with finite outcome set $X$; encoder $\enc$ from finite set $X$ into $\hilbH$; probability channel $c \colon X \to Y$; probability distribution $\omega$ over $X$.}}
 \label{Fig:QC}
\end{figure} 
\end{example}

By a \emph{quantum model} we will mean a compositional model $\modelM$ with semantics in $\QC$ making use of at least some quantum systems $\hilbH$, as well as (potentially) classical ones. By a \emph{classical model} we mean a model $\modelM$ in $\QC$ only using classical systems, such as any (finite) causal model. It now makes sense to explore abstraction relations where the models on each side are each of either kind.

\subsection{Quantum to quantum abstraction}

\rl{Let us first consider abstraction relations $\modelML \to \modelMH$ where \emph{both} $\modelML$ and $\modelMH$}\ are quantum. 
\rlc{A ubiquitous instance}\ is a strict \strucdown{}, capturing how a quantum model $\modelMH$ \rl{(say, at a computational or logical level)}\ is implemented in terms of a lower-level quantum model $\modelML$ of specific quantum gates \rl{(say at a quantum hardware level)}, as in the following example. 
\rlc{Although this example is, in a sense trivial, it shows that abstraction is not alien to quantum theory and quantum computing; we leave the study of further, more interesting quantum to quantum cases of abstraction for future work.}

\begin{example}[Quantum model implementation] \label{ex:quantum-model-impl}
By a \emph{\qcircmodel{}} $\modelM$ from classical inputs $X$ to classical outputs $Y$ we mean a model in $\QC$ given as a \rl{sequential}\ composite of an encoder $E$ from $X$ to a quantum system $\hilbH$ 
followed by a unitary circuit $U$ on $\hilbH$, and then a final measurement $\measbig$ with output $Y$. In practice $\hilbH$ is a tensor of some number of qubits, and $\modelM$ is implemented by a number of lower-level processes acting on these, captured by a strict \strucdown{} \rl{$\strucdownabsshort{\modelML}{\modelM}$ with the}\ lower-level model $\modelML$ including the individual qubits. An example where $\modelML$ acts on four qubits, and $X$ factors as four inputs and $Y$ as three outputs is shown below, using processes defined in Figure \ref{Fig:QC}.   
\[
% \tikzfig{refine-PQC}
%\tikzfig{refine-PQC-two}
\input{refine-PQC-two_b.tikz}
\]
The encoder $E$ is implemented via controlled rotation gates $R$ acting on initial states $\ket{0}$, the unitary $U$ via \rl{(typically)}\ entangling gates $V,W,T$, and the measurement $\measbig$ by individual qubit measurements on three \rl{of the qubits}. 
\end{example}

\subsection{Quantum to classical abstraction} \label{sec:q-to-class-reduction}

An interesting new possibility to explore are abstractions $\modelML \to \modelMH$ from a \emph{quantum} model $\modelML$ to a high-level \rl{\emph{classical} model}\ $\modelMH$. There are several motivations to consider such abstractions:
\begin{enumerate}

	\item \emph{Classical reduction.} One simple motivation would be to explain precisely how the quantum model $\modelML$ is `really' classical, being fully accounted for by the classical model $\modelMH$. This would be most apparent \rl{when}\ the queries on $\modelML$ are sufficiently `rich', \rl{or the abstraction holds}\ even at the component-level, capturing all the ways we intend to use the quantum model. One reason for such a reduction would be to argue against a claim that a quantum semantics is necessary to achieve certain behaviours of a model (e.g. computationally or in settings such as quantum cognition \cite{QCog}); for an instance see Example \ref{ex:quant-dec-model} in Appendix \ref{app:q-abs}. 
	\item \emph{Causal decoherence.} Another related motivation from quantum foundations would be to describe a form of decoherence in which a `causally consistent' classical description of a physical system emerges from an underlying quantum one. Crucially this would require $\modelML$ to be a quantum \emph{causal model}, which are not straightforwardly defined as compositional models; we discuss this future direction further in Section \ref{sec:discussion}. 
	\item \emph{Explanation.} A final motivation, pertinent to the intersection of AI and quantum computing, is to consider how a classical causal model $\modelMH$ may provide an `explanation' for (part of) the behaviour of a quantum model $\modelML$. Indeed, within the field of explainable AI (XAI) it has been argued that \rl{ideal `explanations'}\ for a low-level neural network $\modelML$ are best described as causal abstractions $\modelML$ to a high-level (interpretable) causal model $\modelMH$ \rl{\cite{geiger2023causal, geiger-etal-2020-neural, geiger2024finding}}. The abstraction captures structurally how $\modelMH$ reflects the behaviour of the network, \rl{that is the latter has to produce outcomes in keeping with the causal structure of $\modelMH$, which thereby provides causal explanations}, often argued to be the `richest' form of explanations possible \rlb{(see also Sec.~\ref{sec:intro})}. 

In just the same way, we may hope to explain a paramaterised quantum circuit (PQC) or more general quantum model $\modelML$ via an abstraction to a classical causal model $\modelMH$. This may offer an approach to explainable quantum AI (XQAI) in which AI models given by PQCs are explained via such abstractions to interpretable classical causal models. 
 Note that a difference from the fully classical XAI case is that, while \emph{any} neural network can be formally seen as a causal model, this is not the case for \rl{quantum models (see discussion in Section \ref{sec:discussion}.)  
Nonetheless, this may not necessarily be a barrier --  
what brings about the explanation (i.e. the explanans) has two aspects: the high-level model $\modelMH$, which \emph{is} assumed to be causal, and the link between the two models.}   
\end{enumerate}

In the remainder of the section we will focus on the \rlb{above third and final}\ motivation. In any case, the meaning and richness of an abstraction will depend precisely on the set of low-level queries  that are being associated to the high-level ones. Hence we must ask: \emph{what meaningful ways can one query a quantum circuit} 
and moreover such that it can yield meaningful abstractions to a classical (ideally causal) model?\   
\rlb{Here we will propose a partial answer to this question, by generalising causal models to a broader class of compositional models allowing for a quantum semantics, and by suitably generalising two of our earlier classes of queries, namely abstract do-queries and interchange queries.}

\paragraph{Models of DAGs, revisited.} 
In order to define this class of compositional models, 
we first specify a new more general recipe for defining a string diagram from a DAG.  
Let $\dagG$ be an open DAG with vertices $\varV$ and let $\syn{\Vout} \subseteq \syn{\varV}$ be a further subset of vertices called the \emph{outputs}. 

We define the diagram \rl{$\DD{G,\syn{\Vout}}$}\ to have a wire $\syn{X}$ for every vertex $\syn{X} \in \varV$, a wire $\syn{\edgevar{X}{Y}}$ for each edge $\syn{X} \to \syn{Y}$ in $\dagG$, and a further wire labelled $\syn{\edgeout{X}}$ for each output $\syn{X}$. Then for   
each non-input $\syn{X}$ we draw a composite of boxes $\smech_X \circ \mmech_X$, and for each input $\syn{I} \in \syn{\Vin}$ just a box $\smech_I$, as below, and \rl{$\DD{G,\syn{\Vout}}$}~is given by connecting these boxes along their matching wires. 
\[
\input{GV.tikz} 
\quad 
\mapsto 
\quad 
\input{split-mechV.tikz}
% \ \ \qquad  \qquad 
% \tikzfig{genmechV} \qquad = \qquad 
\qquad \qquad \qquad \qquad \qquad 
\input{input-dag-slice.tikz} 
\quad
\mapsto
\quad
\input{inputsplit.tikz}
\]
Here \rl{$\syn{\ePaX} := (\syn{\edgevar{Y}{X}})_{\syn{Y \in \Pa(X)}}$ and $\syn{\eChX} := (\syn{\edgevar{X}{Y}})_{\syn{Y \in \Ch(X)}}$, or if $\syn{X} \in \syn{\Vout}$ then $\syn{\eChX}$ additionally contains $\syn{\edgeout{X}}$. 
Hence $\DD{G,\syn{\Vout}}$}\ is a diagram from $\syn{\Vin}$ to outputs $\DDGout = \syn{(\edgeout{X})_{X \in \Vout}}$. We follow a convention that whenever an output $\syn{X} \in \syn{\Vout}$ has no children in $\dagG$ we omit to draw $\syn{\smech_X}$ and identify $\syn{X = \edgeout{X}}$: \footnote{Formally we assume a model with $X=\edgeout{X}$ and $\smech_X = \id{}$.} 
\[
% \qquad \qquad \qquad 
%\tikzfig{output-dag-slice}
\input{output-dag-slice_b.tikz}
\quad
\mapsto
\quad
%\tikzfig{output-mech}
\input{output-mech_b.tikz}
\]
We will see examples of $\DD{G,\syn{\Vout}}$ in Examples \ref{ex:PQD-DDG-ex-1} and \ref{ex:geoguesser} shortly. We can now specify a \rl{general notion}\ of model of a DAG. 

\begin{definition} \label{Def:comp-model-more-general}
Given an open DAG $\dagG$ and subset of output vertices $\syn{\Vout}$, by a \emph{compositional model of $\dagG$ with outputs $\syn{\Vout}$} \rlb{in $\catC$}\ we mean a compositional model $\modelM$ of the signature \rl{$\sigDD{G,\syn{\Vout}}$}\ \rlb{given by the wire labels and boxes of 
$\DD{G,\syn{\Vout}}$}, which futhermore sends every box to a channel in $\catC$. 
\end{definition}

For brevity, for each non-input $\syn{X}$ we denote $\genmech_X := \smech_X \circ \mmech_X$ in $\catC$. As before we denote the overall channel from inputs to outputs by:  
\[
% \tikzfig{Mioreduxsimple}
\input{Mioreduxsimple-nodots.tikz}
% \tikzfig{Mioredux}
\]

\begin{example}[\rl{Classical}\ Causal Models]
An (open) causal model of $\dagG$ in a Markov category $\catC$ \rl{with outputs $\syn{\Vout}$ (see Def.~\ref{def:causal-model})}\ is equivalent to a compositional model of $\dagG$ in the above sense such that $\edgevar{X}{Y} = X$ for all \rl{$\syn{X, Y} \in \syn{V_G}$ with $\syn{X} \in \Pa(\syn{Y})$, $\edgeout{X} = X$ for all $\syn{X} \in \syn{\Vout}$, and the following holds:} 
\[
\input{sxiscopy.tikz}
\]
Making \rl{the corresponding}\ substitutions to $\DD{G,\syn{\Vout}}$ directly then yields the network diagram $\diagD_G$.\footnote{While for a (classical) causal model, a choice of outputs $\syn{\Vout }$ is `innocent' in that any change amounts to just `copying out' more or fewer variables without altering the causal mechanisms, a model in the sense of Definition \ref{Def:comp-model-more-general} may be defined in the absence of copy maps (as for quantum theory), making the choice of outputs $\syn{\Vout}$ a non-trivial aspect.}
%a non-trivial aspect of the model.}  
\end{example}

\begin{example}[Quantum Circuits] \label{ex:PQC-as-DDG} 

Consider a \qcircmodel{} $\model{Q}$ in the sense of Example~\ref{ex:quantum-model-impl}, with classical inputs $X$ and classical outputs $Y$, which furthermore is already decomposed into a composite of boxes, where each box is either an encoder $E$ (i.e.~classical input to quantum), a unitary $U$ (quantum to quantum), or a measurement $\meas$ (quantum to classical output). 

We can view $\model{Q}$ as a compositional model of an open DAG $\dagG_{\model{Q}}$ defined as follows.   
We specify an input vertex $\syn{I_j}$ for the input of each encoder $E_j$, an output vertex $\syn{O_j}$ for the output of each measurement $\meas_j$ and a vertex $\syn{X_j}$ for each unitary $U_j$, with $\syn{\ePa{X_j}}$ and $\syn{\eCh{X_j}}$ given respectively by the inputs and outputs of  $U_j$. Thus there is an edge in $\dagG_{\model{Q}}$ for each internal wire of the circuit. 
Then $\model{Q}$ is a compositional model of $\dagG_{\model{Q}}$ with outputs $\syn{O}$, via \rlb{$X_j := \bigotimes_{Y \in \eCh{X_j}} Y$}, $\smech_{X_j} = \id{}$ and $U_j = \mmech_{X_j} = \genmech_{X_j}$.\footnote{While we could simply view $U_j$ as the channel $\genmech_{X_j}$, formally introducing $X_j$ makes explicit the manner in which we are emphasising the outputs of $U_j$ as seen as a single variable.} 
\ 
\[
\input{PQC-as-DDG-2.tikz}
\qquad \qquad \qquad 
\input{PQC-as-DDG-3.tikz}
\qquad \qquad
\qquad
\input{PQC-as-DDG-short.tikz}
\]
\end{example}

\begin{example}[\rl{Subclass of quantum causal models}] \label{ex:QCM-as-comp-model-over-DAG} 
Every quantum causal model with DAG $G$ in the specific sense of \cite{costa2016quantum} (provided some natural dimensional assumption at each `lab') is equivalent to the data of a compositional model over $G$ in above sense, via an identification similar to the treatment of unitary gates in Ex.~\ref{ex:PQC-as-DDG}. These are a special case of the \rlb{more}\ general notion of quantum causal model from \cite{barrett2019quantum, allen2017quantum} which in generality are not (yet) defined as compositional models; see the discussion in Section \ref{sec:discussion}.  
\end{example}

Note that while one can freely turn any classical causal model into a new one with an arbitrary new subset of outputs (using the same mechanisms, simply adding copy maps to the network diagram for the former for any new outputs), this is no longer the case for compositional models of open DAGs as in Def.~\ref{Def:comp-model-more-general}, with the choice of outputs non-trivial. Essentially this is due to no longer assuming the presence of copy maps, in the quantum case due to the famous  `no-cloning' theorem. 

\rlb{While classical causal models and the above special subclass of quantum causal models are special cases of these DAG-based models, the latter are strictly more general -- its instances, classical or quantum, are in general \emph{not} causal models. Yet, they allow for well-defined generalisations of some \emph{causal} queries in interesting ways.}

\paragraph{Opening queries.}
We can \rl{now finally}\ generalise abstract Do-queries as follows. 
\rl{Given a compositional model $\modelM$ of DAG $G$, for}\ any subset of non-input vertices $\syn{S}$ we define a query $\Mopen(\syn{S})$, with representation depicted graphically again as left-hand below. The diagram for $\Mopen(\syn{S})$ \rl{is given by altering $\DD{G,\syn{\Vout}}$ by deleting for all $\syn{X \in S}$ the box $\syn{\mmech_X}$ as well as all wires \rlb{$\syn{\edgevar{Y}{X}}$}\ for $\syn{Y} \in  \Pa(\syn{X})$, now setting $X$ as an input to the diagram (right-hand below)}.\footnote{As for abstract Do-queries we can view opening as a transformation to a new model. For any non-inputs $\syn{S}$ define the open DAG $\dagG_{\Mopen(S)}$ by labelling $\syn{S}$ as inputs and deleting all their incoming edges. Any model $\modelM$ of \rl{$\DD{G,\syn{\Vout}}$}\ induces a model $\modelM' = \modelM_{\Mopen(S)}$ \rl{of $\SigS_{ \DD{\dagG_{\Mopen(S)},\syn{\Vout}}}$}\ with $\semM{\Mopen(S)} = \io{\modelM_{\Mopen(S)}}$, by defining $\sem{\smech_Y}_{\modelM'}$ as the marginal of $\semM{\smech_Y}$ on $\syn{\eChY} \setminus \edgevar{Y}{X}$, discarding its $\edgevar{X}{Y}$ output.} 
\[
% \tikzfig{Miosimple} 
\input{Miosimple-nodots.tikz} 
\quad \ \ \rightsquigarrow \ \ 
\input{openquery-nodots.tikz}
% \tikzfig{openquery}
\qquad 
\qquad 
\qquad
\qquad \qquad 
%\tikzfig{GV} 
\rl{\input{split-mechV-sem.tikz}}
\quad \ \  
\rightsquigarrow
\quad  \ \ 
\input{open-surgery-2.tikz}
\]
\paragraph{Interchange queries.}

We can also define interchange queries with respect to $\DD{G,\syn{\Vout}}$ as follows. Call a subset of vertices $\syn{W} \subseteq \syn{V}$ \emph{parallelisable} if there is no path $\syn{X} \to \syn{Y}$ in $\dagG$ for any $\syn{X, Y \in W}$ with $\syn{X} \neq \syn{Y}$. In this case we can factor the diagram $\DD{G,\syn{\Vout}}$ as on the left-hand below for some diagrams $\syn{f}, \syn{g}$, and further wires $\syn{W'}$.\footnote{\rl{Note that for each $\syn{X} \in \syn{W}$ the box $\mmech_{X}$ is then contained in $\syn{f}$ and  $\smech_{X}$ in $\syn{g}$. One may then prove that the marginal of $f$ used to define $\modelMio^{W}$ is indeed independent of the (non-unique) chosen $\syn{f}$, $\syn{g}$ and subset $\syn{W'}$.}} We next define a channel \rl{$\modelMio^{W}$}\ as right-hand below.
\[
% \tikzfig{DDGfactor}
\input{DDGfactor-nodots.tikz}
\qquad \qquad  \qquad 
%\implies 
\rightsquigarrow
 \qquad \qquad  \qquad 
%\tikzfig{DDGdef}
% \tikzfig{DDGdef_b}
\input{DDGdef_b-nodots.tikz}
\]
\rl{Finally, for any collection of disjoint subsets $\syn{S_1},\dots, \syn{S_n}$, which are each parallelisable, and for any}\ subset $O \subseteq \DDGout$ of outputs, we then define an \emph{interchange query} constructed just as in \eqref{eq:interchange-query}, i.e.:\footnote{\rl{The need to consider parallelisable sets $W$ and corresponding channels $\modelMio^{W}$ contrasts with interchange queries for (classical) causal models and reflects the non-trivial step in general of `turning variables into outputs' already emphasised earlier.}}
\[
%\tikzfig{weak-open-two}
\input{weak-open-two_b.tikz}
\] 

Opening and interchange queries thus provide two ways to query compositional models of open DAGs, beyond merely input-output behaviour, which may feature in corresponding \qdown{} relations. 

This generalisation is a promising route, for explanatory purposes, to circumvent the subtleties around generalisations of causal models, and we illustrate their use with two examples in which we explain aspects of a given \qcircmodel{} (in the sense of Examples \ref{ex:quantum-model-impl} and \ref{ex:PQC-as-DDG}), via abstraction to a causal model.

\begin{example}[Interpreting wires in a \qcircmodel{}] \label{ex:PQD-DDG-ex-1}
Suppose we wish to interpret a set of quantum wires (e.g.~qubits) within a \qcircmodel{} $\modelML$ with inputs $I_L$ and outputs $O_L$ as corresponding to some high-level classical variable $X$. To do so we first state a classical causal model $\modelMH$ featuring $X$, structured by the simple DAG $\dagG$ shown left-hand below. A general compositional model $\modelM$ of $\dagG$ would be structured by the diagram $\DD{G,\syn{\Vout}}$ center below, while the \rl{(classical)}\ causal model $\modelMH$ specialises to the right-hand form. 

\begin{equation}
\input{ex1DAG.tikz}
\qquad \qquad \qquad \qquad
% \tikzfig{ex1ddgonly} 
\input{ex1ddgM.tikz}
\qquad \qquad \qquad \qquad
 \input{ex1ddgCH.tikz}
\end{equation}
For any such model $\modelM$ over $\dagG$ the queries for opening and interchange at $X$ are respectively as follows. 
\[
% \tikzfig{modelMiosimpler} \qquad \qquad 
\input{ex1OpenX.tikz}
\qquad \qquad \qquad \qquad 
%\tikzfig{ex1IIX}
\rl{\input{ex1IIX_b.tikz}}
\]
Suppose that the quantum model $\modelML$ is specified by a circuit diagram of encoders, unitaries and measurements. We must first view $\modelML$ as a model of $\dagG$, by grouping together these channels into composites; formally this is a strict \strucdown{} as in Example \ref{ex:quantum-model-impl} (there depicted as dashed boxes). After doing so we can view $\modelML$ as a model of $G$ as above, factorising as below. 
\[
\input{circuit-simple.tikz}
\]

We have now described both $\modelML$ and $\modelMH$ as models of $\dagG$, and hence the same signature, and seek to interpret the outputs of the unitary $U$ in terms of the classical variable $X$ of $\modelMH$. To do so, we must exhibit an abstraction $\modelML \to \modelMH$ for some choice of queries. We now consider three forms of queries and corresponding abstractions of varying strengths.

 Firstly, an abstraction only at the level of input-output queries requires merely classical channels $\LtoH \colon I_L \to I_H$ and $\LtoH \colon O_L \to O_H$ on inputs and outputs, \rl{respectively, satisfying the following condition}\ ensuring the same input-output behaviour.  
\[
% \tikzfig{Lio} 
% \qquad = \qquad 
% \tikzfig{ex1abstr1}
\input{ex1abstr1simpler.tikz} 
% \qquad = \qquad 
% \tikzfig{Hio} 
\]
To actually refer directly to $X$, we should consider further queries.  
Secondly then, an abstraction based on interchange queries requires a single extra condition, corresponding to interchange at $X$. This states that inserting a second input for which we only use its embedding at $\hilbH_X$ (LHS) yields the same output as if we determined $X$ from that second input (RHS).
\begin{equation} \label{eq:int-ex-1}
% \tikzfig{ILinterchange}
% \qquad = \qquad
% \tikzfig{ex1abstr3}
\input{ex1abstr3simpler.tikz}
% \qquad = \qquad
% \tikzfig{IHinterchange}
\end{equation}
 Finally, a stronger requirement would be an abstraction based on opening queries. This requires a channel $\LtoH \colon \hilbH_X \to X$ (in practice implemented as a measurement) satisfying the following consistency condition for opening at $X$. This strengthens the interchange condition to one at the level of arbitrary states of $\hilbH_X$ with states of $X$, rather than merely those obtained from inputs via the encoder.  
\begin{equation} \label{eq:int-ex-2}
% \tikzfig{MLopen}
% \qquad = \qquad
% \tikzfig{ex1abstr2}
\input{ex1abstr2simpler.tikz}
% \qquad = \qquad
% \tikzfig{Hopen}
\end{equation}
Intuitively, either of \eqref{eq:int-ex-1}, \eqref{eq:int-ex-2} can be used as extra conditions on the quantum model which \rlb{justify an}\ association between the outputs $\hilbH_X$ of $U$ and the variable $X$. 

\end{example}

\begin{example}[Quantum Geoguesser] \label{ex:geoguesser}
Consider a quantum model $\modelML$ which from an input image $I$ returns a guess of its location $O$.  
\rlb{Suppose we wish to check whether}\
the model works by first assessing the weather $W$, the road type in the image $R$, and from these the country $C$ and final output $O$. We can represent this \rlb{hypothesis}\ by the following DAG $\dagG$ and model structure $\DD{G,O}$.  
\[
\input{ex2DAG.tikz}
\qquad \qquad \qquad \qquad \qquad 
\input{ex2DDG.tikz}
\]
\rlb{First of all, it must be possible to}\
factorise the gates of the quantum model $\modelML$ into the form of a model of $\dagG$, \rlb{similarly to}\ Example \ref{ex:PQC-as-DDG}, and exhibit a classical causal model $\modelMH$ of $\dagG$ (where in this model we can directly interpret the states of $R$ in terms of roads, $C$ as countries etc). 
 
Now, \rlb{in order to establish whether $\modelML$ behaves in keeping with $\modelMH$, when identifying the output $\hilbH_X$ of $U_X$ with $X$, for each variable $X$ of $\modelMH$, we have to}\ give an abstraction $\modelML \to \modelMH$ for some choice of queries. 
We again consider our three possibilities. Firstly, considering only input-output queries would again mean that the models share input-output behaviour up to classical channels on inputs $I_L \to I_H$ and outputs $O_L \to O_H$,which amounts to the following. 
\[
\input{ex2abstr1simplest.tikz}
\]
To actually relate the spaces $\hilbH_R$ with $R$, and so on, we should go further to at least an abstraction based on interchange queries. For this we obtain a condition for each parallelisable subset of vertices. The condition for $\{R\}$ is left-hand below, which is the same as that for $\{W\}$ (up to swapping the left and right inputs on both sides of the equation), and that for $\{C\}$ is to the right. Other subsets are either not parallelisable, or lead to trivial conditions since they span an entire horizontal slice of the \rl{above respective diagrams}. 
\[
% \tikzfig{ex2abstr2simplest}
\input{ex2abstr2simplestsmall.tikz}
\qquad \qquad \qquad 
% \tikzfig{ex2abstr3simplest}
\input{ex23small.tikz}
\]
Finally, a more strong abstraction based on opening queries would include a channel $
\LtoH \colon \hilbH_X \to X$ for each variable $X$, satisfying a consistency condition for each subset of non-input variables (several of which may well be trivial). For example, the consistency condition for opening at $\{R\}$ would be the following. 
\[
\input{ex3openR.tikz}
\]
\end{example}

\color{black}

\section{Discussion} \label{sec:discussion}

We close with some points of discussion and directions for future work. 

\paragraph{Downward vs upward abstraction.} 
A key finding of this work is that abstraction relations naturally fall into two formally distinct forms, \qdownshort{}s, in which queries are mapped from high to low level, and \qupshort{}s, in which they are mapped in the converse direction, from low to high level. 
The literature on causal abstraction has, to our knowledge, worked entirely in terms of \qupshort{}s, which capture causal abstractions at the level of specific `concrete' (Do-)interventions. 

However, our analysis suggests that it is the notion of \qdownshort{}, which is perhaps the most natural and fundamental. 
Firstly, a \qdownshort{} has a simple formal description, consisting of a functor and natural transformation, which a category theorist might naturally bundle up into a `morphism' between models (see below and App.~\ref{app:further-formalisation}). 
Secondly, most more specific notions of causal abstraction, let alone actual examples, are of the \qdownshort{} type. 
Apart from a generic exact transformation without further assumptions and explicitly distributed causal abstractions like \isoCCA\ and \isointabs, any other noteworthy notion we are aware of can be seen as, in essence, a \qdownshort{}. 
The latter class includes \CCA, \restrictedCCA\ and \CF{} abstraction, where the common presentation of them as \qupshort{}s that make reference to concrete interventions, arises from a down abstraction via Proposition \ref{prop:Exact-trans-from-ref}. 

While the notion of \qupshort{} may be important for certain scenarios, its prominence may simply be due to a focus on concrete interventions and lack of familiarity with open models. Instead \qdownshort{} may prove a more helpful default notion of abstraction in future research.

\paragraph{Mechanism-level causal abstraction.} 
A new contribution from our approach is the notion of a \strucdownfull\ and in particular its instance as a \emph{\strongCCA}, a novel strengthened form of constructive causal abstraction. 
As per our definition this is a \CCA, complemented with consistency conditions for each high-level causal mechanism. 
Thanks to Theorem \ref{Thm:strongCCAnew} we saw that this is in fact equivalent, on the one hand, to a \CCA, where the partitioning $\HtoL$ of low-level variables satisfies a certain purely graph-theoretic condition, and on the other hand, to a functor $\HtoLS \colon \strucSH \to \strucSL$ and a natural transformation respecting the overall network diagrams for the models. 
Noticeably, the latter formulation means that one can define a \strongCCA\ without directly mentioning Do-queries (not even abstract ones in our sense) or any interventions at all. 

While it is a strictly stronger notion than \CCA, it captures special cases of \CCA\ that appear to inform common intuitions, namely cases where picking some partition of the variables of a given (low-level) causal model is used in a relatively straightforward way to \emph{define} a higher-level causal model such that the pair then forms a \CCA. 
Indeed the main examples of \CCA\ from, e.g., \cite{BeckersEtAl_2019_AbstractingCausalModles} are mechanism-level (Ex.~\ref{ex:CCA} being one of them). 
In future it would thus be desirable to understand this novel notion better from a practical perspective -- are most practical examples in fact of this strong, mechanism-level kind, and is it something we should expect of any `good' instance of constructive abstraction?

\paragraph{Iso-constructive and strong causal abstraction.}

Inspired by work in \cite{geiger2024finding, geiger2023causal} we defined the notion of \rlb{an}\ \emph{\isoCCA} - the composition of an `isomorphism-induced' abstraction with a \CCA. 
These form a class of exact transformations based on general (non-Do-)\allowbreak{interventions}, featuring distributed low-level `representations' of high-level variables, with many nice properties yet not requiring a \varalign. Separately we also outlined some natural conditions one would arguably like any practical causal abstraction to satisfy under the name of \emph{\strongCA}. \rlb{The latter}\ generalises \CCA\ and \rlb{is closely related to notions in the literature, but not}\ identical, above all due to allowing general interventions (see Sec.~\ref{sec:lit-comparison}). 

Now, \isoCCA{} also generalises \CCA{} and is indeed an instance of \strongCA. A natural question thus is what other (if any) examples of strong causal abstraction are that are \emph{not} \isoCCA{}s. 
%Finally, we note that understanding conditions that in turn single out

\rlc{We note that understanding conditions that in turn single out the constructive case among \isoCCA{} (such as restriction to only Do-interventions) is related to a conjecture formulated in \cite{BeckersEtAl_2019_AbstractingCausalModles} concerning the equivalence of what are there called strong and constructive $\tau$-abstraction (again, see Sec.~\ref{sec:lit-comparison}). 
Also with a view to a better understanding of the landscape of abstractions with \varalign{s}, it would be interesting to spell out in our setup the significance of violations of the abstract invariance condition from \cite{xia2024neural} and the compositional nature of projected abstractions from \cite{xia2025CAUnderLossyReps}, as well as the notions of graphical consistency, cluster DAGs and the results from \cite{schooltink2024aligning}.}

\paragraph{Abstractions for queries with algebraic structure.} 
While our main definitions of abstraction refer to just sets of queries organised into signatures $\sigQ$, the latter may have further non-trivial algebraic structure. 
We defined what it means for $\sigQ$ to have a \emph{monoid structure} and for the map $\queryLtoH$ of \rlc{an}\ \qup{} to be a \emph{homomorphism} that preserves the algebraic structure of the queries (Sec.~\ref{sec:alg-structure}).  
This allows one to treat the `order-preservation' discussed in the causal model literature as a special case and suggests the stronger homomorphism property to be a perhaps more natural condition to require. 
A further categorical treatment of this extra level of composition on queries would be interesting to explore in future.

{\paragraph{Abstractions for quantum AI.}
In Section \ref{sec:quantum-abstraction} we introduced abstractions between quantum circuits and high-level classical causal models. We speculated that these may provide a principled way to constrain such a quantum model to behave according to, and thus be interpretable in terms of, a high-level classical causal model. Enforcing such constraints from the outset (e.g. in training) may provide a way to obtain explainable quantum AI models, just as causal abstraction may be applied when training neural networks to behave according to a causal model \rlb{(also see Sec.~\ref{sec:intro}).}

If so, an important question is which queries should be taken in such an abstraction, beyond simply matching input-output behaviour. We illustrated consistency conditions corresponding to different queries in Examples \ref {ex:PQD-DDG-ex-1} and \ref{ex:geoguesser}. Interchange queries are perhaps the most practical, since they can simply be tested on classical inputs. However, since they may only be performed for parallelisable subsets they may not be sufficient to fully capture certain high-level structures (DAGs). Another issue with the approach may be that in some cases the constraints imposed by the abstraction may remove necessary coherence in the quantum model, undermining its \rlb{use}\ in the first place. 

In future work it would be interesting to address these questions and more fully assess the potential value of abstraction as a means for assigning interpretable structure to quantum models.

\color{black}
\paragraph{Abstraction with continuous PQCs.}
While in Section \ref{sec:quantum-abstraction} we worked only with finite classical systems, in practice quantum models are typically implemented as PQCs whose inputs $I$ and outputs $O$ are valued \emph{continuously}, in $\mathbb{R}^n$. Moreover, the overall function described by a PQC is usually given by taking the \emph{expectation value} of a measurement (or collection of measurements), obtaining a deterministic map $I \to O$ sending each input to these expectation value(s). In fact one can upgrade $\QC$ to a similarly defined category $\Hyb$, which includes infinite classical systems such as $\mathbb{R}$, and expectation values modelled as an extra graphical ingredient on the classical subcategory of $\Hyb$ \cite{HybPaper}. In future it would be interesting to extend our approach to cover abstractions acting on such PQCs.

\paragraph{Quantum causal abstraction.} 
While in Section \ref{sec:quantum-abstraction} we met abstractions from quantum circuits, seen as compositional models, to high-level classical causal models, to describe a truly \emph{causal} abstraction from quantum to classical would instead require an abstraction from a \emph{quantum causal model} \cite{barrett2019quantum, allen2017quantum}. 
There does exist what is essentially a certain sub-class of quantum causal models due to Costa and Shrapnel which are definable as compositional models (Example \ref{ex:QCM-as-comp-model-over-DAG}) \cite{costa2016quantum, giarmatzi2019quantum}, being essentially defined in terms of compositional rather than causal structure. However, in general quantum causal models are defined instead in terms of `higher-order maps', and so cannot straightforwardly (yet) be seen as compositional models (due to technical open questions \cite{vdLEtAL_2025_UnitaryCausalDecompositions,lorenz2021causal}). Nonetheless notions of quantum causal abstraction would be interesting from the perspective of foundations of quantum theory and philosophy of causation, namely as a possible means to study when decoherence may yield a `causally-consistent emergence of classical reality'.\footnote{\rl{There also are other frameworks of quantum causal models, which however do not appear suitable for studying the suggested foundational question here. As well as those of \cite{barrett2019quantum, allen2017quantum} and the special case in \cite{costa2016quantum, giarmatzi2019quantum}, the notion due to Henson, Lal and Pusey \cite{henson2014theory} is not `fully quantum' seeing as classical variables (such as measurement outcomes) are already baked into the notion  itself; 
further frameworks that share the latter feature include, for instance, \cite{fritz2012beyond, laskey2007quantum, pienaar2015graph,ried2015quantum, ferradini2025cyclic}; see \cite{lorenz2020quantum, allen2017quantum} for overviews of these and further frameworks.}}

The development of genuine quantum causal abstraction is thus left for future work, with Sec.~\ref{sec:quantum-abstraction} and App.~\ref{app:abs-in-pm-via-tracedC} providing possible starting points by first extending the notion of compositional model so as to incorporate higher-order maps.

\paragraph{Cyclic structures and approximate versions.} 
The notion of causal model this work used -- albeit more general in many respects than the ones in much of the literature -- did assume acyclicity. 
There are however reasons to study causal abstraction on the basis of in general cyclic models such as, e.g., the study of equilibration of time-evolving systems from a causal model perspective \cite{rubenstein2017causal}, but also the interpretability of machine learning models in \rlb{the}\ field of XAI has been argued to motivate cyclicity \cite{geiger2023causal}.\footnote{\rl{Note though that, unlike the rest of the literature on cyclic causal models, the notion in \cite{geiger2023causal} is the weakest possible -- essentially any function on the variables, without demanding unique solutions or that it yields a well-defined probability distribution on outputs give some stochasticity on exogenous variables. Allowing for interventions more general than Do-interventions and also for cyclic structures but excluding paradoxical or inconsistent models is a non-trivial task without simple characterisations thus far.}} 
It would be interesting to generalise the categorical treatment of causal models and abstraction to cyclic causal models. 
A natural way to do so is via the parallel-mechanism process $\parmechdiag_\modelM$ induced by a model $\modelM$ together with the `trace-trick' sketched in App.~\ref{app:abs-in-pm-via-tracedC}; a further possible starting point is the framework from \cite{ferradini2025cyclic}. Both of these also naturally link with the study of abstraction for quantum causal models. 

A further direction in which one may try to extend the categorical treatment of abstraction is to allow approximate abstractions in the sense as studied in \cite{beckers2020approximate, geiger2024finding, geiger2023causal}, which is important for practical machine learning contexts. Here the consistency condition for each query is not required to hold on the nose anymore, but instead one demands respective left- and right-hand sides be close to each other in some suitable sense.

\paragraph{Further formalisation.}
There is a further step of formalising the central definitions of this work which is natural to consider from a categorical perspective. This is based on defining a category $\Model(\catC)$ whose objects are models in $\catC$ and morphisms are such that they essentially include both \qdown{}s and \qup{}s. More details and the basic definitions of this approach can be found in App.~\ref{app:further-formalisation}, which would be interesting to fully develop in future. 

It would also be interesting to use our approach to explore connections between (causal) abstractions and categorical notions of abstraction from computer science (such as those between programming languages or types). The latter are often also formalised in terms of functors and natural transformations, which now moreover form adjunctions.

\addcontentsline{toc}{section}{References}
\bibliographystyle{alpha}
\bibliography{causabs}

\appendix

\section{Proofs}\label{app:proofs}

\begin{proof}[\proofof~Remark \ref{rem:var-alignment-ref}]
Any monoidal functor $\HtoL \colon \strucV^*_H \to \strucV^*_L$ sends each  $\syn{V}$ in $\varV_H$ to some list $\HtoL(\syn{V})$ in $\varV_L$, and must preserve identities. Since $\HtoL$ sends queries to queries, $\HtoL(Q_{(\syn{V})}) = Q_{\syn{X}}$ for some subset $\syn{X}$ in $\varV_L$, and then $\HtoL(\syn{V}) = \syn{X}$. But then $\HtoL(Q_{\varV_H}) = Q_{\syn{Y}}$ for some subset $Y$, which must be equal to the concatenation of $\HtoL(\syn{V})$ for ${\syn{V} \in \varV_H}$. Hence the $\HtoL(\syn{V})$ must all be disjoint as $\syn{Y}$ contains no repeats. The properties of $\LtoH$ are those of \qdown s. 
\end{proof}

\begin{proof}[\proofof~Theorem \ref{Thm:CCA-as-refinement}]
Given a constructive abstraction $(\HtoL, \LtoH)$ we define a \qdown{} as follows. We define $ \HtoL$ on types by concatenation and on queries by 
\begin{equation} \label{eq:query-map-CA}
\syn{\HtoL \left(\Do(S) \colon \VinH \otimes S \to X \right) := \left( \Do(\HtoL(S)) \colon \HtoL(\VinH \otimes S) \to \HtoL(X) \right)}
\end{equation} 
using that $\pi(\VinH) = \VinL$. We define $\LtoH$ on arbitrary types by taking products as \rl{in \eqref{eq:tau-factors}}. Taking marginals \rl{of \eqref{eq:CAabstraction}}\ yields all the necessary naturality equations stating precisely that $(\HtoL,\LtoH)$ forms a \qdown. 

Conversely, let $(\HtoL, \LtoH)$ be a \qdown. Since $\HtoL(\VinH) = \VinL$ and $\HtoL$ is a monoidal functor sending queries to queries, it must be as in \eqref{eq:query-map-CA}, as the RHS is the only query of input type $\VinL \otimes \HtoL(\syn{S})$. Now consider the high-level query $\syn{Q}$ with outputs $\VoutH$ intervening on all of $\VintH$. Then $\HtoL(\syn{Q})$ is the low-level query with outputs $\HtoL(\VoutH)$ and intervening on $\HtoL(\VintH)$. Hence the $\HtoL(\syn{X})$ are all subsets, with $\HtoL(\VoutH) \subseteq \VoutL$. Since the low-level inputs $\HtoL(\VinH \cup \VintH)$ have no repeats, the $\HtoL(\syn{X})$ must all be disjoint.  
 Hence $\dvaralignpair{\HtoL}{\LtoH}$ is a \varalign.   
Naturality then ensures that \rl{\eqref{eq:CAabstraction}}\ holds. 
\end{proof}

\begin{proof}[\proofof~Proposition \ref{prop:interchange-queries}]
Given an \rl{\restrictedCCA}, define $\HtoL$ on queries via \eqref{eq:weakCA-query-map} and extend $\HtoL, \LtoH$ to arbitrary types by taking products as usual. All query consistency conditions follow from \eqref{eq:weak-CA-natural} by taking marginals. 

Conversely, let $(\HtoL, \LtoH)$ be such a \qdown{}. Since it must send queries to queries, considering their outputs shows that we must have $\HtoL(\VoutH) \subseteq \VoutL$, and we have $\HtoL(\VinH) = \VinL$ and $\HtoL(\VintH) \subseteq \VintL$ by assumption. Hence the $\HtoL(\syn{X})$ must all be disjoint subsets for $X$ within either $\VinH$ or $\VintH$, and hence all disjoint since $\VinL$ and $\VintL$ are. Hence $\dvaralignpair{\HtoL}{\LtoH}$ form a \varalign. Then \eqref{eq:weak-CA-natural} follows from naturality of $\LtoH$.
\end{proof}

\begin{proof}[\proofof~Proposition \ref{prop:Model-Induction}]

By assumption the map is well-defined, and it remains for us to verify the naturality condition \eqref{eq:ex-trans-induced}. 

It suffices to show that $\modinnin \circ \modelM = \inmod{\modelM} \circ \modinin$, since then for any intervention $\inti$ we can simply substitute $\modelM$ with 
$\transform{\inti}{\modelM}$ to obtain \eqref{eq:ex-trans-induced}. Let $\vin$ be a sharp state of $\Vin$. Let $\vnin = \modelMio \circ \vin$, and define $v = \vin \otimes \vnin$ as a state of $V$. By Lemma \ref{lem:fixpoints} we have $v = \parmechdiag_\modelM \circ v$. Hence $\modin \circ v = \modin \circ \parmechdiag_\modelM \circ v = \parmechdiag_{\inmod{\modelM}} \circ \modin \circ v$.  By Lemma \ref{lem:fixpoints} again the latter holds iff, writing $\modin \circ v = \win \otimes \wnin$ we have $\wnin = \io{\inmod{\modelM}} \circ \win$. Since $\modin$ factors as in \eqref{eq:respects-inputs} we have $\win = \modinin \circ \vin$ and $\wnin = \modinnin \circ \vnin = \modinnin \circ \modelMio \circ \vin$. Hence $\io{\inmod{\modelM}} \circ \modinin \circ \vin = \modinnin \circ \modelMio \circ \vin$. Since $\vin$ was arbitrary \rl{(and $\catC$ is assumed to have enough states)}\ we conclude $\io{\inmod{\modelM}} \circ \modinin = \modinnin \circ \modelMio$ as required. 
\end{proof}

\begin{proof}[\proofof~Equations \eqref{eq:DDo-pic} and \eqref{eq:DII-pic}] 
We first verify \eqref{eq:DDo-pic}. Let $\doendo{p}$ denote the LHS of \eqref{eq:II-diag}, replacing state $s$ by $p$ and $S$ by $S'$. Then by definition: 
\[
\parmechdiag_{\DDo(S'=p,\rho)} := \modin^{-1} \circ 
%\parmechdiag_{\Do(S'=p)(\inmod{\modelL})} 
\rlb{\parmechdiag_{\inmod{\modelL}_{\Do(S'=p)}}}
\circ \modin  = \modin^{-1} \circ \doendo{p} \circ \parmechdiag_{\inmod{\modelL}} \circ \modin = \modin^{-1} \circ \rlb{\doendo{p}} \circ \modin \circ \parmechdiag_\modelL
\]
using \eqref{eq:II-diag} and then \eqref{eq:ex-trans-induced} in the last step. Next, the equation \eqref{eq:DII-pic} is a special case, using the RHS of \eqref{eq:II-diag}, 
once we observe that for any subset of outputs $Y_j$ (in particular for $Y_j = \HtoL(S_j)$) we have: 
\[
\input{DII-proof.tikz}
\]
in the final step again using \eqref{eq:ex-trans-induced}. 
\end{proof}

\begin{proof}[Proof of Proposition \ref{prop:c-level-is-d-abs}]
Composing $\LtoH \colon \semcompM{\modelL}{-} \circ \HtoLS \Rightarrow \semcompM{\modelH}{-}$ with the identity transformation $1_{\qabsMHfunc}$ on $\qabsMHfunc$ we obtain a transformation 
$\LtoH \circ 1_{\qabsMHfunc} \colon \semMLfunc \circ \HtoL \Rightarrow \semMHfunc$, since $\semcompM{\modelL}{-} \circ \HtoLS \circ \qabsMHfunc = \semcompM{\modelL}{-} \circ \qabsMLfunc \circ \HtoL = \semMLfunc \circ \HtoL$ and $\semcompM{\modelH}{-} \circ \qabsMHfunc = \semMHfunc$.
\end{proof}

\subsection{Mechanism-level causal abstraction} \label{app:strongCA-proof}
We now work towards a proof of our characterisation of mechanism-level causal abstraction. 

We will use the following. Firstly, recall the network diagram $\netdiag{\modelM}$ associated to an open DAG $G=G_\modelM$ for a causal model $\modelM$. We can characterise such diagrams as follows. 

 \begin{definition} \label{def:network-diagram}
A \emph{network diagram} is a string diagram $\diagD$ built from single-output boxes, copy maps and discarding:
\[
\input{nd-box.tikz}
\qquad 
\input{nd-copy.tikz}
\qquad 
\begin{tikzpicture}[tikzfig]
	\begin{pgfonlayer}{nodelayer}
		\node [style=upground] (0) at (1.5, 0) {};
		\node [style=none] (1) at (1.5, -0.75) {};
	\end{pgfonlayer}
	\begin{pgfonlayer}{edgelayer}
		\draw (1.center) to (0);
	\end{pgfonlayer}
\end{tikzpicture}

\] 
along with labellings on the wires, such that any wires not connected by a sequence of copy maps are given distinct labels, and each label appears as an output at most once and as an input to any given box at most once.
\end{definition}

Such diagrams are equivalent to open DAGs via the correspondence outlined earlier in Section \ref{subsec:causal-models}. We call a network diagram $\diagD$ \emph{normalised} if every non-input variable in the diagram has a path to an output. Up to swappings of input wires, any such diagram takes the form: 
\[
\input{normalised-network-diag.tikz}
\]
where $\diagD'$ is a network diagram which does not feature $\discard{}$. 

\color{black}

\begin{lemma} \label{lem:unique-ND}
Let $\SigS=\SigS_\modelM$ be the signature of an open causal model $\modelM$ with input variables $\syn{\Vin}$ and let $\syn{X}, \syn{Y}$ be lists of variables from $\Sig$, each of which feature no repeated variables.
\begin{enumerate}
\item \label{enum:NND-equivalent-conditions}
A diagram $\diagD \colon \syn{X} \to \syn{Y}$ in $\strucScd = \FreeCD(\Sig)$ is a normalised network diagram iff for every variable $\syn{V}$ such that $\syn{c_V}$ appears in $\diagD$ we have that $\syn{c_V}$ appears only once, $\syn{V}$ is not an input to $\diagD$, and $\syn{V}$ has an upward path to an output in $\diagD$.
\item \label{enum:path-to-diag}
Suppose there exists a path $\syn{V \to Y}$ in $\dagG$ not passing through $\syn{X}$. Then $\syn{c_V}$ belongs to any diagram $\diagD \colon \syn{X} \to \syn{Y}$.
\item \label{enum:LemFreeCD}
There is at most one normalised network diagram $\diagD \colon \syn{X} \to \syn{Y}$ in $\FreeCD(\Sig)$. 
\item \label{enum:LemFreeCart}
If $\syn{X} \subseteq \Vin$ there is at most a unique morphism $\diagD \colon \syn{X} \to \syn{Y}$ in $\FreeCart(\Sig)$.
\end{enumerate}
\end{lemma}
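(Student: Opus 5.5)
The plan is to work with a concrete normal form for morphisms of the free categories, and then argue each part from it in turn. In $\FreeCD(\Sig)$, a string diagram built from the mechanism boxes $\syn{c_V}$, copy maps and discards (modulo the commutative comonoid and naturality-of-discard axioms) is determined by a finite multiset of box occurrences together with a wiring: each box input, and each diagram output, is connected to exactly one source, where a source is either a diagram input or a box output. Discards only record that a source is used nowhere. Since boxes are single-output and labelled by their variable, the wiring is type-correct: an input wire labelled $\syn{W}$ of a box comes from either the input $\syn{W}$ of $\diagD$ or an occurrence of $\syn{c_W}$.

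For part \ref{enum:NND-equivalent-conditions}, I would first prove the forward direction straight from Definition \ref{def:network-diagram} and normalisation. Distinct non-copy-connected wires carry distinct labels, so each $\syn{c_V}$ occurs at most once. $\syn{V}$ cannot also be an input wire, and normalisation supplies the upward path. For the converse, the conditions give a unique source per label. I would then check the labelling conditions of Definition \ref{def:network-diagram}: each label is output at most once, because $\syn{Y}$ has no repeats, and is input to any box at most once, because $\Pa(\syn{V})$ is a set. The path condition is exactly normalisation.

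For part \ref{enum:path-to-diag}, take a path $\syn{V}=\syn{V_0\to V_1\to\dots\to V_k}=\syn{Y_i}$ avoiding $\syn{X}$, and argue by backward induction along it. The output $\syn{Y_i}\notin\syn{X}$ must be fed by a box output, so $\syn{c_{Y_i}}$ occurs in $\diagD$. That box needs an input wire $\syn{V_{k-1}}$. Since $\syn{V_{k-1}}\notin\syn{X}$, this wire must come from an occurrence of $\syn{c_{V_{k-1}}}$. Continuing down the path yields $\syn{c_V}$.

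For part \ref{enum:LemFreeCD}, let $\diagD$ be normalised with type $\syn{X}\to\syn{Y}$. I claim its set of boxes is exactly the set of $\syn{V}$ with a path to $\syn{Y}$ avoiding $\syn{X}$. Part \ref{enum:path-to-diag} gives one inclusion. For the other, an occurring box $\syn{c_V}$ has an upward path in $\diagD$ to an output. Every wire along that path is a box output, and no box variable is in $\syn{X}$ by part \ref{enum:NND-equivalent-conditions}, so this is a path in $\dagG$ avoiding $\syn{X}$. With the set of boxes fixed, each occurring at most once, the wiring is forced: a label $\syn{W}$ is sourced from $\syn{c_W}$ if that box is present, and otherwise from input $\syn{W}$. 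The latter must then exist, which forces the type. Unused inputs are discarded. Hence any two normalised diagrams of this type coincide.

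For part \ref{enum:LemFreeCart}, $\FreeCart(\Sig)$ adds the equations making every mechanism a deterministic channel. With these, discarded boxes can be deleted, and duplicate occurrences of $\syn{c_V}$ fed by the same inputs can be merged. I would first argue that by induction every morphism reduces to a normalised network diagram. The one subtlety is that duplicates may have different inputs, so I would merge top-down after first merging their parents; this is well-founded by acyclicity. Because $\syn{X}\subseteq\Vin$, the inputs have no mechanisms, so no box can coincide with an input label and the conditions of part \ref{enum:NND-equivalent-conditions} hold. Part \ref{enum:LemFreeCD} then gives uniqueness.

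I expect the main obstacle to be making the normal form for $\FreeCD$ precise, i.e.\ justifying that morphisms are exactly such wirings. I would take this as the standard description of free cd-categories (hypergraph-style, as in \cite{fritz2023free}) rather than reprove it.
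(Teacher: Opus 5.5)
Your proposal is correct. Parts 1--3 follow the paper's argument closely. Your part 3 is a slightly more direct version of the paper's: the paper opens the model at $X$ and identifies the sub-DAG underlying the diagram, whereas you characterise the set of boxes as the variables with a path to $Y$ avoiding $X$ and then note that the wiring is forced.

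The genuine difference is in part 4. The paper argues inside the free Cartesian category. Determinism lets any morphism into $(Y_1,\dots,Y_n)$ factor as a copy followed by single-output morphisms. A morphism $X\to Y$ with $Y$ a non-input must then be $c_Y$ composed with a morphism $X\to\mathrm{Pa}(Y)$, and uniqueness follows by induction over the DAG. This is a term-style argument that is independent of parts 1--3.

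You instead treat the free Cartesian category as a quotient of the free cd-category. You rewrite any representative into a normalised network diagram: first delete boxes whose output is unused (channel equation), then merge duplicate boxes in topological order (determinism). Finally you invoke part 3. Your route reuses the earlier parts and exhibits the unique morphism explicitly as the normalised network diagram, at the cost of a rewriting and termination argument. It also sidesteps the paper's loose base case: ``if $Y$ has no parents there are no generators with output $Y$'' fails for non-input roots, whose mechanism is a state. That slip is harmless there, because the inductive step covers it.

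Two small points are worth making explicit.
\begin{enumerate}
\item The hypothesis $X\subseteq V^{\mathrm{in}}$ is already needed in the merging step, not only to check the conditions of part 1 at the end. It guarantees that every label has a unique possible source, so after the parents are merged all copies of $c_V$ have identical inputs and can be merged.
\item In your normal form for the free cd-category, ``naturality-of-discard'' must mean only the compatibility of discarding with $\otimes$ and $I$. Genuine naturality of discard with respect to boxes would put you in the free Markov category, where the deletion step you reserve for part 4 would already hold.
\end{enumerate}
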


\begin{proof}
\eqref{enum:NND-equivalent-conditions}: By definition any normalised network diagram must satisfy the first two conditions, else the diagram would contain two $\syn{V}$ wires not connected only by copy maps, and the final condition is required for normalisation. Conversely, suppose that they hold. By definition $\syn{X}, \syn{Y}$ have no repeated variables so any input and output appears at most once, and $\diagD$ is built from the necessary pieces. Let $\syn{V}$ be an input to the diagram. Since it is not repeated and cannot arise as an output from any box, all $\syn{V}$ wires must be connected by copy maps. Suppose $\syn{V}$ is not an input and appears in the diagram. Then by the nature of the signature $\syn{V}$ must arise as an output of some $\syn{c_V}$ box, with no other boxes between $\syn{V}$ and $\syn{c_V}$ as none other have output type $\syn{V}$, so this must be reducible to a sequence of copy maps. Moreover by assumption there is a unique such box $\syn{c_V}$ so this covers all instances of $\syn{V}$. Hence $\diagD$ is indeed a network diagram. 

\eqref{enum:path-to-diag}: Suppose there exists such a path $\syn{V \to Y_i}$ where $\syn{Y_i \in Y}$. We prove that $\syn{c_V}$ appears in $\diagD$ by induction over the shortest length $n$ of such a path. 

If $n=0$ then $\synV \in \syn{Y}$, with $\syn{V = Y_i}$. Since $\syn{Y_i}$ is not an input to the diagram, it must arise as an output of some box, and so $\syn{c_{Y_i}}$ must appear in the diagram. Suppose the result is true for $n$. Then if $\synV$ has a shortest path of length $(n + 1)$ it has a child $\syn{W}$ with shortest path of length $n$ and so $\syn{c_W}$ appears in $\diagD$. Then a $\syn{V}$ wire appears in $\diagD$, and so similarly since $\syn{V}$ is not an input to the diagram, $\syn{c_V}$ must appear in $\diagD$.

\eqref{enum:LemFreeCD}:
Let $\diagD \colon \syn{X} \to \syn{Y}$ be such a normalised network diagram. Since $\diagD$ is a network diagram, for each of its inputs $\syn{X_i} \in \syn{X}$ it cannot be that $\syn{c_{X_i}}$ appears in $\diagD$. Hence without loss of generality we can assume each $\syn{X_i}$ is an input to the model itself. 

Let $\dagG=\dagG_\modelM$ be the open DAG corresponding to $\modelM$ (for any choice of outputs). Then $\diagD$ also corresponds to an open DAG $H$ with vertices given by a subset of those of $\dagG$. Moreover, since $\diagD$ is built from $\Sig$, $H$ has edges given by a subset of those of $\dagG$, and if $V \in H \setminus \invar(H)$ then $\Pa^\dagG(V) \subseteq H$ with all edges $\Pa^\dagG(\synV) \to H$ in $\diagD$. Hence $\Pa^H(\synV) = \Pa^G(\synV)$ for each $V \in H$. 

We claim that $H$ is precisely the full sub-DAG of $\dagG$ given by all vertices $\syn{V}$ for which there exists a path $\syn{V} \to \syn{Y}$, plus any further inputs of $\diagD$, making $H$, and hence $\diagD$ unique. Indeed since $\diagD$ is normalised, if $\synV \in H \setminus \invar(H)$ then there is a path $\synV \to \syn{Y}$ in $H$, and hence also such a path in $G$. Conversely if $\synV$ has a path to $\syn{Y}$ in $G$ then $\syn{c_V}$ appears in $\diagD$ by \eqref{enum:path-to-diag} and so $\synV \in H$, as required.

\eqref{enum:LemFreeCart}: 
There is a unique diagram $\diagD \colon \syn{X} \to ( )$ given by $\discard{X}$. Since every morphism in $\strucScart$ is deterministic, any diagram $\diagD \colon \syn{X} \to (\syn{Y_1},\dots,\syn{Y_n})$ takes the form:
\[
\input{cart-factors.tikz}
\]
Hence it suffices to consider when $n=1$. We proceed by induction over the DAG. If $\syn{Y}$ has no parents then there are no generators with output $\syn{Y}$, and so the unique diagram is given, up to swaps, by $\id{Y} \otimes \discard{}$. Now suppose $\Pa(\syn{Y})$ is non-empty and the result holds for all parents of $\syn{Y}$. Then $\syn{Y}$ is not an input to the model and hence not one to the diagram, and so as an output must arise from a $\syn{c_Y}$ box. Then we can factor the diagram as follows, and by induction this is its unique form. 
\[
% % \tikzfig{cart-factors} \qquad \qquad \qquad \qquad 
% \tikzfig{discardbuty} \qquad \qquad \qquad \qquad 
\input{pay-factor.tikz}
\]
\end{proof}

\begin{proposition} \label{prop:caus-mech-abs-main}
Let $\HtoLS \colon \strucSH \to \strucSL$ be a cd-functor. Write $\HtoL(\syn{X}) := \HtoLS(\syn{X})$ on variables and types $\syn{X}$. Then $\HtoLS$ is structurally well-behaved iff the subsets $\HtoL(\syn{X})$ are all disjoint with $\HtoLS(\VoutH) \subseteq \VoutL$, $\HtoLS(\VinH) \subseteq \VinL$, and the following square commutes: 
\begin{equation} \label{eq:CCA-square}
\input{CCA-square.tikz}
\end{equation}
when as in \ref{eq:query-map-CA} we define $\syn{\HtoL (\Do(S) \colon \VinH \otimes S \to X) := (\Do(\HtoL(S)) \colon \HtoL(\VinH \otimes S) \to \HtoL(X))}$. Moreover this is the only possible such functor $\HtoL$ mapping queries to queries making the diagram commute. 
\end{proposition}
\begin{proof}
For the final statement, since $\qabsMLfunc$ and $\qabsMHfunc$ are the identity on types (variables), $\HtoL$ and $\HtoLS$ must agree on objects. Then the above is the only well-typed functorial map from queries to queries, so the functor $\HtoL$ is unique if it exists. 

Suppose that the square \eqref{eq:CCA-square} does commute, we claim that each $\HtoLS(\syn{c_X})$ must be a normalised network diagram. Indeed consider the query $\synQ = \openquery{\syn{X}}{\VinH}{\syn{Y}}$ where $\syn{Y := \Pa(X) \setminus \VinH}$. By definition this has $\qabsMH{\synQ} = \syn{c_x \otimes \discard{\VinH \setminus \Pa(X)}}$. But then $\qabsML{\HtoL(\synQ)} = \HtoLS(\qabsMH{\synQ}) =  \HtoLS(\syn{c_X}) \otimes \discard{\HtoL(\Vin \setminus \Pa(X)}$ and so the latter must be a normalised network diagram, and hence $\HtoLS(\syn{c_X})$ must be also, as required. 

Suppose that $\HtoLS$ satisfies the above conditions. By definition structural well-behavedness requires that $(\HtoL(\syn{X}))_{X \in \VH}$ satisfies the subset and disjointness conditions above. By assumption each $\HtoLS(\syn{c_X})$ is a normalised network diagram. Moreover we claim it has non-input variables $\mechlevelset(\syn{X})$. Indeed suppose that $\syn{c_Y}$ appears in the diagram. Then there must be a path from $\syn{Y}$ to the output $\syn{\HtoL(X)}$ within the diagram. Since it is a network diagram it cannot pass through an input to the diagram, and so $\syn{Y} \in \mechlevelset(\syn{X})$. Conversely if $\syn{Y} \in \mechlevelset(\syn{X})$ then $\syn{c_Y}$ appears in $\HtoLS(\syn{c_X})$ by Lemma \ref{lem:unique-ND} \eqref{enum:path-to-diag}.

We now show that for each structure type (Cartesian, Markov, cd), the square \eqref{eq:CCA-square} commutes iff $\HtoL$ is also structurally well-behaved in the corresponding sense. 

\ \ 

\underline{Cartesian case $\FreeCart(\Sig)$:} Suppose that $\HtoL$ is not strong. Let $\syn{V} \in \HtoL(\syn{Y}) \cap  \mechlevelset(\syn{X})$. Then $\syn{Y} \not \in \Pa(\syn{X})$. Let $\syn{S} :=  \Pa(\syn{X}) \cup \syn{Y} \setminus \VinH$. Consider the query $\syn{Q} = \openquery{\syn{X},\syn{Y}}{\VinH}{\syn{S}}$, i.e.~defined as in \eqref{eq:Q-useful} but now with $\syn{Y}$ as both an output and input. Then: 
\[
\input{cond-again.tikz}
\]
 Then $\HtoLS(\qabsMH{\syn{Q}})$ contains two disconnected instances of $\syn{V}$, one internal to $\HtoLS(\syn{c_X})$ and one to $\HtoL(\id{\syn{Y}})$. Hence it is not a network diagram.

Conversely, suppose that $\HtoL$ is strong. Consider an arbitrary query $\syn{Q} = \openquery{\syn{X}}{\VinH}{\syn{S}}$. We will show that $\HtoLS(\qabsMH{\syn{Q}})$ does not contain $\syn{c_V}$ for any $\syn{V} \in \HtoL(\syn{S})$. Indeed if $\syn{c_V}$ belongs to this diagram then there is some $\syn{c_Y}$ appearing in $\qabsMH{\syn{Q}}$ with $\syn{c_V}$ belonging to $\HtoLS(\syn{c_Y})$, i.e.~$\syn{V} \in \mechlevelset(\syn{Y})$. But if $\syn{c_Y}$ appears in this way then $\syn{Y} \not \in \HtoL(\syn{S})$. Since then $\mechlevelset(\syn{Y}) \cap \pi(\syn{S}) = \emptyset$ by strength, we have $\syn{V} \not \in \HtoL(\syn{S})$. Thus we have that both $\HtoLS(\qabsMH{\syn{Q}})$ and $\qabsML{\HtoL(\syn{Q})}$ are (network) diagrams of the same type not featuring $\syn{c_V}$ for any $\syn{V} \in \HtoL(\syn{S})$. Thus letting $\modelM' := \openmodel_\syn{S}(\modelML)$ we can see both as diagrams in $\strucS_{\modelM'}$, for which all of their inputs are inputs to the model $\modelM'$. Hence by Lemma \ref{lem:unique-ND} they are equal in $\strucS_{\modelM'}$ and hence equal in $\strucSL$ also. 

\ \ 

\underline{Markov case $\FreeMarkov(\Sig)$:} Suppose that $\HtoL$ is not extra strong. Then there exist non-input variables $\syn{X},\syn{Y} \in \varH$, with $\syn{V} \in \mechlevelset(\syn{X}) \cap \mechlevelset(\syn{Y})$. Then the mechanism $\syn{c_V}$ appears in both $\HtoLS(\syn{c_X})$, $\HtoLS(\syn{c_Y})$ and so for any network diagram $\diagD$ in $\strucSH$ containing both $\syn{c_X},\syn{c_Y}$, $\HtoLS(\diagD)$ contains two instances of $\syn{c_V}$ and is not a network diagram. 

Conversely, assume $\HtoL$ is extra strong. Note that every diagram is automatically normalised in the Markov case. We will show that for any network diagram $\diagD$ in $\strucSH$, $\HtoLS(\diagD)$ is again a network diagram in $\strucSL$. Indeed for any such diagram $\diagD$ in $\strucSH$, each $c_\syn{Y}$ can appear in at most one $\HtoLS(\syn{c_X})$, since the $\mechlevelset(\syn{X})$ are all disjoint as $\HtoL$ is extra strong, and hence at most once in $\HtoLS(\diagD)$. If it does not have a path to an output it is automatically removed from the diagram in the Markov case. Hence by Lemma \ref{lem:unique-ND} \eqref{enum:NND-equivalent-conditions}, $\HtoLS(\diagD)$ is a network diagram. In particular, for each query $\syn{Q} \in \Do(\VH)$, $\HtoLS(\qabsMH{\syn{Q}})$ is a network diagram, and then we are done by Lemma \ref{lem:unique-ND} \eqref{enum:LemFreeCD}.

\underline{Cd-case $\FreeCD(\Sig)$:} Suppose that fullness fails, so that for some $\syn{X}$ we have $\syn{Y} \in \Pa(\syn{X})$ with $\syn{Y}$ not an input, and some $\syn{V} \in \HtoL(\syn{Y})$ with no path $\syn{V} \to \HtoL(\syn{X})$. 
Consider the query 
\[
\syn{Q} = \openquery{\syn{X}}{\VinH}{(\syn{\Pa(Y) \setminus \VinH) \cup (\Pa(X) \setminus Y)}}
\] 
By construction $\qabsMH{\syn{Q}}$ is the following diagram. 
\[
\input{free-cd-pic.tikz}
\]
Then $\HtoLS(\qabsMH{\syn{Q}})$ contains $\syn{\HtoLS(c_Y)}$ which contains $\syn{c_V}$. However since there is no path $\syn{ V \to \HtoL(X)}$, in the diagram $\HtoLS(\qabsMH{\syn{Q}})$ there is no path from $\syn{c_V}$ to the output, making this not a normalised network diagram, and therefore not equal to $\qabsML{\HtoL(\syn{Q})}$. Hence fullness is necessary. Extra strength is necessary by the Markov case, since if $\HtoLS$ exists then it restricts to a functor between Free Markov categories.

Now suppose that both hold. For any query $\syn{Q}$ in $\Openqueries(\varH)$, the proof of the Markov case showed that $\HtoLS(\qabsMH{\syn{Q}})$ is a network diagram, it remains for us to show that fullness ensures it is also normalised, and then we are done by Lemma \ref{lem:unique-ND} \eqref{enum:LemFreeCD}. 

But for this, note that every diagram $\syn{\HtoLS(c_X)}$ involves no discarding, since it normalised and fullness ensures that every input to the diagram has a path to an output. Now for any query $\syn{Q}$, $\qabsMH{\syn{Q}}$ is a normalised network diagram, and so (up to swappings of inputs) factors as $\diagD' \otimes \discard{}$ where $\diagD'$ is a network diagram with no discarding. 
Then $\HtoLS(\qabsMH{\syn{Q}}) = \HtoLS(\diagD') \otimes \discard{}$ where now again $\HtoLS(\diagD')$ again features no discards, since each $\syn{\HtoLS(c_X)}$ lacks them. Hence $\HtoLS(\qabsMH{\syn{Q}})$ is also normalised, as required. 
\end{proof}

We are now able to prove our main result characterising mechanism-level causal abstraction.

\begin{proof}[\proofof~Theorem \ref{Thm:strongCCAnew}]

\eqref{enum:pis-functor} $\implies$ \eqref{enum:mech-level-CCA}:
Given $(\HtoLS, \LtoH)$ satisfying these conditions, by Proposition \ref{prop:caus-mech-abs-main} there is a unique $\HtoL \colon \Do(\VH) \to \Do(\VL)$ making $(\HtoL, \HtoLS, \LtoH)$ a mechanism-level abstraction, and hence mechanism CCA. 

\eqref{enum:mech-level-CCA} $\implies$ \eqref{enum:constr-CCA-well}:
Given $(\HtoL, \HtoLS, \LtoH)$ by assumption $(\HtoL, \LtoH)$ is a constructive causal abstraction. Since the former is a mechanism-level abstraction, the square 
\eqref{eq:CCA-square} commutes and so $\HtoLS$ is structurally well-behaved and hence the alignment $\HtoL$ is structurally well-behaved also, for each structure type. 

\eqref{enum:constr-CCA-well} $\implies$ \eqref{enum:mech-level-CCA}:
Let $(\HtoL, \LtoH)$ be a constructive abstraction with $\HtoL$ structurally well-behaved. Then by  $\HtoL$ acts on queries by $\Do(S) \mapsto \Do(\HtoL(S))$. 

We claim there is a unique functor $\HtoLS$ which is structurally well-behaved which agrees with $\HtoL(\syn{X}) = \HtoLS(\syn{X})$ for all variables $\syn{X}$. 

Indeed if $\HtoLS$ is structurally well-behaved then by Proposition \ref{prop:caus-mech-abs-main} the square \eqref{eq:CCA-square} commutes for the query map $\HtoL \colon \Do(S) \mapsto \Do(\HtoL(S))$, which is indeed the definition of $\HtoL$ in a constructive abstraction by Theorem \ref{Thm:CCA-as-refinement}. So the square must commute. 

Now for each non-input variable $\syn{X} \in \VninH$ we must define $\HtoLS(\syn{c_X})$ in \eqref{eq:strong-CA}. Define $\syn{S} := \Pa(\syn{X}) \setminus \VinH$ and queries
\begin{equation} \label{eq:Q-useful}
\syn{Q_X} := {\openquery{\syn{X}}{\VinH}{\syn{S}}} \qquad \qquad \HtoL(\syn{Q_X}) = {\openquery{\HtoL(\syn{X})}{\VinL}{\syn{\HtoL(S)}}}
\end{equation}
using that $\VninH \subseteq \VoutH$ to define the former query, and $\HtoL(\VoutH) \subseteq \VoutL$ for the latter, since $\HtoL$ belongs to a \varalign. 

Then $\qabsMH{\syn{Q_X}}$ is a normalised network diagram, and in $\strucSH$ we have: 
\begin{equation} \label{eq:S-property}
\input{diagD-cond.tikz}
\end{equation}
Now since \eqref{eq:CCA-square} commutes we must have: 
\begin{equation} \label{eq:pis-def}
% \tikzfig{pis-unique-2}
\input{pis-unique-2-nodots.tikz}
\end{equation}
for the diagram $\diagD_\syn{X} := \HtoLS(\syn{c_X})$, in the last step using \eqref{eq:Q-useful}, and that $\HtoLS$ is a cd-functor and $\HtoL(\VinH) = \VinL$. Now we claim the existence of any such factorisation as on the RHS is equivalent to the statement that: 
\begin{equation} \label{eq:cond-1}
\mechlevelset(\syn{X}) \cap \VinL = \emptyset 
% \text{ for } \syn{X} \not \in \VinL
\end{equation}
Indeed, by definition the LHS is a normalised network diagram and so each input variable without a path to $\syn{X}$ must be simply discarded. Then \eqref{eq:cond-1} states precisely that the remainder are contained in $\HtoL(\Pa(\syn{X}))$, i.e.~that we have such a factorisation. Hence in order to define $\HtoLS$, \eqref{eq:cond-1} must hold for all $\syn{X} \not \in \VinH$, and in this case $\HtoLS(\syn{c_X})$ must be the unique diagram $\diagD_\syn{X}$ in \eqref{eq:pis-def}. This makes $\HtoLS$ unique when it exists. Note that this property holds, and so $\HtoLS$ is definable, if $\HtoL$ is strong, since $\HtoL(\VinH) = \VinL$. Thus it holds whenever $\HtoL$ is structurally well-behaved. We must finally check that $\LtoH$ does indeed satisfy naturality \eqref{eq:strong-CA-nat}. But in $\catC$ we have: 
\begin{align*}
% \tikzfig{pis-proof-simpler-1}
\input{pis-proof-simpler-1-nodots.tikz}
%  \\
% \tikzfig{pis-proof-1}
% \tikzfig{pis-proof-2} 
\end{align*} 
using \eqref{eq:S-property}, \eqref{eq:pis-def}, and cd-naturality of $\LtoH$ for queries. Finally, using our assumption on $\catC$, applying any normalised states to the right-hand discards cancels them out, to obtain $\HtoL_X \circ \HtoLS(c_X) = c_X \circ \HtoL$. Hence $\LtoH$ indeed restricts to such a cd-natural transformation as required. 
\end{proof}

\begin{proof}[\proofof~Lemma \ref{lem:Markov-equiv}]
The condition is clearly necessary by Proposition \ref{prop:caus-mech-abs-main} since $\netdiag{\modelMH} = \qabsMH{\syn{Q}}$ for the query defined by $\syn{Q = \openquery{\VoutH}{\VinH}{}}$ and so if \eqref{eq:CCA-square} commutes then $\HtoLS(\netdiag{\modelMH}) = \qabsML{\HtoL(\syn{Q}}$ making it a normalised network diagram, and by definition of $\HtoLS$ being structurally well-behaved it satisfies these conditions. Indeed, note that since $\HtoLS(\VinH) = \VinL$ and all the $\HtoLS(\syn{X})$ are disjoint we must have $\HtoLS(\VninH) \subseteq \VninL$. 

Conversely, we now show that these conditions are sufficient. Suppose that they hold. Then since $\HtoLS(\VninH) \subseteq \VninL$ the $\HtoLS(\syn{X})$ are disjoint subsets for $\syn{X} \in \VninH$. Since $\HtoLS(\VinH) = \VinL$ the same holds for $\syn{X} \in \VinH$. Since $\VninL$ and $\VinL$ are disjoint the $\HtoLS(\syn{X})$ are therefore all disjoint.

We need to show each $\HtoLS(\syn{c_X})$ is a network diagram. Since the $\HtoL(\syn{X})$ are all disjoint the diagram has no repeated inputs or outputs. Now suppose two instances of $\syn{c_V}$ belong to $\HtoLS(\syn{c_X})$ (after normalisation). Since they both have a path to an output of $\HtoLS(\syn{c_X})$ they also have a path to an output of $\HtoLS(\netdiag{\modelMH})$ and so cannot be discarded, making the latter not a network diagram, a contradiction. Next suppose $\syn{V} \in \HtoL(\syn{Y})$ where $\syn{Y} \in \Pa(\syn{X})$. Then $\syn{Y}$ must not be an input (else $\syn{V}$ would be an input and so $\syn{c_V}$ is not part of the signature). Then by Lemma \ref{lem:unique-ND} \eqref{enum:path-to-diag} we have $\syn{c_V} \in \HtoLS(\syn{c_Y})$. Then $\syn{c_V}$ appears in both $\HtoLS(\syn{c_X})$ and $\HtoLS(\syn{c_Y})$ (due to their inputs and output types), with a path to the respective outputs $\HtoL(X)$ and $\HtoL(Y)$ of each. Since these are outputs to $\HtoLS(\netdiag{\modelMH})$, these two instances of $\syn{c_V}$ cannot be removed from the diagram $\HtoLS(\netdiag{\modelMH})$ (i.e.~are not discarded) and so this cannot be a network diagram, a contradiction. Hence by Lemma \ref{lem:unique-ND} \eqref{enum:NND-equivalent-conditions} each box $\HtoLS(\syn{c_X})$ is indeed a network diagram.

Conversely, suppose that $\HtoL$ is not extra strong. Then there exist non-input variables $\syn{X},\syn{Y} \in \varH$ and some $\syn{V} \in \mechlevelset(\syn{X}) \cap \mechlevelset(\syn{Y})$. By Lemma \ref{lem:unique-ND} \eqref{enum:path-to-diag} $\syn{c_V}$ appears in both $\HtoLS(\syn{c_X})$ and $\HtoLS(\syn{c_Y})$ and just as above this makes $\HtoLS(\syn{c_X})$ not a network diagram, a contradiction. 
\end{proof}

\section{Preservation of algebraic structure} \label{sec:alg-structure}

Organising interventions into query signatures such as $\Intset^{\mathsf{io}}(\modelV)$, $\Do(\modelV)$ and $\II(\modelV)$ has been useful to formalise causal abstraction. However (the free categories generated by) these tend to have little interesting compositional structure, with the only new composites beyond the queries themselves being products of queries.%\footnote{For abstract queries $\Do(\syn{V})$ or $\II(\syn{V})$, the category $\StrucQ$ does contain sequential composites such as of $\syn{\openqueryshort{O}{\Vin}{S}} \colon \syn{\Vin \otimes S \to O}$ and $\syn{\openqueryshort{O'}{\Vin}{S'}} \colon \syn{\Vin \otimes S' \to O'}$ whenever, say $S'$ and $O$ overlap. However, these wouldn't normally be considered quantities of interest that one wishes to compute (all those are all themselves already in the signature $\sigQ$).}  

However there is a second sense in which interventions have algebraic structure: we can apply them successively in sequence. Since this has played a role in causal abstraction, it is natural to ask if we can capture it in our setup. One useful way of doing so for general models and queries is the following. 

\begin{definition} 
	Let us say a query signature $\SigQ$ has \emph{monoid structure} when there is a query signature \rlb{$\sigfont{T}$}\ with the same types and a monoid \rlb{$(M_\SigQ,\bullet)$}\ such that $\SigQ$ has \rlb{queries $\sigfont{T}_{mor} \times M_\SigQ$}. We say $\SigQ$ \emph{forms a monoid} when \rlb{$\sigfont{T}$}\ has two types $\syn{in}, \syn{out}$ and precisely one query $q \colon \syn{in} \rightarrow \syn{out}$.
	For two query signatures with monoid structure $\SigQ, \SigQ'$, we say that a map $F \colon \SigQ \to \SigQ'$ is a \emph{homomorphism} when it factors as $F((\syn{Q},\syn{m})) = (F_Q(\syn{Q}),F_M(\syn{m}))$ where $F_M \colon M_\SigQ \to \rlb{M_{\SigQ'}}$ is a monoid homomorphism. 
\end{definition}

 Let us first observe some key examples of queries with monoid structure. 

\begin{examples} \label{ex:queries-with-monoid-structure}
Let $\modelV$ be a set of concrete variables. 
\begin{enumerate}[label=(\arabic*)]

\item \label{itm:mon-of-abs-Do}
The query signature $\Openqueries(\syn{V})$ \emph{has monoid structure} $\ioqueries(\syn{V}) \times \mathbb{P}(\syn{\Vint})$ where $\ioqueries(\syn{V})$ denotes the input-output queries on $\syn{V}$ and $\mathbb{P}(\syn{\Vint})$ the monoid of subsets of \st{$\syn{\Vint}$} under $\syn{S} \bullet \syn{T} := \syn{S} \cup \syn{T}$. In this case we can in fact extend the monoid multiplication to a partial operation on all of $\Openqueries(\syn{V})$ via $\openquery{\syn{O}}{\Vin}{\syn{S}} \bullet \openquery{\syn{O}}{\Vin}{\syn{T}} = \openquery{\syn{O}}{\Vin}{\syn{S \cup T}}$, defined whenever the output subsets of both queries agree.

\item \label{itm:mon-of-concrete-Do}
The query signature $\Do(\modelV)$ \emph{forms a monoid} via the trivial `empty' intervention $\noint$ 
and composition $\Do(X =x) \bullet \Do(Y=y) := \Do(X = x, {Y \setminus X} = y|_{Y \setminus X} )$. 

\item \label{itm:mon-of-gen-int}
More generally, for any set of interventions $\Intset$ the query signature $\Intsetio$  \emph{forms a monoid} if it is closed under composition: 
for two interventions $\inti,\inti'$  with respective sets of `new' mechanisms $\{c_{X}\}_{X \in A_{\inti}}$ and $\{c_{Y}'\}_{Y \in A_{\inti'}}$  
the composite $\inti \circ \inti'$ is defined by $\{c_{X}\}_{X \in A_{\inti}} \cup \{c_{Y}'\}_{Y \in A_{\inti'}\setminus A_{\inti}}$; the unit is given by $A_{\inti} = \emptyset$. 
\end{enumerate}
\end{examples}

Many of the notions of causal abstraction \rlb{from Section~\ref{sec:causal-abstraction}}\ indeed preserve this algebraic structure present in the corresponding sets of interventions.  
\rlb{The following is straightforward to verify and we omit the proof.}

\begin{proposition}	\label{prop:CA-notions-with-hom} 
Given the monoid structures from Example~\ref{ex:queries-with-monoid-structure} it holds that:
\begin{enumerate}[label=(\alph*)]
	\item \label{itm:CCA-gives-hom}
	For any constructive causal abstraction the query map $\HtoL \colon \Do(\syn{\varH}) \to \Do(\syn{\varL})$ is a homomorphism;  
		and the \qup\ it defines  
		has a homomorphism as query map $\LtoHquery \colon \Do(\modelV_L) \to \Do(\modelV_H)$.\footnote{Here $\HtoL$ is the map between signatures that, by virtue of the constraint in Def.~\ref{def:query-ref}, is defined by the functor $\HtoL$ of a constructive causal abstraction; and saying the map $\LtoHquery \colon \Do(\modelV_L) \to \Do(\modelV_H)$ of the corresponding \qup\ is a homomorphism means it is one on the domain of $\LtoHquery$, which by construction is indeed closed under composition.}  
	\item \label{itm:iso-CCA-gives-hom} 
	Any \isoCCA\ defines an overall \qup\  whose map $\LtoHquery$ is a homomorphism. 
\end{enumerate}
\end{proposition}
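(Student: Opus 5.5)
Part (a) reduces to unpacking the query maps and checking that they commute with the monoid operations of Examples~\ref{ex:queries-with-monoid-structure}. Part (b) then follows by composing homomorphisms.

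For the first claim of (a), Theorem~\ref{Thm:CCA-as-refinement} and its proof give the query map of a constructive abstraction as $\HtoL(\Do(\syn{S})\colon \VinH\otimes \syn{S}\to \syn{O}) = \Do(\HtoL(\syn{S}))\colon \VinL\otimes\HtoL(\syn{S})\to\HtoL(\syn{O})$. Under the monoid structure $\ioqueries(\syn{V})\times\mathbb{P}(\syn{\Vint})$, this factors as $(F_Q,F_M)$ with
\[
F_Q(\syn{\ioquery{O}{\VinH}})=\syn{\ioquery{\HtoL(O)}{\VinL}}, \qquad F_M(\syn{S})=\HtoL(\syn{S}):=\bigcup_{\syn{X}\in \syn{S}}\HtoL(\syn{X}).
\]
Since $\HtoL(\VintH)\subseteq\VintL$, the map $F_M$ lands in $\mathbb{P}(\syn{\VintL})$. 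It is a monoid homomorphism because taking images under a set map preserves unions and sends $\emptyset$ to $\emptyset$.

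For the concrete part of (a), we use the map of Corollary~\ref{cor:CCA-as-qup}, $\LtoHquery(\Do(\HtoL(S)=s))=\Do(S=\LtoH\circ s)$. Its domain consists of Do-interventions on sets of the form $\HtoL(S)$. This domain contains the trivial intervention and is closed under $\bullet$, because $\HtoL(S)\cup\HtoL(T)=\HtoL(S\cup T)$. We then compute both sides of the homomorphism equation:
\[
\Do(\HtoL(S)=s)\bullet\Do(\HtoL(T)=t)=\Do\big(\HtoL(S\cup T)=(s,\,t|_{\HtoL(T)\setminus\HtoL(S)})\big).
\]
Disjointness of the alignment gives $\HtoL(T)\setminus\HtoL(S)=\HtoL(T\setminus S)$. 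Since $\LtoH$ factorises as a product over high-level variables (eq.~\eqref{eq:tau-factors}), it acts componentwise, so $\LtoH\circ(s,t|_{\HtoL(T\setminus S)})=(\LtoH\circ s,(\LtoH\circ t)|_{T\setminus S})$. Applying $\LtoHquery$ therefore yields exactly $\Do(S=\LtoH\circ s)\bullet\Do(T=\LtoH\circ t)$. The trivial intervention $\Do(\emptyset)$ is sent to $\Do(\emptyset)$.

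The main obstacle is well-definedness. $\LtoHquery$ must be a function, and two equal low-level interventions must receive equal images. This holds because the set $\HtoL(S)$ determines $S$: the $\HtoL(\syn{X})$ are disjoint, and each is nonempty since $\LtoH$ is epic and every variable has a state. Given $S$, the state $s$ determines its image. The restriction bookkeeping relies on the same disjointness.

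For (b), the overall map of an iso-constructive abstraction is $\LtoHquery\circ\inducedintmap$. Composites of monoid homomorphisms are homomorphisms, so it suffices to show that $\inducedintmap$ is one on $\Inone$, with respect to the composition of Example~\ref{ex:queries-with-monoid-structure}\ref{itm:mon-of-gen-int}. Here $\Intwo$ contains the Do-interventions in the domain of $\LtoHquery$, which is closed under composition as shown above. The induced interventions are defined by conjugating parallel mechanism channels, $\parmechdiag\mapsto\modin^{-1}\circ\parmechdiag\circ\modin$, as in the proof of \eqref{eq:DDo-pic}. For the Do-interventions involved, composition corresponds to sequentially applying the replacement operators $\doendo{p}$ of \eqref{eq:II-diag}. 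Conjugating by $\modin$ respects this sequential composition and sends the identity to the identity.

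The delicate point in (b) is that the composite of two induced interventions must be the induced intervention of the composite. It suffices to verify this at the level of parallel mechanism channels, which determine the induced models (via $\modelfrom{-}$), and to use that $\modin$ respects every intervention involved.
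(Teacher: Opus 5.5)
Your part (a) is correct. For abstract Do-queries you factor the query map as $F_Q(O)=\HtoL(O)$ and $F_M(S)=\HtoL(S)$. For concrete ones you use $\HtoL(T)\setminus\HtoL(S)=\HtoL(T\setminus S)$ and that $\LtoH$ acts componentwise. These are exactly the facts needed; the paper omits this proof as straightforward.

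One correction concerns well-definedness of $\LtoHquery$. Nonemptiness of the cells $\HtoL(X)$ does not follow from $\LtoH$ being epic. For a trivial variable (a singleton in $\FStoch$), $\LtoH_X\colon I\to X$ is epic and deterministic with $\HtoL(X)=\emptyset$. So nonempty cells must be assumed, as the paper tacitly does when it writes the formula in Corollary~\ref{cor:CCA-as-qup}.

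The genuine gap is in (b). You set out to show that $\inducedintmap$ respects the composition of Example~\ref{ex:queries-with-monoid-structure}\ref{itm:mon-of-gen-int} on $\Inone$. What you actually verify is that conjugation by $\modin$ respects composition of the \emph{operators} $\parmechdiag\mapsto D_p\circ\parmechdiag$, where $D_p$ overwrites the $S'$-coordinates by $p$. These are different operations.

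Under Example~\ref{ex:queries-with-monoid-structure}\ref{itm:mon-of-gen-int}, an intervention on $\Mone$ is a fixed family of new mechanisms, and composition overrides them variable by variable. The distributed intervention $\inducedintmap^{-1}(\Do(S'=p))$ assigns to every $X$ whose $\modin^{-1}$-component depends on $S'$ the fixed channel $(\modin^{-1}\circ D_p\circ\modin\circ\parmechdiag_{\Mone})_X$. When $\modin$ mixes variables, splicing two such families is not $\modin^{-1}\circ D_p\circ D_q\circ\modin\circ\parmechdiag_{\Mone}$.

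Here is a concrete counterexample:
\begin{itemize}
\item Let $\Mone$ have binary non-input variables $A,B$ with constant mechanisms $0,0$.
\item Let $\modin(a,b)=(a\oplus b,b)=:(C,B')$, and take the identity constructive abstraction on $\Mtwo$. Every Do-intervention gives a constant model, so $\modin^{-1}$ respects all of them.
\item $\inducedintmap^{-1}(\Do(C=1))$ sets $A:=1$ and leaves $B$ unchanged.
\item $\inducedintmap^{-1}(\Do(B'=1))$ sets $A:=1$ and $B:=1$.
\item Their composite, in either order, yields the model $(A,B)=(1,1)$.
\item But $\inducedintmap^{-1}(\Do(C=1)\bullet\Do(B'=1))=\inducedintmap^{-1}(\Do(C=1,B'=1))$ yields $(0,1)$.
\end{itemize}
So the identity you defer to ``the delicate point'' is false. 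With that monoid structure on $\Inone$, the overall map $\LtoHquery\circ\inducedintmap$ (here just $\inducedintmap$) is not a homomorphism.

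What your conjugation argument does establish is the homomorphism property once $\Inone$ carries the composition transported from $\Intwo$. That is, $\sigma\star\sigma':=\inducedintmap^{-1}(\inducedintmap(\sigma)\bullet\inducedintmap(\sigma'))$, equivalently sequential application of the conjugated operators $\modin^{-1}\circ D_p\circ\modin$. This is the reading under which (b) holds in general, and then it follows immediately from (a) with no parallel-mechanism computation. But that monoid structure on $\Inone$ has to be adopted as a definition; it cannot be derived from Example~\ref{ex:queries-with-monoid-structure}\ref{itm:mon-of-gen-int}.
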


\paragraph{Homomorphisms vs order-preservation.}
Given an exact transformation (Def.~\ref{def:exact-trans}) between query signatures with monoid structures, i.e. closed under the composition of interventions, the query map $\LtoHquery$ may or may not be a homomorphism.  
In the literature this map is 
actually required to be \emph{order-preserving} under the partial order $\leq$ with $\inti \leq \inti'$ whenever $\inti \circ \inti' = \inti'$ \cite{rubenstein2017causal, BeckersEtAl_2019_AbstractingCausalModles, geiger2023causal}.\footnote{\rl{In \cite{rubenstein2017causal} and \cite{BeckersEtAl_2019_AbstractingCausalModles}, which do not consider interventions more general than Do-interventions, the partial order that is required to be \rlc{preserved is $\Do(S=s) \leq \Do(S'=s')$ iff $S \subseteq S'$ \emph{and} $s'|_{S}=s$. Clearly, order-preservation wrt that partial order is implied by the preservation of the partial order considered above.}}}
Studying the algebraic structure of interventions, the work in  \cite{geiger2023causal} alternatively considers commutation relations between interventions as significant. 
Being a homomorphism implies preserving both commutation and order relations: for the latter,  
whenever $\inti \circ \inti' = \inti'$  we have 
$\LtoHquery(\inti') = \LtoHquery(\inti \circ \inti') = \LtoHquery(\inti) \circ \LtoHquery(\inti')$. 
As such, it may be more natural to simply require $\omega$ to be a homomorphism. 
This could also give a natural condition for a further strengthening of 
\emph{\strongCA} (Def.~\ref{def:strong-CA}), being satisfied in all examples of the latter in this work.

\section{Comparison with the literature} \label{sec:lit-comparison}
 
\begin{remark}[Exact transformations] \label{rem:ET-notions} 
	The notion of exact transformation given in Definition~\ref{def:exact-trans} differs from those in the literature as follows. 
	\begin{enumerate}
				
		\item \emph{Algebraic structure}: As discussed in detail in Section \ref{sec:alg-structure}, the literature assumes an order-preserving map  $\LtoHquery$ on interventions, while we do not. 

		\item \emph{Surjectivity}: In \cite{rubenstein2017causal, BeckersEtAl_2019_AbstractingCausalModles} (unlike \cite{geiger2023causal}) the map $\LtoH$ of an exact transformation is \emph{not} required to be surjective. However in 
		 \cite{BeckersEtAl_2019_AbstractingCausalModles} it is demonstrated that to qualify as a causal \emph{abstraction}  $\tau$ \rlb{should be required to}\ be surjective. 
		 Indeed we take this to be an essential ingredient of any abstraction relation, causal or not, and hence $\tau$ is necessarily an epimorphism for both kinds of basic abstraction in Sec.~\ref{sec:abstraction}. 
		In this respect our notion of an exact transformation is closer to a $\tau$-abstraction in \cite{BeckersEtAl_2019_AbstractingCausalModles}. 		
		
		\item \emph{Types of causal models}: 
		Apart from \CF{} abstraction, none of our definitions make assumptions on the kind of causal models -- other than that there are no variables considered fundamentally hidden: 
		\rlb{variables are either input or not and while $\Vint = V\setminus \Vin$ may not necessarily all be outputs, a priori one can consider interventions on any of them}.\footnote{Hence, a key observation in \cite{BeckersEtAl_2019_AbstractingCausalModles}, namely that allowing `probabilistic models' in their sense can lead to `unwarranted abstraction' from fine-tuning probability distributions on root nodes does not apply here.} 		
		Seeing as any notion of causal abstraction is only `as strong as' the sets of (intervention) queries with respect to which it obtains, restricting the type of model from the outset seems not essential. 
		
		\item \emph{Kinds of interventions}: Unlike \cite{rubenstein2017causal, BeckersEtAl_2019_AbstractingCausalModles} in Def.~\ref{def:exact-trans} there is no restriction to Do-interventions for $\IntsetL$, $\IntsetH$. (In this respect ours is closer to the notion in \cite{geiger2023causal}.) 						
	\end{enumerate}
\end{remark}

\begin{remark}[Strong causal abstractions] \label{rmk:strong-CA}
	While we define exact transformations for arbitrary interventions, Do-interventions are of greatest conceptual significance, since with access to all of these one can identify an entire causal model. A `vanilla' instance of causal abstraction between models $\model{L}$ and $\model{H}$ is thus the case where $\IntsetH$ contains all Do-interventions 
	-- Beckers and Halpern then call their notion of a $\tau$-abstraction `strong' \cite{BeckersEtAl_2019_AbstractingCausalModles}. 
	Our notion of \emph{\strongCA} (Def.~\ref{def:strong-CA}) is in keeping but also differs in a few ways. 
	\begin{enumerate}
		\item Just as for 
		exact transformations (see above), we demand no restriction of the type of models and no assumption of order-preservation.

		\item More importantly, while $\IntsetH$ is all Do-interventions, $\model{L}$ is a priori unconstrained and, as a consequence, a map $\LtoHquery$ on queries is part of the given data, rather than induced by $\LtoH$ as in \cite{BeckersEtAl_2019_AbstractingCausalModles}. 
	\end{enumerate}
	The motivation comes from various considerations. 
	First, modern AI based on neural networks has broadened the scope of causal abstraction such that one should expect `distributed' low-level representations of interpretable variables. 
	Second, from a theory perspective one would want to understand causal abstraction in the context of general interventions, which the framework of causal models does feature -- independently from machine learning. 
	In fact Beckers and Halpern formulate a conjecture in \cite{BeckersEtAl_2019_AbstractingCausalModles}, which basically amounts to that `strongness' (all high-level Do-interventions) implies `constructiveness', i.e. a partitioning as in a \varalign. 
	Crucially, this intuition relies on a restriction to Do-interventions also at the low level from the start. 
	It is not hard to check that in our set-up of a \strongCA\ such a restriction is equivalent to finding a partitioning of $\VL$. 
	While this doesn't answer their conjecture yet -- due to the other mentioned differences in the respective setups -- it suggests that our notion offers a useful angle to further our understanding of causal abstraction. 
\end{remark}
	
	\color{black}

\st{
\subsection{Comparison with Englberger and Dhami} \label{sec:Engl-comparison}

Closely related is the work of Englberger and Dhami \cite{englberger2025causal}, who give a similar formalisation of causal abstraction based on Markov categories and natural transformations. However, while their categorical definition of causal abstraction aims to capture constructive causal abstraction (as in our Theorem \ref{Thm:CCA-as-refinement}), in fact (and after a fix) it instead captures mechanism-level constructive abstraction more specifically. 

% who also formalise constructive causal abstractions in terms of natural transformations between causal models (without inputs and with all variables as outputs), seen as Markov functors. Cruicially, howwever, their definition of causal abstraction, though aiming to capture constructive causal abstraction (as in our Theorem \ref{Thm:CCA-as-refinement}), after a slight fix in fact captures mechanism-level constructive abstraction more specifically. 

In more detail, they define a causal abstraction in terms of a functor $\iota = \HtoL \colon \strucSH \to \strucSL$ and determinsistic natural transformation $\LtoH \colon \semMLfunc \circ \iota \implies \semMHfunc$, where $\strucSH, \strucSL$ are the free Markov categories induced by causal signatures, and such that $\iota$ maps morphisms in $\restr(\strucSH)$ to those of $\restr(\strucSL)$. Here $\restr(\strucSL)$ is defined as the subcategory of $\strucSL$ `after restricting to those morphisms where every generating box appears at most once'. In their Prop.~2.1, they claim to show that there is a unique such morphism $\syn{X} \to \syn{Y}$ for all  $\syn{X, Y \subseteq \varL}$. 

Unfortunately, however this is not the case. For example, consider a causal model with DAG $\syn{X \to Y}$ and the distinct morphisms $\syn{c_Y} \neq \syn{c_Y} \circ \syn{c_X} \circ \discard{X} \colon \syn{X} \to \syn{Y}$.
% of type $\syn{X \to Y}$. 
One can instead define $\restr(\strucSL)$ as consisting of (normalised) \emph{network diagrams}, which however form a signature rather than a category, making it precisely our signature $\Do(\varL)$ of abstract Do-interventions. Then Prop.~2.1 corresponds to the uniqueness result of Lemma \ref{lem:unique-ND} \eqref{enum:LemFreeCD}.

Making this replacement for both models,
%Replacing $\restr(\strucSL)$, $\restr(\strucSH)$ with $\Do(\varL)$, $\Do(\varH)$ then makes 
their definition of causal abstraction is then precisely a component-level abstraction (Def.~\ref{def:complevel}) for the query sets $\Do(\varL)$, $\Do(\varH)$, i.e.~that of a mechanism level constructive abstraction (\ref{sec:strong-constructive-abstraction}) (apart from not requiring the $\LtoH$ to be epi). Their Theorem 3.1 claims an equivalence between these and a form of constructive causal abstractions. 

Because of the subtleties above, however, it is not the case that every constructive abstraction is of their form. Indeed by our characterisation Theorem \ref{Thm:strongCCAnew} the definition instead captures those constructive abstractions which are \extrastrong.\footnote{Note also Lemma \ref{lem:Markov-equiv} gives a further characterisation of mechanism-level abstractions based on Free Markov categories; in fact we must only check that $\HtoLS(\netdiag{\modelMH})$ is again a network diagram, rather than requiring a condition on all the diagrams of $\Do(\varH)$.} 

\begin{example}
An example of a constructive abstraction which does not fit the definition of a causal abstraction in \cite{englberger2025causal} is given by the models $\modelML, \modelMH$ of Example \ref{example:notsimple}, here setting $\syn{W} = I$, where now $a = c \circ e \colon I \to X$ and $b = d \circ e \colon I \to Y$ for a sharp state $e$ of $Z$. There is a constructive abstraction with $\LtoH=\id{}$ given by $\HtoL(\syn{X}) = \syn{X}$ and $\HtoL(\syn{Y}) = \syn{Y}$. If one were to extend this to a functor $\HtoLS = \iota \colon \strucSH \to \strucSL$ it would have to be given by $\HtoLS(\syn{a}) = \syn{c} \circ \syn{e}$ and $\HtoLS(\syn{b}) = \syn{d} \circ \syn{e}$. But then as diagrams we would have \rl{$\HtoLS(\netdiagout{\modelMH}{X,Y}) \neq  \netdiagout{\modelML}{X,Y}$} just as in \eqref{eq:equality-CA-ex}.
\end{example}

\begin{remark}
%Morally, this example highlights a purely syntactic difference, and indeed is fixed by instead using the free Cartesian category (rather than Markov category). 
As we have noted in Sec.~\ref{sec:discussion}, 
one may wonder whether many or most of  
%it is interesting however that precisely
the constructive abstractions of practical interest 
might turn out to be those which are indeed at the mechanism-level i.e.~which are \extrastrong, and indeed do fit the (corrected version of the) Definition of \cite{englberger2025causal}; we leave this investigation for future work. 
\end{remark}
}

\section{Parallel mechanism channels and consistency conditions in diagrams} 
\label{app:abs-in-pm-via-tracedC} 

Section~\ref{sec:distributed-abstraction} introduced the notion of a parallel mechanism channel $\parmechdiag_\modelM$ induced by (and conversely inducing) a causal model $\modelM$.  
Here we will clarify further the relationship between $\parmechdiag_\modelM$ and $\io{\modelM}$ in a way that will be useful to express the consistency conditions for distributed abstractions.

To this end the following notion will be useful. 
Recall that an SMC $\catC$ is \emph{compact closed} when each object $X$ comes with a dual $X^{\star}$ and units and counits on $X \otimes X^{\star}$ and $X^{\star} \otimes X$, subject to various axioms \cite{selinger2010survey}. For simplicity, and since it \rlc{fits our}\ main examples, we can take $X^{\star} = X$ and the definition then simplifies to that each object $X$ comes with a pair of a distinguished state and effect on $X \otimes X$,\footnote{An \emph{effect} on an object $X$ is a morphism $X \rightarrow I$ with $I$ the unit object.} depicted as a \emph{cup} $\tinycup$ and a \emph{cap} $\tinycap$, respectively, satisfying the following: 
\begin{equation} \label{eq:compact-closure}
	\input{cap-sym.tikz} \hspace*{2cm}
	\input{cup-sym.tikz} \hspace*{2cm}
	\input{snake.tikz}
\end{equation}
If $\catC$ is also a cd-category we moreover demand that: 
\begin{equation} \label{eq:comp-closure-cd-1}
	\input{cap-2.tikz} \hspace*{3.0cm}
	\input{mult-map2.tikz}
\end{equation}

For any sharp state $x$ there is a corresponding deterministic effect denoted $x^\dagger$ and depicted as `flipping $x$ upside-down': 
\begin{equation} \label{eq:xupsidedown}
	\input{cap-3.tikz}
\end{equation}
From \eqref{eq:comp-closure-cd-1} (a) we obtain that any normalised sharp state $x$ satisfies $x^\dagger \circ x = 1$ and from (b) that we have: 
\begin{equation} \label{eq:sharp-effect-w-copy}
	%\tikzfig{sharp-1}
	\input{sharp-2.tikz}
\end{equation}
It follows, by composing \eqref{eq:sharp-effect-w-copy} with $y$, that for all sharp states $x, y$ we have: 
\begin{equation} \label{eq:comp-closure-2}
\input{condition.tikz}
\end{equation}
Our main example is that the Markov category $\FStoch$ sits inside the broader compact closed cd-category $\MatR$ whose objects are again finite sets and morphisms $M \colon X \to Y$ are now arbitrary (not necessarily Stochastic) positive matrices $M(y \mid x)_{x \in X, y \in Y}$. Here the cup $\tinycup$ and cap $\tinycap$ both are just $\delta_{X,X}$, i.e. `perfect correlation'.

Finally, we recall that compact closure also induces a traced category \cite{selinger2010survey}, the categorical generalisation of the notion of a partial trace. 
For each object $C$ the corresponding cup and cap define for any objects $A,B$ the map $\Tr_C \colon \catC(A\otimes C, B \otimes C) \rightarrow \catC(A, B)$ with $\Tr_C(f)$ given by 
\begin{equation}
	\input{trace-in-diags_b.tikz}
\end{equation} 
All these pieces together give a relation between fixed points of a function and traces, which in our setup manifests as a relation between $\parmechdiag_\modelM$ and $\modelM$ as follows. 

Say that $\catC$ has \emph{enough} normalised sharp states if whenever $f \circ x = g \circ x$ for all normalised sharp states $x$ we have $f = g$.

\begin{lemma}
Let $\catC$ be a compact closed cd-category satisfying the above properties and with enough normalised sharp states. Then for any causal model $\modelM$ in $\catC$ over variables $\modelV$ with $\Vout=\Vnin$ we have: 
\begin{equation} \label{eq:trace-cond}
	\input{M-vs-PM-via-traces.tikz}
\end{equation}
\end{lemma}

\begin{proof}
Let $i, o$ be sharp states of $\Vinconc, \Vninconc$ respectively. Then by Lem.~\ref{lem:fixpoints} we have $\modelMio \circ i = o$ iff $\parmechdiag_\modelM \circ (i \otimes o) = (i \otimes o)$. Since $\parmechdiag_\modelM$ is the identity on $\Vin$ this is equivalent to having $\discard{\Vin} \circ \parmechdiag_\modelM \circ (i\otimes o) = o$. Thanks to \eqref{eq:comp-closure-2} this is equivalent to the LHS below being equal to $1$, where the equality below follows from \eqref{eq:sharp-effect-w-copy}.  
\[
\input{fixedflip.tikz}
\]
Hence the RHS above is equivalently equal to $1$. Applying \eqref{eq:comp-closure-2} again, this is equivalent to the LHS of \eqref{eq:trace-cond} composing with $i$ to give $o$. Now we are done since this holds for arbitrary states $i$, and $\catC$ has enough states.
\end{proof}

\color{black}
We can now write the consistency condition~\eqref{eq:compose-exact-trans-explicit} for an \isoCCA\ in Sec.~\ref{subsubsec:isoCCA}, \rlb{considering the}\ low-level distributed Do-interventions from Eq.~\eqref{eq:DDo-pic}, in the following more direct form: 
\begin{equation}
	\hspace*{-0.5cm} \input{isoCCA-condition.tikz} \nonumber
\end{equation}

The consistency condition for \isointabs\ can be written in an analogous way. 

\begin{remark} \label{rem:trace-trick-for-HO}
	There is an equivalent representation of classical causal models in the formalism of so called \emph{split-node causal models} where the data of a causal model defines a higher-order map \cite{barrett2019quantum, lorenz2023causal}. 
	What this work calls the parallel mechanism process $\parmechdiag_\modelM$ of a model $\modelM$ is essentially, up to some differences with bookkeeping, the same as one common way to represent a split-node model's   higher-order map. 
	Using a trace structure in the above way to go between $\parmechdiag_\modelM$ and the input-output process $\io{\modelM}$ of the model is then key to recast notions of causal abstraction from Sec.~\ref{sec:causal-abstraction} in the formalism of split-node models. 
	Seeing as split-node models are in turn also the way to make precise how classical causal models are special cases of quantum causal models \cite{barrett2019quantum}, the `trace-structure trick' is also key to studying notions of quantum causal abstraction (see Sec.~\ref{sec:quantum-abstraction} and upcoming work). 
\end{remark}

\color{black}

\section{Quantum-classical reduction} \label{app:q-abs}

The following example illustrates abstraction $\modelML \to \modelMH$ from a quantum model $\modelML$ to a classical model $\modelMH$ in $\QC$ which can be seen as a `reduction' of the quantum model to a classical one.

\begin{example}[Quantum Decision Models] \label{ex:quant-dec-model}
\cite{QCog} Consider an experiment in which human subjects make decisions $A$ and $B$ with classical outcomes $X_A, X_B$; for example answering questions with answer sets $X_A=X_B=\{\text{yes, no}\}$. This yields distributions \rl{$\orderedP{A,B}$}, over answer sequences for `$A$ then $B$', and \rl{$\orderedP{B,A}$}, for `$B$ then $A$'. A \emph{decision model} specifies an object $S$ in an initial state $\rho$ and channels $A, B$ for each decision, such that we recover the distributions in that the following holds:\footnote{Hence it is a compositional model with signature $\SigS$ consisting of $\syn{S, A, B, \rho}$ modelled as $(S, A ,B ,\rho)$ with queries $\sigQ$ given by $\syn{\orderedP{A,B}, \orderedP{B,A}}$ modelled as above.}
\[
\input{inst-model-general_b.tikz}
\]
along with the similar equation for `$B$ then $A$'. \emph{Quantum decision models}, those in $\QC$ with $\sem{S} = \hilbH$, have been offered as explanations for various psychological effects \cite{QCog}. 

Given such a model $\modelML$, what would an abstraction to a classical decision model $\modelMH$, given by finite set $S$, distribution $\omega$ and channels $A,B$, entail? A \qdown{} $\modelML \to \modelMH$, with $\HtoL =\id{\strucQ}$ and $\LtoH_{X_A}=\id{}$ and $\LtoH_{X_B} = \id{}$, would state that both models coincide on $\orderedP{A,B}$ and $\orderedP{B,A}$, i.e.~a classical model $\modelMH$ is in fact available. More strongly, a \strucdown{} with $\HtoLS = \id{\strucS}$ would yield a channel $\LtoH \colon \hilbH \to S$ such that:  
\[
\input{qinst-abs-1.tikz}
\qquad \qquad \qquad \qquad 
\input{qinst-abs-2.tikz}
\]
and similarly for $B$. Finally, stronger still would be to also offer a `preparation' channel $\LtoH^{-1} \colon S \to \hilbH$ with $\LtoH \circ \LtoH^{-1} = \id{S}$, so that each quantum channel $A$ reduces to a `measurement' channel $\LtoH$, classical channel $A$, and re-preparation $\LtoH^{-1}$.  
For example, taking $\LtoH, \LtoH^{-1}$ as the measurement and preparation channels for an orthonormal basis would make $A$ and $B$ diagonal in this basis. 
\end{example}

\section{Further formalisation of abstraction}
\label{app:further-formalisation}

\subsection{Formalising \qup{}} \label{app:up-abs}

\rlb{Let us now make precise the sense in which diagram \eqref{eq:exact-partial} holds}, 
by relating \qup{}s to natural transformations. 

\rlb{
\begin{proposition} \label{prop:ex-trans}
\rlc{An}\ \qup{} $\qupcomponents{\HtoL}{\LtoH}{\queryLtoH}$ is equivalent to the existence of $\SigModel{\QLsubset}{\model{M}}$ and two \qdown s: 
\[
\input{uabs.tikz}
\]
\begin{equation} \label{eq:query-alignment}
% \tikzfig{Query-alignment-simpler}
\input{Query-alignment-simpler-nodots_tilde.tikz}
% \tikzfig{Query-alignment}
\end{equation}
where $\SigModel{\QLsubset}{\model{M}}$ is the `high-level model' in both \qdown s, $\HtoL$ is injective on queries and $\widetilde{\queryLtoH}$ is the identity on types and surjective on queries. 
The correspondence with $\LtoHquery \colon \SigQL \pto \SigQH$ is realised via $\queryLtoH(\QL) = \widetilde{\queryLtoH}(\QM)$ in $\SigQH$ for any $\QL$ such that $\QL = \syn{\HtoL(\QM)}$.\footnote{\rlb{Note that the functor $\HtoL$ from the \qdown\ agrees with the \qup{}'s $\HtoL$, which is only a map on types.}} 
\end{proposition}
}

Above, each query $\QM$ in $\QLsubset$ has types $\syn{X}$, $\syn{Y}$ from $\SigQH$, but may be identified with a query $\QL := \syn{\HtoL(\QM)} \colon \HtoL(\syn{X}) \to \HtoL(\syn{Y})$ in $\SigQL$, thought of as one on which $\LtoHquery$ is defined with image \rlb{$\queryLtoH(\QL) := \widetilde{\queryLtoH}(\QM)$}\ in $\SigQH$. 

\begin{proof}
Suppose we are given \qdown{}s as above. Since the types in $\SigQH$ are the same as those in $\QLsubset$, we can view $(\HtoL, \LtoH)$ also as a type alignment between $\modelML$ and $\modelMH$. We now extend \rlb{$\widetilde{\queryLtoH}$}\ to $\sigQL$ as outlined in \eqref{eq:query-alignment}. For $\syn{Q} \in \SigQL$ set $\LtoHquery(\syn{Q})$ as defined iff $\syn{Q} = \HtoL(\syn{Q}')$ for some $\syn{Q}' \in \QLsubset$. In this case define \rlb{$\LtoHquery(\syn{Q}) := \widetilde{\queryLtoH}(\syn{Q}')$}. Then since \rlb{$\widetilde{\queryLtoH}$}\ is surjective from $\QLsubset$, \rlb{so is $\LtoHquery$ from $\SigQL$}\ and one may verify that whenever it maps a query to one of type $\syn{X} \to \syn{Y}$ that query has type $\HtoL(\syn{X}) \to \HtoL(\syn{Y})$. 

It remains to check that consistency holds. Suppose that $\syn{Q}_H = \LtoHquery(\syn{Q})$ for $\syn{Q} \in \sigQL$. Then $\syn{Q} = \HtoL(\syn{Q}')$ for $\syn{Q}' \in \QLsubset$. Then consistency holds since:
\[ 
Q_H \circ \LtoH = \rlb{\widetilde{\queryLtoH}(Q')} \circ \LtoH = Q' \circ \LtoH = \LtoH \circ \HtoL(Q') = \LtoH \circ Q
\]
In the second equality we used that $(\LtoH, \id{})$ is a \rlb{strict \qdown{}}.  
In the third equality we used that $(\HtoL, \LtoH)$ is a \qdown{}.

Conversely, given any \qup{}, define $\QLsubset$ to have the same types as $\sigQH$ and a query $\syn{Q}^* \colon \syn{X} \to \syn{Y}$ for each query $\syn{Q} \colon \HtoL(\syn{X}) \to \HtoL(\syn{Y})$ on which $\LtoHquery$ is defined; \rlb{and set $\sem{\syn{Q}^*}_{\model{M}} := \semMH{\LtoHquery(\syn{Q})}$ and $\sem{\syn{X}}_{\model{M}} := \semMH{\syn{X}}$ for all types $\syn{X}$}. 
Define $\HtoL(\syn{Q}^*) := \syn{Q}$ and \rlb{define $\widetilde{\queryLtoH}$ by $\widetilde{\queryLtoH}(\syn{Q}^*) := \LtoHquery(\syn{Q})$ and as the identity on types}. Then by definition $\HtoL$ is injective on types, and \rlb{$\widetilde{\queryLtoH}$ is surjective}. 
The fact that \rlb{$\sem{\syn{Q}^*}_{\model{M}} := \semMH{\LtoHquery(\syn{Q})}$ ensures that $(\widetilde{\queryLtoH}, \id{})$}\ forms a strict \qdown{}. Finally for $(\HtoL, \LtoH)$ to form a \qdown{} we must check consistency, which holds since we always have $\rlb{\sem{\syn{Q}^*}_{\model{M}}} \circ \LtoH = \semMH{\LtoHquery(\syn{Q})} \circ \LtoH = \LtoH \circ \HtoL(Q)$. 
\end{proof}

\subsection{Characterising abstractions}

We can characterise \rl{\restrictedCCA s}\ and \CF{} abstractions more categorically, as follows. 

\begin{lemma} \label{lem:weak-CA-chara}
For any causal model \rl{over $\modelV$}\ with structure category $\strucS$, the functor \rlb{$\qabsMfunc \colon \cat{\WOpenqueries(\syn{V})} \to \cat{\strucS}$}\ sending each query to its diagram factorises over that \rlb{$\qabsMfunc \colon \cat{\Openqueries(\syn{V})} \to \cat{\strucS}$ for Do-queries, via a functor $\qabsMfunc \colon \cat{\WOpenqueries(\syn{V})} \to \cat{\Openqueries(\syn{V})}$}.\footnote{Note that each generator of $\WOpenqueries(\syn{V})$ can indeed be mapped to a diagram in the free category $\cat{\Openqueries(\syn{V})}$, and so such a functor exists.} 
 Then \rl{an \restrictedCCA}\ is equivalent to a 
\qdown{} with $\HtoL(\VinH) = \VinL$ and such that there exists a (unique) query map $\HtoL'$ for which the following diagram commutes. 
\begin{equation} \label{eq:II-diagram}
\input{II-diagram.tikz}
\end{equation}
\end{lemma}
\begin{proof}
We check that \eqref{eq:weakCA-query-map} is equivalent to the existence of $\HtoL'$ in \eqref{eq:II-diagram}. Since $\qabsMHfunc, \qabsMLfunc$ preserve all variables, $\HtoL'(\syn{V}) = \HtoL(\syn{V})$ for all $\syn{V} \in \VH$. Just as in the proof of Theorem \ref{Thm:CCA-as-refinement} any query map $\HtoL'$ preserving inputs must send each $\openqueryshort{}{}{S}$ to $\openqueryshort{}{}{\HtoL'(S)}$, as the latter is the only query of correct inputs and outputs (i.e. be defined by \eqref{eq:query-map-CA} replacing $\HtoL$ by $\HtoL'$). Now, with this unique map $\HtoL'$, one may see that the diagram commutes iff \eqref{eq:weakCA-query-map} holds. 
\end{proof}

\begin{lemma} \label{lem:CF-characterisation}
 For any FCM $\modelM$ over \rlb{$\syn{\FMCVen}$, $\syn{U}$}\ define a set of abstract queries \rlb{$\Openqueries(\syn{\FMCVen}, \syn{U})_\lambda$}\ given by the disjoint union of \rlb{$\Openqueries(\syn{\FMCVen} \cup \{\syn{U}\})$}, with a single type $\syn{U}$ treated as the inputs, along with a single extra query $\syn{\lambda}$ given by a state of $\syn{U}$. The functor \rlb{$\qabsMfunc \colon \absLthree(\syn{\FMCVen}) \to \strucS$}\ sending each query to its diagram factorises over \rl{$\qabsMfunc \colon \Openqueries(\syn{\FMCVen}, \syn{U})_\lambda \to \strucS$}, via a functor we denote by $\qabsMfunc$. 
A \CF{} abstraction is precisely a \qdown{} 
\rlb{$\qdownabs{\HtoL}{\LtoH}{\SigModel{\absLthree(\FMCVen_L)}{\modelML}}{\SigModel{\absLthree(\FMCVen_H)}{\modelMH}}$}\  
such that there exists a (unique) query map $\HtoL'$ for which the following commutes: 
\begin{equation} \label{eq:CFabs}
\rlb{\input{CFabs-diagram.tikz}}
\end{equation}
and $\HtoL'(\syn{U}_H) = \syn{U}_L$. 
\end{lemma}
\begin{proof} 
We check that this is equivalent to the \qdown{} taking the \rlc{form \eqref{eq:CF-abs-query-map}. 
On}\ variables \rlb{$\syn{X} \in \FMCVen_H$}\ we have $\HtoL'(\syn{X}) = \HtoL(\syn{X})$ by construction. Just as in the proof of Theorem \ref{Thm:CCA-as-refinement} in order to preserve types any query map $\HtoL'$ preserving inputs must send each $\openqueryshort{}{}{S}$ to $\openqueryshort{}{}{\HtoL(S)}$, and by construction must send $\lambda$ to $\lambda$. So $\HtoL'$ is uniquely characterised by $\HtoL$. Then one may see that \eqref{eq:CFabs} commutes iff \eqref{eq:CF-abs-query-map} holds. Indeed, writing out the diagrams corresponding to each query \rlb{$\Lthreequery{Y_1|_{\CFOpenSet_1},\dots,Y_m|_{\CFOpenSet_m}}$}\ and applying $\HtoL'$ means applying $\HtoL$ to each opening box for $\modelF$, which precisely yields the diagram corresponding to \rlb{$\Lthreequery{\HtoL(Y_1)|_{\HtoL(\CFOpenSet_1)},\dots,\HtoL(Y_m)|_{\HtoL(\CFOpenSet_m)}}$}.
\end{proof}

\subsection{The category $\Model(\catC)$}

\rl{From a categorical perspective there is a succinct way to denote and capture this work's main definitions, which however was omitted in the earlier sections for reasons of simplicity. 
We now briefly present this approach. 
For what follows fix a d- or cd-category $\catC$ with all functors and natural transformations of the corresponding, appropriate kind.}

\begin{definition}
We write $\Model(\catC)$ for the category whose objects \rl{$\Gensig^{\modelM}$}\ are pairs of a signature \rl{$\Gensig$ along with a model $\modelM$ of $\Gensig$}\ in $\catC$. By a morphism $(F,\LtoH) \colon \Gensig^\modelM \to \Gensig'^{\modelM'}$ we mean a functor \rl{$F \colon \Gensigcat \to \Gensigcat'$}\ along with an epic natural transformation $\LtoH$ as below. 
\[
\input{morphism-sigs.tikz}
\]
Composition is given by $(G, \tau') \circ (F, \tau) = (G \circ F, \tau \circ \tau'_F)$ with $\id{\SigC} = {\id{}, \id{}}$.  
\end{definition}

For short we often denote such a morphism by $F$ in place of $(F,\LtoH^F)$. We say a morphism $F$ is: 
\begin{itemize}
\item 
\emph{strict} when $\tau=\id{}$, so that $\sem{-}' \circ F = \sem{-}$. It is easy to then see that strict morphisms are closed under composition. 

\item 
 a \emph{map} when $F$ sends generators to generators, and then \emph{surjective} when every generator of \rl{$\Gensigcat'$}\ is of the form $F(\syn{g})$ for some generator $\syn{g}$ of \rl{$\Gensigcat$}. 
\end{itemize}

Any compositional model $\modelM$ of structure $\SigS$ and queries $\SigQ$ in $\catC$ amounts to two distinct objects $\SigM$ and $\QueryM$ in $\Model(\catC)$. Then:
\begin{itemize}
\item 
Queries $\SigQ$ are abstract just when there is a strict morphism $\qabs{-} \colon \QueryM \to \SigM$. 

\item 
A \qdown{} $\modelML \to \modelMH$ is precisely a map $\HtoL \colon \QueryMH\to \QueryML$. 

\item 
A \qdown{} is \rl{\strucdownprop}\ when there is a morphism $\HtoLS$ such that the following commutes: 
\[
\input{struc-level-simple.tikz}
\]
The \rl{\strucdownprop}\ \qdown{} case is strict when $\HtoLS$ (and hence $\HtoL$) is strict. 
\end{itemize} 

\begin{example}
There is a strict map $\IOSig(\varV) \to \Openqueries(\varV)$ which is injective on variables. A constructive causal abstraction is precisely a morphism $\HtoL$ \rlb{such that the diagram below commutes}.  
\[
%\tikzfig{io-factor}
\input{io-factor_b.tikz}
\]
\end{example}

We can treat \qup{}s as follows. Define a \emph{transformation} to be a pair of maps: 
\begin{equation} \label{eq:exact-trans-again}
\rlb{\input{partial-map-sigs-2-flipped.tikz}}
\end{equation}
such that $\queryLtoH$ is strict, surjective, and the identity on variables. Then:
\begin{itemize}
\item  \rlc{An}\ \qup{} is the special case where $\HtoL$ is injective on types and generators. 
\item 
A \qdown{} is the special case where $\queryLtoH$ is a bijection on generators, so we may identify \rlb{$\QLsubset^{\model{M}} = \QHC$}. Thus it is simply a map $\HtoL \colon \QHC \to \QLC$. 
\end{itemize}

\rl{It is especially when working with relations between given queries and models that are more complex than this work focused on that}\ it may be useful to work in the category $\Model(\catC)$ and make use of the above terminology. We leave the full exploration of this category and its properties to future work.

\paragraph{\rl{A yet more general, functorial view of abstraction.}}
From a categorical perspective, perhaps the most general notion of model, and abstraction between \rl{models}, is as a collection of categories (not necessarily with given signatures), functors and a natural transformation: 
\[
\input{general-square.tikz}
\]
for which the pair $(\HtoL, \queryLtoH)$ form a \emph{relation} i.e.~such that $\langle \HtoL, \queryLtoH \rangle \colon \strucQ_M \to \strucQ_H \times \strucQ_L$ is faithful. This would mean that we can understand the abstraction as a structured manner of relating high-level and low-level `queries'. That is, for those morphisms (`queries') $(Q_H, Q_L)$ contained in the relation, we obtain the canonical consistency equation \eqref{eq:consistency}. Allowing for arbitrary relations of this form would yield more general ways of relating high and low level queries in a structured manner than we have considered here (not necessarily given by a partial function), but would be interesting to explore in future work.

\end{document}